\documentclass[12pt]{article}

\usepackage{amsmath,amssymb,amsthm}
\usepackage{mathtools}
\usepackage[margin=1in]{geometry}
\usepackage{enumitem}
\usepackage{booktabs}
\usepackage{microtype}
\usepackage[hidelinks]{hyperref}
\usepackage{xcolor}
\usepackage{tikz}
\usepackage{graphicx}
\usepackage{pdflscape}
\usepackage{caption}
\usepackage{mathrsfs}

\newtheorem{theorem}{Theorem}[section]
\newtheorem{lemma}[theorem]{Lemma}
\newtheorem{remark}[theorem]{Remark}
\newtheorem{proposition}[theorem]{Proposition}
\newtheorem{corollary}[theorem]{Corollary}
\theoremstyle{definition}
\newtheorem{definition}[theorem]{Definition}

\newcommand{\curve}{\mathscr{C}}
\newcommand{\FF}{\mathbb{F}}

\newcommand{\NN}{\mathbb{N}}
\newcommand{\calL}{\mathcal{L}}
\newcommand{\calF}{\mathcal{F}}
\newcommand{\PP}{\mathbb{P}}
\newcommand{\ev}{\operatorname{ev}}
\newcommand{\divv}{\operatorname{div}}
\newcommand{\ord}{\operatorname{ord}}
\newcommand{\supp}{\operatorname{supp}}
\newcommand{\wt}{\operatorname{wt}}
\newcommand{\cosu}{\operatorname{cosupp}}
\newcommand{\cowt}{\operatorname{cowt}}
\newcommand{\van}{Z_{X}}

\title{Generalized Hamming Weights of\\ AJ--Gorenstein One-Point Codes}

\author{
Eliseo Sarmiento Rosales\thanks{Escuela Superior de F\'isica y
Matem\'aticas, Instituto Polit\'ecnico Nacional, M\'exico.}
\and
Jos\'e Alberto Guzm\'an-Vega\footnotemark[1]
\and
Juan Carlos Jim\'enez-Cervantes\footnotemark[1]
}

\date{}

\begin{document}

\maketitle

\begin{abstract}
We study generalized Hamming weights along the one-point code flag of
an AJ--Gorenstein curve.  We organize these weights in a graded array,
the zero diagram, whose entries are generalized coweights: the
largest numbers of evaluation points on which subcodes of prescribed
dimensions vanish simultaneously.  Twisted and Wei duality show that
each row of the zero diagram determines both the generalized Hamming
weights of a lower block of short codes and the missing weights of a
reflected upper block of long codes in the GHW diagram of the complete
flag.  Our main quantitative result is a uniform coverage
theorem.  For an AJ--Gorenstein curve of genus \(g\) and evaluation
length \(n>2g\), the proportion of generalized-weight positions
determined exactly throughout the complete flag
satisfies
\(\operatorname{Cov}_{\mathrm{full}}
\ge \frac{n(n-1)+4g}{n(n+2g-1)}>\frac12\).

Thus more than half of all generalized-weight positions in the
complete flag are determined uniformly.  For the smallest Suzuki
curve, the general and Castle-specific mechanisms together determine
\(2{,}280\) of the \(2{,}912\) positions, giving an exact coverage of
\(78.30\%\).
\end{abstract}

\section{Introduction}\label{sec:introduction}

The generalized Hamming weights (\emph{GHWs}) of a linear code refine its minimum
distance by measuring the smallest support of a subcode of each possible
dimension.  Introduced by Wei~\cite{Wei1991}, they form the \emph{weight
hierarchy} of the code and carry substantially more information than the
first weight alone.  Complete hierarchies, however, are known for
comparatively few families.  For algebraic-geometric codes the problem is
especially delicate: Riemann--Roch theory gives robust bounds, but exact
higher weights depend on the rational-point geometry of linear systems on
the underlying curve.

One-point algebraic-geometric codes come naturally in a flag
\[
C_X(0)\subseteq C_X(Q)\subseteq C_X(2Q)\subseteq\cdots,
\]
where the inclusions are strict precisely at the nongaps of the
Weierstrass semigroup at~$Q$.  Castle curves provide a particularly
structured instance of this construction.  Their Weierstrass semigroup
is symmetric, their number of rational points attains the relevant
Lewittes bound, and the associated one-point flag has a twisted
self-duality; see Munuera, Sep\'ulveda, and
Torres~\cite{MunueraTorres2009}.  These properties make Castle curves a
natural class in which to seek a unified description of generalized
Hamming weights rather than a collection of isolated minimum-distance
computations.  In fact only two of these properties are ever used.  We isolate them
under the name \emph{AJ--Gorenstein} (Definition~\ref{def:aj-gorenstein}):
the evaluation divisor is linearly equivalent to a multiple of~$Q$ (an
Abel--Jacobi condition), and $H(Q)$ is symmetric (a Gorenstein
condition).  Every Castle triple is
AJ--Gorenstein (Proposition~\ref{prop:castle-is-aj-gorenstein}), and all
general results below are proved at this level of generality; Castle
curves reappear in Section~\ref{sec:castle-specialization} as the source
of the extra structure needed for explicit computations.

The main difficulty is that a flag contains many codes, each with its own
weight hierarchy.  Listing these hierarchies separately hides the
relations between them.  We instead organize the relevant information in
a graded array.  If
\[
H(Q)=\{0=\lambda_0<\lambda_1<\lambda_2<\cdots\}
\]
is the sequence of nongaps and
\(C_j\coloneqq C_X(\lambda_jQ)\), then the entry \(M_{j,s}\) is the largest
number of evaluation points on which an \(s\)-dimensional subcode of
\(C_{j}\) vanishes simultaneously.  The row
\[
\mathcal M_j:=(M_{j,1},\ldots,M_{j, k_{j}})
\]
is a profile of a small linear series.  We  call the
resulting graded array the \emph{zero diagram} (Figure~\ref{fig:zero_diagram}, Section~\ref{sec:injective-diagram}); we show in
Section~\ref{sec:postcanonical-region} how it determines the
\emph{GHW diagram} of actual generalized Hamming weights
(Definition~\ref{def:ghw-diagram}).

Our first contribution is an upper--lower block correspondence.  The
\emph{lower block} indexed by~$j$ is the plateau of short codes between
\(\lambda_j\) and \(\lambda_{j+1}\).  Its generalized Hamming weights
are directly \(n-M_{j,s}\).  The reflected plateau near the opposite end
of the flag is the \emph{upper block} indexed by~$j$.  Twisted duality of
the one-point flag, followed by Wei duality, shows that its missing weights
are exactly \(M_{j,s}+1\).  Thus one maximum vanishing number computation fills two
regions of the diagram simultaneously.

We organize the paper around this reflection.  The evaluation map \(\ev_X\) is injective on \(\calL(hQ)\) exactly
for \(h<n\); we call this region the \emph{injective zero diagram}
(Section~\ref{sec:injective-diagram}), and its twisted-duality
extension to the full range \(0\le h\le N\) the \emph{full zero
diagram} (Section~\ref{sec:full-diagram}).  In the postcanonical
range \(2g\le\lambda_j<n\) (Section~\ref{sec:postcanonical-region}),
the upper block of a postcanonical row folds back into another row of
the injective zero diagram itself.

A second structural layer is the zero profile
\(
m_G(e)=\max_{|E|=e}\ell(G-D_E)
\)
(\(D_E\) the sum of the points in \(E\subseteq X\)): linear-code
duality gives a pointwise kernel identity on complementary coordinate
sets, and Riemann--Roch the corresponding identity between \(m_G(e)\)
and \(m_{K+D_X-G}(n-e)\); we retain this divisor form only where its
residual information is essential, notably in the postcanonical
profile.  The injective zero diagram itself is filled using Cartesian
embeddings, gonality bounds, full-support packings, built by
complementation and disjoint union of full supports, i.e.\
rational-point divisors linearly equivalent to a multiple of~\(Q\),
and rank propagation; in the postcanonical region these combine to
give an exact tail, a full/no-full boundary, residual exactness,
full-level staircases, and a uniform density theorem.

The Suzuki curves are used to illustrate the postcanonical
density theorem numerically. Detailed computations, including a
theorem-by-theorem coverage map for the smallest Suzuki curve, are
collected in Appendix~\ref{app:numerical-examples}.  The Suzuki family is a Castle curve family with highly structured one-point codes, but its
Weierstrass semigroup and low-degree embeddings lie outside the simplest
two-generator model. Earlier work determines the Hermitian hierarchy through its special
semigroup structure~\cite{BarberoMunuera2000}, the second weight of
Castle codes with two-generator Weierstrass
semigroups~\cite{OlayaLeonGranados2015}, and, more recently, the
hierarchy of decreasing norm-trace codes through footprint methods~\cite{CampsMoreno2025}.  General order bounds for higher
weights of one-point and multipoint algebraic-geometric codes are
available in \cite{GeilMunueraRuanoTorres2011,BrasAmorosLeeVicoOton2014,Lee2015},
but they need not determine the exact maxima considered here.
For Suzuki codes, the existing literature includes constructions and
distance bounds~\cite{HansenStichtenoth1990,Matthews2004,DuursmaPark2008,
DuursmaKirov2009,FarranEtAl2018,MontanucciTimpanellaZini2018};
the present approach instead targets complete blocks and exact families
of generalized weights.   The GHW diagram itself is
compatible with isometry-dual flags~\cite{BrasAmorosDuursmaHong2020},
and its zero viewpoint is close to earlier geometric approaches to
higher weights~\cite{Munuera1994,YangKumarStichtenoth1994}. Our method thus provides large exact regions of the zero diagram from
the general geometric framework alone.  For the smallest Suzuki curve,
combining these mechanisms determines \(2{,}280\) of the \(2{,}912\)
positions in the GHW diagram, or \(78.30\%\) (see
Table~\ref{tab:suzuki-q8-theorem} for the theorem-by-theorem
breakdown).  More generally, for sequences of Castle curves with
\(g/n\to0\), the proportion of exact cells that the degree-driven
region contributes to the injective zero diagram tends to one.

The main contributions can therefore be summarized as follows.
\begin{enumerate}[label=\textup{(\roman*)}]
\item We introduce the GHW diagram and prove
(Theorem~\ref{thm:block-dictionary}) that each zero row determines
both a lower block and, through twisted duality
(Proposition~\ref{prop:twisted-duality}), its reflected upper block;
the generalized-weight problem for the full flag \(0\le h\le N\) is
thereby reduced to the injective zero diagram \(h<n\).
\item We establish the Postcanonical Profile
(Theorem~\ref{thm:postcanonical-profile}): an exact tail, the
full/no-full canonical boundary, a residual--gonality exactness
criterion, and full-level staircases.  If \(n>2g\), its degree-driven
region alone contains at least \(\binom{n-2g+1}{2}\) exact cells of
the injective zero diagram.
\item Combining this tail with twisted duality, we prove
(Corollary~\ref{cor:full-diagram-coverage}) that every AJ--Gorenstein
triple with \(n>2g\) satisfies
\[
\operatorname{Cov}_{\mathrm{full}}
\ge
\frac{n(n-1)+4g}{n(n+2g-1)}
>
\frac12,
\]
so more than half of all generalized-weight positions of the codes
\(C_X(hQ)\), \(0\le h\le N\), are determined uniformly.
\end{enumerate}

\section{Preliminaries}\label{sec:preliminaries}

\subsection{Linear codes and generalized Hamming weights}
\label{subsec:linear-codes}

\begin{definition}
    Let \(\FF_q\) be the finite field of \(q\) elements, for a vector \(c\in\FF_q^n\), its \emph{support} is the index set of its nonzero entries,
    \[\supp(c):=\{i:c_{i}\neq 0\}.\]
    The size of this set is the \emph{Hamming weight} of \(c\), denoted
    \(\wt(c):=\lvert\supp(c)\rvert.\)
\end{definition}

\begin{definition}
    A \emph{linear code} (or simply a \emph{code}) \(C\) of length \(n\) is a subspace of \(\FF_q^n\).\linebreak Its \emph{support} is the union of the supports of its codewords,
    \[
        \supp(C) :=
        \bigcup_{c\in C}\supp(c)=
        \{i : \exists c\in C\text{ with }c_{i}\neq 0\}.
    \]
    If \(C\) is of dimension \(k\), its \emph{\(r\)-th generalized Hamming weight} (\(1\le r\le k\)) is given by
    \[
        d_r(C) :=
        \min\{\lvert\supp(D)\rvert : D \subseteq C, \ \dim D=r\}.
    \]
    The strictly increasing sequence \(d_{1}(C)<\cdots<d_{k}(C)\) constitutes the \emph{weight hierarchy} of \(C\). 
\end{definition}

\begin{definition}
    Under the dot product in \(\FF_q^n\), the \emph{dual code} of \(C\) is its orthogonal subspace,
    \[C^\perp:=\{x\in\FF_q^n:c\cdot x=0\text{ for all }c\in C\}.\]
    For vectors \(v, c\in\FF_q^n\), their entry-wise product is
    \(v\star c:= (v_{1}c_{1}, \cdots, v_{n}c_{n})\); if \(v\in (\FF_q^*)^n\), scaling \(C^\perp\) by the entry-wise inverse \(v^{-1}\) yields the \emph{scaled dual code},
    \[C^{\perp_{v}} := v^{-1} \star C^\perp.\]
\end{definition}
Since \(C^{\perp_{v}}\) and \(C^\perp\) are monomially equivalent, their generalized Hamming weights coincide.

\begin{theorem}[Wei Duality \cite{Wei1991}]\label{thm:wei}
    Let \(C\) be a code of length \(n\) and dimension \(k\). Then,\linebreak the weight hierarchies of \(C\) and \(C^\perp\) are related by
    \[
        \{d_{r}(C):1\le r\le k\}=[1, n]\setminus\{n+1-d_{s}(C^\perp):1\le s\le n-k\}.
    \]
\end{theorem}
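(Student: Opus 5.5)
We prove Wei duality by comparing the two hierarchies through the dimensions of subcodes that vanish on prescribed coordinate sets. For \(J\subseteq[1,n]\) set \(C(J):=\{c\in C:\supp(c)\subseteq J\}\). Its orthogonal complement in \(\FF_q^J\) is the projection of \(C^\perp\) onto \(J\). The kernel of that projection is \(C^\perp([1,n]\setminus J)\). Counting dimensions therefore gives the kernel identity
\[
\dim C(J)=|J|-(n-k)+\dim C^\perp([1,n]\setminus J).
\]
Next we characterize the hierarchy by \(d_r(C)=\min\{|J|:\dim C(J)\ge r\}\). This holds because a subcode \(D\) of dimension \(r\) lies in \(C(\supp D)\), and conversely \(C(J)\) contains an \(r\)-dimensional subcode whose support has size at most \(|J|\).

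The key step is to show that the hierarchy of \(C^\perp\) is the complement of the reflected hierarchy of \(C\). It suffices to prove that no value \(n+1-d_s(C^\perp)\) equals some \(d_r(C)\). Once that is known, both sets lie in \([1,n]\), have sizes \(k\) and \(n-k\) (strict monotonicity makes the entries distinct), and are disjoint, so they partition \([1,n]\) and the stated equality follows.

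Suppose \(d_r(C)=t\) and \(d_s(C^\perp)=n+1-t\). Choose \(J\) with \(|J|=t\) and \(\dim C(J)\ge r\). By strict monotonicity and the characterization, \(\dim C(J')\le r-1\) for every \(J'\) with \(|J'|=t-1\). Removing one coordinate from \(J\) drops \(\dim C(J)\) by at most one, so \(\dim C(J)=r\) exactly, and \(\dim C(J\setminus\{i\})=r-1\) for every \(i\in J\).

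Set \(J^c=[1,n]\setminus J\), so \(|J^c|=n-t\). For \(i\in J\) the kernel identity applied to \(J\setminus\{i\}\) gives
\[
\dim C^\perp(J^c\cup\{i\})=r-1-(t-1)+(n-k)=\dim C(J)-t+(n-k)=\dim C^\perp(J^c).
\]
Every set of size \(n+1-t\) containing \(J^c\) is of the form \(J^c\cup\{i\}\), and on these sets \(C^\perp\) gains no dimension. Relative to the subcodes supported on sets of size \(n-t\) that contain \(J^c\), the weight \(n+1-t\) is therefore not attained. However, \(d_s(C^\perp)\) is defined by a minimum over all sets, not only those containing \(J^c\), so this does not yet give a contradiction.

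The main obstacle is closing this gap, and the cleanest fix is to use the numerical function \(f(t)=\max_{|J|=t}\dim C(J)\). This function is nondecreasing, increases by at most one at each step, and the values of \(t\) where it increases are exactly the \(d_r(C)\). The kernel identity gives
\[
f_{C^\perp}(n-t)=\max_{|J|=t}\bigl(\dim C(J)\bigr)-t+(n-k),
\]
which for the maximizing sets is \(f_C(t)-t+n-k\). Hence
\[
f_{C^\perp}(n-t)-f_{C^\perp}(n-t-1)=1-\bigl(f_C(t+1)-f_C(t)\bigr).
\]
Thus \(C^\perp\) jumps at \(n-t\) exactly when \(C\) does not jump at \(t+1\). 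This gives \(\{d_s(C^\perp)\}=\{n+1-u: u\notin\{d_r(C)\}\}\), which is the claimed complement.

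I expect the delicate point to be the max/min exchange. The identity \(f_{C^\perp}(n-t)=f_C(t)-t+n-k\) holds because the kernel identity is an equality for every individual \(J\), so maximizing both sides over \(|J|=t\) commutes with the affine shift. I would write that out carefully.
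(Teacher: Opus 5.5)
Your final argument is correct and is essentially the paper's own. The paper only cites Wei for Theorem~\ref{thm:wei}, but it reproves the statement in coweight form (Lemma~\ref{lem:coweight-threshold} and Theorem~\ref{thm:zero-profile-calculus}) by exactly your route: the pointwise kernel identity on complementary coordinate sets, maximized to \(f_C(e)-f_{C^{\perp}}(n-e)=k-e\), followed by a comparison of unit jumps. Your support-indexed \(f_C(t)\) is the paper's cosupport-indexed \(f_C(n-t)\). You can simply delete the abandoned disjointness-plus-counting detour, since the jump comparison yields the equality directly and even recovers strict monotonicity of the hierarchies as a by-product.
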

The strictly decreasing sequence \(n+1-d_{1}(C^\perp)>\cdots>n+1-d_{n-k}(C^\perp)\) constitutes the \emph{missing weights} of \(C\).

\subsection{One-point codes and Weierstrass semigroups}
\label{subsec:one-point-codes}
Let \(\curve/\FF_q\) be a smooth, projective, geometrically irreducible curve of genus~\(g\ge1\).
Fix a rational point \(Q\in\curve(\FF_q)\) and an evaluation set \(X=\{P_{1}, \cdots, P_{n}\}\subseteq\curve(\FF_q)\setminus\{Q\}\), which defines the effective divisor \(D_{X}:=\sum_{P\in X}P\) of degree \(n\).
\begin{definition}
    For a divisor \(G\), its \emph{Riemann--Roch space} is the vector space given by
    \[
        \calL(G) := \{f\in\FF_q(\curve)^*: \divv(f)+G\ge0\}\cup\{0\},
    \]
    where we denote \(\ell(G) := \dim_{\FF_q}\calL(G)\). Note that \(\calL(G)=\{0\}\) whenever \(\deg G<0\).
\end{definition}
Throughout, \(K\) denotes a fixed canonical divisor of \(\curve\), so that
\(\deg K=2g-2\) and \(\ell(K)=g\), and we use the Riemann--Roch theorem in
the form
\[
    \ell(G)-\ell(K-G)=\deg G+1-g ;
\]
see \cite[Ch.~1]{Stichtenoth2009} for this and for the semigroup facts
recalled below.
\begin{definition}
    For \(h\in\NN_0\), the \emph{evaluation map} on the space \(\calL(hQ)\) is
    \[\ev_{X} \colon \calL(hQ) \to \FF_q^n, \quad f \mapsto (f(P_1),\ldots,f(P_n)),\]
    which is well defined as \(Q\notin X\).
    The \emph{one-point evaluation code} of order~\(h\) is the image space
    \[C_X(hQ) := \ev_X(\calL(hQ))\subseteq\FF_q^n.\]
\end{definition}
\begin{definition}
    A \emph{numerical semigroup} \(H\) is defined as a co-finite sub-semigroup of \((\NN_0,+)\).
    The elements of $\mathbb{N}_0 \setminus H$ are its \emph{gaps}, and the largest such gap is its \emph{Frobenius number} $F_H$.
    The semigroup \(H\) is \emph{symmetric} if
    \[
        a\in H \quad\Longleftrightarrow\quad F_{H}-a\notin H .
    \]
\end{definition}
\begin{definition}
    The \emph{Weierstrass semigroup} at~\(Q\) is the set of non-negative integers
    \[H(Q):=\{h\in\NN_0:\ell(hQ)>\ell((h-1)Q)\}= \{0=\lambda_0<\lambda_1<\lambda_2<\cdots\};\]
    equivalently, \(h\in H(Q)\) if and only if some function \(f\in\FF_q(\curve)^*\) has pole divisor exactly \(hQ\).
\end{definition}
Throughout this paper, we shall rely on the classical identities
\[
    \ell(hQ)=\lvert H(Q)\cap[0,h]\rvert, \qquad
    \lvert\NN_0\setminus H(Q)\rvert=g,
\]
the first one being valid for every \(h\in\mathbb{Z}\) under the convention that \([0,h]=\emptyset\) for \(h<0\).\linebreak
As for the second, \(H(Q)\) has \(g\) gaps, hence it is symmetric precisely when \(F_{H(Q)}=2g-1\); for Weierstrass semigroups this holds exactly when \(Q\) is \emph{subcanonical}, that is, \(K\sim(2g-2)Q\). In that case \(2g-1\) is a gap while every integer \(\lambda\ge2g\) is not, so
\[
    2g-2<\lambda \quad\Longleftrightarrow\quad 2g\le\lambda \qquad\text{for } \lambda\in H(Q).
\]
In particular, the nongaps from $2g$ onwards are consecutive, so
that for a symmetric \(H(Q)\)
\begin{equation}\label{eq:postcanonical-index}
    \lambda_j=g+j \quad\text{for every } j\ge g,
    \qquad\text{and}\qquad
    \lambda_j\ge2g \iff j\ge g .
\end{equation}
Indeed, for \(h\ge2g-1\) the identity \(\ell(hQ)=\lvert H(Q)\cap[0,h]\rvert\) gives \(\ell(hQ)=h+1-g\), so a nongap \(h=g+j\) is the \((j+1)\)-st element of \(H(Q)\); the excluded index \(j=g-1\) corresponds to the gap \(2g-1\), and \(\lambda_{g-1}=2g-2\).

\subsection{AJ--Gorenstein triples}
\label{subsec:aj-gorenstein}

The block correspondence only needs two structural hypotheses, which we
isolate from the curve-specific applications.

\begin{definition}\label{def:aj-gorenstein}
The triple \((\curve,Q,X)\) is \emph{AJ--Gorenstein} if
    \begin{enumerate}[label=\textup{(\roman*)}]
        \item\textup{(Abel--Jacobi)} there exists \(\calF\in\FF_q(\curve)^*\) such that \(\divv(\calF)=D_X-nQ\);%
            \footnote{That is, \(D_X\) lies in the fibre over the origin of the Abel--Jacobi map \(\operatorname{Sym}^n\curve\to\operatorname{Pic}^0(\curve)\), \(D\mapsto[D-nQ]\).}
                \item\textup{(Gorenstein)} the semigroup \(H(Q)\) is symmetric.%
            \footnote{Such semigroups are exactly those whose semigroup ring is Gorenstein \cite{Kun70}.}
    \end{enumerate}
\end{definition}
This is standard shorthand for the divisor criterion underlying
isometry-dual one-point flags~\cite{GeilMunueraRuanoTorres2011,BrasAmorosDuursmaHong2020}.
Condition~(i) says that \(D_X\sim nQ\), with \(n\in H(Q)\); condition~(ii) gives \(K\sim(2g-2)Q\); together, they yield \(NQ-D_X\sim K\), where \(N:=n+2g-2\).  This single integer \(N\) delimits the full zero diagram; the injective zero diagram of Section~\ref{sec:injective-diagram} lives on \([0,n)\), and the twisted duality of Section~\ref{sec:full-diagram} reflects it onto all of \([0,N]\).

\begin{lemma}\label{lem:window-formula}
Let \((\curve,Q,X)\) be AJ--Gorenstein.  For \(0\le h\le N\):
\begin{enumerate}[label=\textup{(\roman*)}]
\item if \(h<n\), then \(\ev_X:\calL(hQ)\to\FF_q^n\) is injective;
\item
\(
\dim C_X(hQ)=\#\bigl(H(Q)\cap(h-n,h]\bigr).
\)
\end{enumerate}
\end{lemma}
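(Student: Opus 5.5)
The plan is to identify the kernel of \(\ev_X\) on \(\calL(hQ)\) with a Riemann--Roch space and then count its dimension with the semigroup identity \(\ell(mQ)=\lvert H(Q)\cap[0,m]\rvert\). A function \(f\in\calL(hQ)\) vanishing at every \(P\in X\) satisfies \(\divv(f)+hQ-D_X\ge0\). Since the points of \(X\) are distinct from \(Q\), the converse also holds, so
\[
\ker\bigl(\ev_X|_{\calL(hQ)}\bigr)=\calL(hQ-D_X).
\]
By the Abel--Jacobi condition (i) of Definition~\ref{def:aj-gorenstein} we have \(D_X=nQ+\divv(\calF)\). Multiplication by \(\calF\) then gives an isomorphism \(\calL((h-n)Q)\to\calL(hQ-D_X)\), \(u\mapsto u\calF\). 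Concretely, \(\divv(u\calF)+hQ-D_X=\divv(u)+(h-n)Q\), so \(u\calF\in\calL(hQ-D_X)\) exactly when \(u\in\calL((h-n)Q)\). Hence \(\dim\ker=\ell((h-n)Q)\).

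For part (i), when \(h<n\) the divisor \((h-n)Q\) has negative degree. So \(\calL((h-n)Q)=\{0\}\), the kernel is trivial, and \(\ev_X\) is injective.

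For part (ii), rank--nullity gives
\[
\dim C_X(hQ)=\ell(hQ)-\ell((h-n)Q).
\]
Apply the identity \(\ell(mQ)=\lvert H(Q)\cap[0,m]\rvert\), which is valid for all integers \(m\) with the stated convention. This turns the right side into \(\lvert H(Q)\cap[0,h]\rvert-\lvert H(Q)\cap[0,h-n]\rvert\), which equals \(\#\bigl(H(Q)\cap(h-n,h]\bigr)\) because \(h-n<h\). When \(h<n\) the subtracted term vanishes, which is consistent with (i).

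I do not expect a real obstacle. The only care needed is to check that the kernel identification uses \(Q\notin X\), and that the identity for \(\ell(mQ)\) is invoked for possibly negative \(m\). The Gorenstein hypothesis and the bound \(h\le N\) are not needed for this lemma. They only fix the range of \(h\) used later in the paper.
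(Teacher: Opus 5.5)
Your proposal is correct and follows the paper's proof essentially verbatim: both identify the kernel of \(\ev_X\) on \(\calL(hQ)\) with \(\calF\,\calL((h-n)Q)\), which is zero for \(h<n\), and then obtain (ii) from \(\dim C_X(hQ)=\ell(hQ)-\ell((h-n)Q)\) together with the semigroup count. Your intermediate step \(\ker=\calL(hQ-D_X)\), where \(Q\notin X\) is used, and your remark that the Gorenstein hypothesis is not needed here are both accurate.
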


\begin{proof}
If \(f\in\calL(hQ)\) vanishes on~\(X\), then
\(f/\calF\in\calL((h-n)Q)\).  Conversely,
\(\calF\calL((h-n)Q)\) is contained in the evaluation kernel.  Thus
\(
\ker(\ev_X|_{\calL(hQ)})
=\calF\calL((h-n)Q).
\)
When \(h<n\), the space on the right is zero.  In general,
\( 
\dim C_X(hQ)=\ell(hQ)-\ell((h-n)Q),
\)
and the semigroup formula for \(\ell\) gives the stated window count.
\end{proof}

\section{The Coweight Calculus}
\label{sec:coweight-calculus}

Since \(M_s(C)=n-d_s(C)\) (Lemma~\ref{lem:zero-weight-bridge}), the
results of this section are a coweight reformulation of Wei duality
for an arbitrary code, with no reference to $j$, to $\lambda_j$, or to
the curve.

\begin{definition}
    For a vector \(c\in\FF_q^n\), its \emph{cosupport} is the index set of its zero entries,
    \[\cosu(c)\coloneqq \{i:c_{i}=0\}.\]
    The size of this set is the \emph{coweight} of \(c\), denoted
    \[\cowt(c)\coloneqq \lvert\cosu(c)\rvert.\]
\end{definition}

\begin{definition}\label{def:max-zero-number-code}
    Let $C\subseteq\FF_q^n$ be a code; its \emph{cosupport} is the intersection of the cosupports of its codewords,
    \[
        \cosu(C)\coloneqq 
        \bigcap_{c\in C}\cosu(c)=
        \{i : c_{i}=0 \text{ for every } c\in C\}.
    \]
    If \(C\) is of dimension \(k\), its \emph{\(s\)-th generalized coweight} (\(1\le s\le k\)) is given by
    \[
        M_{s}(C) \coloneqq 
        \max\{\lvert\cosu(D)\rvert : D \subseteq C, \ \dim D=s\},
    \]
    the largest number of coordinates on which some $s$-dimensional
    subcode of $C$ vanishes.  We write $\mathcal M(C)\coloneqq
    \{M_s(C):1\le s\le k\}$ for the \emph{coweight profile} of $C$.
\end{definition}

\begin{lemma}\label{lem:zero-weight-bridge}
Let $C\subseteq\FF_q^n$ be a code of dimension $k$.
\begin{enumerate}[label=\textup{(\roman*)},leftmargin=2.4em]
\item For every subcode $D\subseteq C$ one has $\lvert\supp(D)\rvert=n-\lvert\cosu(D)\rvert$.
\item $d_{s}(C)=n-M_{s}(C)$ for $1\le s\le k$.
\end{enumerate}
\end{lemma}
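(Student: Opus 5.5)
The plan is to prove (i) by a direct set-theoretic argument and then get (ii) by applying (i) to each subcode and comparing the two optimization problems.

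For (i), fix a subcode \(D\subseteq C\). Each coordinate \(i\in\{1,\dots,n\}\) satisfies exactly one of two conditions: either some \(c\in D\) has \(c_i\neq0\), or every \(c\in D\) has \(c_i=0\). The first condition is membership in \(\supp(D)=\bigcup_{c\in D}\supp(c)\). The second is membership in \(\cosu(D)=\bigcap_{c\in D}\cosu(c)\). So \(\supp(D)\) and \(\cosu(D)\) partition \(\{1,\dots,n\}\), and therefore \(\lvert\supp(D)\rvert+\lvert\cosu(D)\rvert=n\). If one prefers a formula, this is De Morgan's law: \(\cosu(c)\) is the complement of \(\supp(c)\), and intersecting complements gives the complement of the union.

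For (ii), fix \(s\) with \(1\le s\le k\). Both \(d_s(C)\) and \(M_s(C)\) are optimizations over the same finite, nonempty family: the \(s\)-dimensional subcodes \(D\subseteq C\). The family is nonempty because \(s\le k=\dim C\). By (i), the function \(D\mapsto\lvert\cosu(D)\rvert\) equals \(n-\lvert\supp(D)\rvert\) on this family. The map \(x\mapsto n-x\) is strictly decreasing, so a subcode maximizes \(\lvert\cosu(D)\rvert\) exactly when it minimizes \(\lvert\supp(D)\rvert\). Hence
\[
M_s(C)=\max_{D}\bigl(n-\lvert\supp(D)\rvert\bigr)=n-\min_{D}\lvert\supp(D)\rvert=n-d_s(C).
\]
Rearranging gives \(d_s(C)=n-M_s(C)\).

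There is no real obstacle here. The one point to check is that the optimization is over a nonempty finite set, so the maximum and minimum both exist; this is guaranteed by \(1\le s\le k\) and the finiteness of \(\FF_q^n\).
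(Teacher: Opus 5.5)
Your proof is correct and follows the paper's argument. For (i), you show coordinate by coordinate that \(\supp(D)\) and \(\cosu(D)\) are complementary in \(\{1,\dots,n\}\). For (ii), you observe that minimizing \(\lvert\supp(D)\rvert\) and maximizing \(\lvert\cosu(D)\rvert=n-\lvert\supp(D)\rvert\) over the \(s\)-dimensional subcodes are the same problem; your explicit check that this family is finite and nonempty is a harmless detail the paper leaves implicit.
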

\begin{proof}
\textup{(i)} A coordinate $i\in\{1,\dots,n\}$ lies in $\supp(D)$ precisely when $c_{i}\ne0$ for some $c\in D$, which occurs if and only if $i\notin\cosu(D)$; the two sets are complementary in $\{1,\dots,n\}$.

\textup{(ii)} By~\textup{(i)}, minimizing $\lvert\supp(D)\rvert$ over the $s$-dimensional subspaces $D\subseteq C$ is equivalent to maximizing $\lvert\cosu(D)\rvert$ over the same family; hence $d_s(C)=n-M_s(C)$.
\end{proof}

\begin{corollary}
\label{cor:strict-monotonicity}
Let $C\subseteq\FF_q^n$ be a code of dimension $k\ge1$. Then
$M_1(C)>\cdots>M_k(C)$; equivalently,
\[
M_s(C)\ge M_{s+1}(C)+1,
\qquad 1\le s<k.
\]
\end{corollary}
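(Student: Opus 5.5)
The plan is to prove the strict decrease $M_s(C)>M_{s+1}(C)$ directly by a shortening argument. Once this holds, the form $M_s(C)\ge M_{s+1}(C)+1$ follows because the coweights are integers. An alternative would be to transport the classical strict increase $d_1(C)<\cdots<d_k(C)$ through Lemma~\ref{lem:zero-weight-bridge}(ii), since $d_s=n-M_s$. That strict increase is only asserted in the definition of the weight hierarchy, however, so I will give the self-contained argument in coweight language.

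Fix $1\le s<k$ and choose an $(s+1)$-dimensional subcode $D\subseteq C$ with $\lvert\cosu(D)\rvert=M_{s+1}(C)$; such a $D$ exists because $k\ge s+1$. Since $D\neq0$, the set $\supp(D)$ is nonempty by Lemma~\ref{lem:zero-weight-bridge}(i). Pick a coordinate $i\in\supp(D)$ and form the shortened subcode
\[
D'\coloneqq\{c\in D: c_i=0\}.
\]
This is the kernel of the coordinate functional $c\mapsto c_i$ restricted to $D$. That functional is nonzero on $D$ because $i\in\supp(D)$, so $\dim D'=s$.

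Next compare cosupports. Every codeword of $D'\subseteq D$ vanishes on $\cosu(D)$, and it also vanishes at $i$. Hence
\[
\cosu(D')\supseteq\cosu(D)\cup\{i\}.
\]
Moreover $i\notin\cosu(D)$, because $\cosu(D)$ and $\supp(D)$ are complementary. Therefore $\lvert\cosu(D')\rvert\ge M_{s+1}(C)+1$. Since $D'$ is an $s$-dimensional subcode of $C$, Definition~\ref{def:max-zero-number-code} gives $M_s(C)\ge M_{s+1}(C)+1$, as required.

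There is no real obstacle here. The only point needing care is the dimension count $\dim D'=s$, that is, that the coordinate functional is not identically zero on $D$. This is exactly why $i$ is chosen in the support of $D$ rather than arbitrarily, and it uses only the hypothesis $D\ne0$.
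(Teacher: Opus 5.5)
Your proof is correct and is essentially the paper's argument. Both choose an extremal subcode $D$ of the larger dimension, pick $i\in\supp(D)$, and shorten to $D'=\{c\in D:c_i=0\}$, which loses exactly one dimension and adds the coordinate $i$ to the cosupport. The paper runs this in support language to get $d_{s-1}(C)\le d_s(C)-1$ and transfers it via Lemma~\ref{lem:zero-weight-bridge}(ii), whereas you work directly with $\cosu(D')\supseteq\cosu(D)\cup\{i\}$.
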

\begin{proof}
By Lemma~\ref{lem:zero-weight-bridge}\textup{(ii)}, the claim is equivalent to $d_1(C)<\dots<d_k(C)$. Let $2\le s\le k$ and let $D\subseteq C$ have $\dim D=s$ and $\lvert\supp(D)\rvert=d_s(C)$. Pick $i\in\supp(D)$ and put $D'\coloneqq\{c\in D:c_i=0\}$. The functional $c\mapsto c_i$ is nonzero on $D$, so $\dim D'=s-1$, while $\supp(D')\subseteq\supp(D)\setminus\{i\}$. Hence $d_{s-1}(C)\le\lvert\supp(D')\rvert\le\lvert\supp(D)\rvert-1=d_s(C)-1$, giving $d_{s-1}(C)<d_s(C)$ for every $2\le s\le k$, which is the claim.
\end{proof}
The strictly decreasing sequence \(M_{1}(C)>\cdots>M_{k}(C)\) constitutes the \emph{coweight hierarchy} of \(C\).

For $0\le e\le n$, put
\[
f_C(e):=\max_{\lvert S\rvert=e}\dim\{c\in C:c\vert_S=0\}.
\]

\begin{lemma}
\label{lem:coweight-threshold}
Let $C\subseteq\FF_q^n$ have dimension $k$. For $0\le e\le n$,
\[
f_C(e)=\#\{s\in\{1,\dots,k\}:M_s(C)\ge e\},
\]
and for $0\le e<n$,
\[
f_C(e)-f_C(e+1)=
\begin{cases}
1,& e\in\mathcal M(C),\\
0,& e\notin\mathcal M(C).
\end{cases}
\]
\end{lemma}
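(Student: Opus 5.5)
The plan is to prove the first formula by comparing, for a fixed threshold $e$, the largest kernel dimension $f_C(e)$ with the set of indices $s$ satisfying $M_s(C)\ge e$. The second formula will then follow from strict monotonicity (Corollary~\ref{cor:strict-monotonicity}). For a subset $S\subseteq\{1,\dots,n\}$ write $C_S:=\{c\in C:c\vert_S=0\}$. The key observation is that $S\subseteq\cosu(C_S)$, and conversely that any subcode $D\subseteq C$ satisfies $D\subseteq C_{\cosu(D)}$.

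\emph{Lower bound.} Let $t:=\#\{s:M_s(C)\ge e\}$. Since the $M_s(C)$ are strictly decreasing, these indices are exactly $s=1,\dots,t$. If $t=0$ there is nothing to prove, since $f_C(e)\ge0$ trivially. Otherwise choose $D\subseteq C$ with $\dim D=t$ and $\lvert\cosu(D)\rvert=M_t(C)\ge e$, and choose any $S\subseteq\cosu(D)$ with $\lvert S\rvert=e$. Then $D\subseteq C_S$, so $f_C(e)\ge t$.

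\emph{Upper bound.} Let $S$ attain the maximum in $f_C(e)$ and put $s:=\dim C_S$. If $s\ge1$, then $C_S$ is an $s$-dimensional subcode with $\lvert\cosu(C_S)\rvert\ge\lvert S\rvert=e$, so $M_s(C)\ge e$. Monotonicity then gives $M_r(C)\ge e$ for every $r\le s$, hence $t\ge s=f_C(e)$. If $s=0$ the inequality is trivial.

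\emph{Difference formula.} From the first formula,
\[
f_C(e)-f_C(e+1)=\#\{s:M_s(C)\ge e\}-\#\{s:M_s(C)\ge e+1\}=\#\{s:M_s(C)=e\}.
\]
Because the $M_s(C)$ are pairwise distinct by Corollary~\ref{cor:strict-monotonicity}, this count is $1$ if $e\in\mathcal M(C)$ and $0$ otherwise. The range $e<n$ only ensures that $e+1\le n$, so that $f_C(e+1)$ is defined.

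No step is a genuine obstacle; the argument is a direct unwinding of the definitions. The only point needing care is the boundary bookkeeping: empty subcodes and the cases $t=0$ or $s=0$, together with the fact that the maximizer $S$ may have $\cosu(C_S)$ strictly larger than $S$. Both are handled by using inequalities $\ge e$ throughout rather than equalities.
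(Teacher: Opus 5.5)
Your proof is correct and follows essentially the same route as the paper: the lower bound restricts an optimal $t$-dimensional subcode to an $e$-subset of its cosupport, the upper bound takes the kernel $C_S$ of an optimal $S$, and the difference formula comes from strict monotonicity. Two points of your version are slightly more careful than the paper's: you treat the case $f_C(e)=0$ explicitly (where the paper implicitly writes $M_0$), and you justify passing from maximum to cardinality by the initial-segment structure $\{1,\dots,t\}$ rather than by pairwise distinctness alone.
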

\begin{proof}
Fix $e$. If $M_s(C)\ge e$ for some $s$, choose $D\subseteq C$ of
dimension $s$ with $\lvert\cosu(D)\rvert\ge e$, and any
$S\subseteq\cosu(D)$ with $\lvert S\rvert=e$; then $D\vert_S=0$, so
$f_C(e)\ge s$. Conversely, let $S^*$ of size $e$ attain $f_C(e)$ and
put $D_0:=\{c\in C:c\vert_{S^*}=0\}$, of dimension $s_0:=f_C(e)$;
since every element of $D_0$ vanishes on $S^*$, we have $S^*\subseteq
\cosu(D_0)$, so $M_{s_0}(C)\ge\lvert\cosu(D_0)\rvert\ge\lvert
S^*\rvert=e$. Together, $f_C(e)=\max\{s:M_s(C)\ge e\}$, and since the
values $M_1(C),\dots,M_k(C)$ are pairwise distinct
(Corollary~\ref{cor:strict-monotonicity}) this maximum equals the
cardinality $\#\{s:M_s(C)\ge e\}$.

For the second claim, $\{s:M_s(C)\ge e+1\}\subseteq\{s:M_s(C)\ge e\}$,
and the difference of these two sets is $\{s:M_s(C)=e\}$, which has at
most one element by strict monotonicity — exactly one iff
$e\in\mathcal M(C)$. Hence
$f_C(e)-f_C(e+1)=\#\{s:M_s(C)\ge e\}-\#\{s:M_s(C)\ge e+1\}
=\#\{s:M_s(C)=e\}$, which is $1$ or $0$ according to whether
$e\in\mathcal M(C)$.
\end{proof}

\begin{theorem}
\label{thm:zero-profile-calculus}
Let $C\subseteq\FF_q^n$ have dimension $k$, and let $\mathbf v\in(\FF_q^*)^n$.
\begin{enumerate}[label=\textup{(\roman*)},leftmargin=2.4em]
\item For every $S\subseteq\{1,\dots,n\}$,
\[
\dim\{c\in C:c\vert_S=0\}
-\dim\{c\in C^{\perp_{\mathbf v}}:c\vert_{\{1,\dots,n\}\setminus S}=0\}
=k-\lvert S\rvert,
\]
and consequently, for $0\le e\le n$,
\[
f_C(e)-f_{C^{\perp_{\mathbf v}}}(n-e)=k-e.
\]
\item
\[
\mathcal M(C^{\perp_{\mathbf v}})
=[0,n-1]\setminus\{n-1-M_s(C): 1\leq s \leq k\}.
\]
\end{enumerate}
\end{theorem}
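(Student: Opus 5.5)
Part (i) reduces to scaled duality for the standard dual $C^\perp$ followed by linear algebra. First I would note that $c\mapsto\mathbf v^{-1}\star c$ is a bijection $C^\perp\to C^{\perp_{\mathbf v}}$ that preserves cosupports, because every entry of $\mathbf v$ is nonzero. So it suffices to treat $C^\perp$.

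Fix $S$ and write $T=\{1,\dots,n\}\setminus S$. Let $C_S:=\{c\in C:c\vert_S=0\}$, the kernel of the restriction map $\pi_S\colon C\to\FF_q^S$. Then $\dim C_S=k-\operatorname{rank}\pi_S=k-\dim(C\vert_S)$.

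Next I would identify $\{x\in C^\perp: x\vert_T=0\}$ with the orthogonal complement of $C\vert_S$ inside $\FF_q^S$. A vector $x$ supported on $S$ lies in $C^\perp$ exactly when $x\vert_S$ is orthogonal to $c\vert_S$ for every $c\in C$. Hence this space has dimension $\lvert S\rvert-\dim(C\vert_S)$. Subtracting the two dimensions gives $k-\lvert S\rvert$, which is the pointwise identity.

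The identity for $f$ follows from the pointwise one. The map $S\mapsto T$ is a bijection between $e$-subsets and $(n-e)$-subsets, and the pointwise identity says the two dimensions differ by the constant $k-e$. Therefore maximizing the first dimension over $\lvert S\rvert=e$ is the same as maximizing the second over $\lvert T\rvert=n-e$, and $f_C(e)-f_{C^{\perp_{\mathbf v}}}(n-e)=k-e$.

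For (ii), put $C'=C^{\perp_{\mathbf v}}$, which has dimension $n-k$. The plan is to read membership in $\mathcal M(C')$ off the jumps of $f_{C'}$ via Lemma~\ref{lem:coweight-threshold}. For $0\le e'<n$, that lemma gives $e'\in\mathcal M(C')$ if and only if $f_{C'}(e')-f_{C'}(e'+1)=1$.

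I would substitute part (i) with $e=n-e'$ and $e=n-e'-1$:
\[
f_{C'}(e')-f_{C'}(e'+1)=\bigl(f_C(n-e')-k+n-e'\bigr)-\bigl(f_C(n-e'-1)-k+n-e'-1\bigr)=1-\bigl(f_C(n-e'-1)-f_C(n-e')\bigr).
\]
Set $e=n-1-e'$, which also ranges over $[0,n-1]$. By Lemma~\ref{lem:coweight-threshold} applied to $C$, the bracket on the right is $1$ if $e\in\mathcal M(C)$ and $0$ otherwise. Hence $e'\in\mathcal M(C')$ if and only if $n-1-e'\notin\mathcal M(C)$.

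Finally I would check that all coweights lie in $[0,n-1]$. A nonzero subcode has a codeword with a nonzero coordinate, so $M_s\le n-1$ for $s\ge1$. This justifies restricting to $e'\in[0,n-1]$ and gives the stated complement formula.

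The main obstacle is the index bookkeeping in (ii): the shift by one between $e$ and $e+1$ in Lemma~\ref{lem:coweight-threshold}, the reflection $e'\mapsto n-1-e'$, and the endpoint ranges. The degenerate cases $k=0$ and $k=n$, where one side has an empty profile, need a brief separate check.
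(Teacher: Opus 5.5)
Your proposal is correct and follows essentially the same route as the paper. For (i), both use rank--nullity for the restriction map, identify the dual vectors supported on $S$ with $\pi_S(C)^\perp$, and transport this to $C^{\perp_{\mathbf v}}$ by the support-preserving scaling. For (ii), both difference (i) at consecutive arguments and read off the jumps via Lemma~\ref{lem:coweight-threshold}. Your explicit remark that every coweight is at most $n-1$ also makes precise the containment $\mathcal M(C^{\perp_{\mathbf v}})\subseteq[0,n-1]$, which the paper uses only implicitly.
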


\begin{proof}
\textup{(i)}
Fix $S\subseteq\{1,\dots,n\}$, let $\pi_S:\FF_q^n\to\FF_q^S$ be the projection to $S$, so
\begin{equation}\label{eq:dim-zero-profile1}
k=\dim C
=\dim\pi_{S}\vert_{C}(C)+\dim\ker\pi_{S}\vert_{C}
=\dim\pi_{S}\vert_{C}(C)+\dim\{c\in C:c|_S=0\}.
\end{equation}
We claim
\[ \{d\vert_S : d\in C^\perp,\ d\vert_{\{1,\dots,n\}\setminus S}=0\}=\pi_S(C)^{\perp}\subseteq\FF_q^S. \]
Indeed, a vector $d\in\FF_q^n$ supported on $S$ verifies $d\in C^\perp$ if and only if $\sum_{i=1}^{n}d_ic_i=\sum_{i\in S}d_ic_i=0$ for every $c\in C$, which holds if and only if $d\vert_S\in\pi_S(C)^\perp$. Since restriction to $S$ is injective on vectors supported on $S$,
\[ \dim\{d\in C^\perp: d\vert_{\{1,\dots,n\}\setminus S}=0\} =\dim\pi_S(C)^\perp=\lvert S\rvert-\dim\pi_{S}\vert_{C}(C). \]

As every entry of $\mathbf v$ is a unit, the map $d\mapsto \mathbf v^{-1}\star d$ is an isomorphism $C^\perp\to C^{\perp_{\mathbf v}}$ that preserves supports; hence it restricts to an isomorphism between $\{d\in C^\perp: d\vert_{\{1,\dots,n\}\setminus S}=0\}$ and $\{c\in C^{\perp_{\mathbf v}}: c\vert_{\{1,\dots,n\}\setminus S}=0\}$, yielding
\begin{equation}\label{eq:dim-zero-profile2}
\dim\{c\in C^{\perp_{\mathbf v}}:c|_{\{1,\dots,n\}\setminus S}=0\}=\lvert S\rvert-\dim\pi_{S}\vert_{C}(C).
\end{equation}

Subtracting \eqref{eq:dim-zero-profile2} from \eqref{eq:dim-zero-profile1} eliminates $\dim\pi_{S}\vert_{C}(C)$ and gives
\[
\dim\{c\in C:c\vert_S=0\}-\dim\{c\in C^{\perp_{\mathbf v}}:c\vert_{\{1,\dots,n\}\setminus S}=0\}=k-\lvert S\rvert.
\]
For a fixed $e$, as $S$ ranges over all subsets of size $e$, the complement $\{1,\dots,n\}\setminus S$ ranges bijectively over all subsets of size $n-e$. Taking maxima on both sides gives the identity $f_C(e)-f_{C^{\perp_{\mathbf v}}}(n-e)=k-e$.

\textup{(ii)}
Fix $0\le e<n$. Applying part~\textup{(i)} with position value $n-e-1$ and with position value $n-e$ gives, respectively,
\[
f_C(n-e-1)-f_{C^{\perp_{\mathbf v}}}(e+1)=k-(n-e-1),
\qquad
f_C(n-e)-f_{C^{\perp_{\mathbf v}}}(e)=k-(n-e).
\]
Subtracting the first identity from the second,
\[
\bigl[f_C(n-e)-f_C(n-e-1)\bigr]
-\bigl[f_{C^{\perp_{\mathbf v}}}(e)-f_{C^{\perp_{\mathbf v}}}(e+1)\bigr]
=-1,
\]
that is,
\[
f_{C^{\perp_{\mathbf v}}}(e)-f_{C^{\perp_{\mathbf v}}}(e+1)
=\bigl[f_C(n-e)-f_C(n-e-1)\bigr]+1.
\]
By Lemma~\ref{lem:coweight-threshold} applied to $C$ (at position
$n-e-1$), $f_C(n-e-1)-f_C(n-e)=1$ if $(n-e-1)\in\mathcal M(C)$ and $0$
otherwise, so $f_C(n-e)-f_C(n-e-1)=-1$ if $(n-e-1)\in\mathcal M(C)$
and $0$ otherwise. Hence
\[
f_{C^{\perp_{\mathbf v}}}(e)-f_{C^{\perp_{\mathbf v}}}(e+1)
=
\begin{cases}
0,& (n-e-1)\in\mathcal M(C),\\
1,& (n-e-1)\notin\mathcal M(C).
\end{cases}
\]
By Lemma~\ref{lem:coweight-threshold} applied to
$C^{\perp_{\mathbf v}}$, this difference equals $1$ exactly when
$e\in\mathcal M(C^{\perp_{\mathbf v}})$. Therefore, for every $0\le
e\le n-1$, $e\in\mathcal M(C^{\perp_{\mathbf v}})$ if and only if
$(n-e-1)\notin\mathcal M(C)$, that is,
\[
\mathcal M(C^{\perp_{\mathbf v}})
=[0,n-1]\setminus\{n-1-m:m\in\mathcal M(C)\}
=[0,n-1]\setminus\{n-1-M_s(C):1\le s\le k\}. \qedhere
\]
\end{proof}

\begin{theorem}
\label{thm:zero-profile-extremal}
Let \(C\subseteq B\subseteq\FF_q^n\), where
\(\dim C=k\) and \(\dim B=k+1\).
\begin{enumerate}[label=\textup{(\roman*)},leftmargin=2.4em]
\item For every $1 \le s \le k$, the following interlacing inequalities hold
\[M_{s+1}(B)\le M_s(C)\le M_{s}(B).\]
\item If \(C\) contains the all-one word, then
\(
M_k(C)=0.
\)
\item If, in addition, \(k\ge2\) and the coordinate functionals
\[
\ev_i:C\to\FF_q,
\qquad
\ev_i(c)=c_i,
\]
are linearly independent for all pairs of distinct
\(i,i'\in\{1,\dots,n\}\), then
\(
M_{k-1}(C)=1.
\)
\end{enumerate}
\end{theorem}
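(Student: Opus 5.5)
For part (i), I will work directly from Definition~\ref{def:max-zero-number-code}. The upper bound \(M_s(C)\le M_s(B)\) is immediate: every \(s\)-dimensional subcode of \(C\) is also an \(s\)-dimensional subcode of \(B\), so the maximum over the larger family can only increase.

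For the lower bound \(M_{s+1}(B)\le M_s(C)\), I take an \((s+1)\)-dimensional subcode \(D\subseteq B\) with \(\lvert\cosu(D)\rvert=M_{s+1}(B)\). Since \(C\) has codimension one in \(B\), the intersection \(D\cap C\) has dimension at least \(s\), and I choose an \(s\)-dimensional subspace \(D'\subseteq D\cap C\). Every codeword of \(D'\) lies in \(D\), so \(\cosu(D')\supseteq\cosu(D)\). Hence \(M_s(C)\ge\lvert\cosu(D')\rvert\ge M_{s+1}(B)\).

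For part (ii), I argue by counting. By Corollary~\ref{cor:strict-monotonicity} the values \(M_1(C)>\cdots>M_k(C)\) are distinct nonnegative integers, and \(M_k(C)=\lvert\cosu(C)\rvert\) because \(C\) is its only \(k\)-dimensional subcode. If the all-one word \(\mathbf 1\) lies in \(C\), then no coordinate is identically zero on \(C\), so \(\cosu(C)=\emptyset\) and \(M_k(C)=0\).

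For part (iii), I first bound \(M_{k-1}(C)\) from below. By strict monotonicity, \(M_{k-1}(C)\ge M_k(C)+1=1\).

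For the upper bound, I argue by contradiction. Suppose a \((k-1)\)-dimensional subcode \(D\) vanishes on two distinct coordinates \(i\) and \(i'\). Then \(\ev_i\) and \(\ev_{i'}\) both vanish on the hyperplane \(D\) of \(C\), so each lies in the one-dimensional annihilator \(D^0\subseteq C^*\). Neither is zero on \(C\): since \(\mathbf 1\in C\), we have \(\ev_i(\mathbf 1)=1\). Therefore \(\ev_i\) and \(\ev_{i'}\) are proportional, which contradicts the linear-independence hypothesis. So every \((k-1)\)-dimensional subcode vanishes on at most one coordinate, and \(M_{k-1}(C)\le1\).

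The main point needing care is this annihilator argument. It uses the identification \((C/D)^*\cong D^0\), which is one-dimensional because \(\dim D=k-1\), together with the nonvanishing of \(\ev_i\) guaranteed by \(\mathbf 1\in C\). The hypothesis \(k\ge2\) is needed so that \(M_{k-1}(C)\) is defined.
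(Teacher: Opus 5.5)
Your proof is correct and follows the paper's argument. Part (i) is the same dimension count on $D\cap C$, part (ii) is the same observation that $M_k(C)=\lvert\cosu(C)\rvert$, and your one-dimensional annihilator $D^0\subseteq C^*$ is the dual form of the paper's surjection $T\colon C\to\FF_q^2$ with $(k-2)$-dimensional kernel.

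The only real variation is the lower bound $M_{k-1}(C)\ge1$. You get it from (ii) plus strict monotonicity, which is legitimate because ``in addition'' includes $\mathbf 1\in C$. The paper instead exhibits $\ker(\ev_i)$ directly, so its proof of (iii) never uses the all-one word. Relatedly, the nonvanishing of $\ev_i$ is superfluous in your upper bound, since any two vectors in a one-dimensional space are already linearly dependent.
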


\begin{proof}
\textup{(i)} Since \(C\subseteq B\), every $s$-dimensional subspace of
\(C\) is also an $s$-dimensional subspace of \(B\) with the same
cosupport, so \(M_s(C)\le M_s(B)\). For the other bound, let \(V\subseteq
B\) attain \(M_{s+1}(B)\), so \(\dim V=s+1\) and
\(\lvert\cosu(V)\rvert=M_{s+1}(B)\). Since \(V+C\subseteq B\) and
\(\dim B=k+1\),
\[
\dim(V\cap C)=\dim V+\dim C-\dim(V+C)\ge(s+1)+k-(k+1)=s.
\]
Choose an $s$-dimensional subspace \(D\subseteq V\cap C\). Since
\(D\subseteq V\), \(\cosu(V)\subseteq\cosu(D)\), and hence
\[
M_s(C)\ge\lvert\cosu(D)\rvert\ge\lvert\cosu(V)\rvert=M_{s+1}(B).
\]

\textup{(ii)} Suppose that \(\mathbf 1\in C\). Since \(\dim C=k\), the
only $k$-dimensional subspace of \(C\) is \(C\) itself, so
\(M_k(C)=\lvert\cosu(C)\rvert\). If \(i\in\cosu(C)\), then in
particular \(\mathbf 1_i=0\), contradicting \(\mathbf 1_i=1\). Hence
\(\cosu(C)=\varnothing\), so \(M_k(C)=0\).

\textup{(iii)}
\emph{$M_{k-1}(C)\ge1$.} For an index $i\in\{1,\dots,n\}$, \(\ev_i\) is
nonzero by hypothesis, so \(S_i:=\ker(\ev_i)\) has \(\dim S_i=k-1\) by
rank--nullity. Since \(S_i\vert_{\{i\}}=0\), we have \(i\in\cosu(S_i)\),
so \(M_{k-1}(C)\ge\lvert\cosu(S_i)\rvert\ge1\).

\emph{$M_{k-1}(C)\le1$.} Suppose that some \(S\subseteq C\) with
\(\dim S=k-1\) has \(\lvert\cosu(S)\rvert\ge2\), and pick distinct
\(i,i'\in\cosu(S)\), so \(S\subseteq\ker(\ev_i)\cap\ker(\ev_{i'})\).
Consider
\[
T:C\longrightarrow\FF_q^2,
\qquad
T(c):=(c_i,c_{i'}).
\]
If \(T\) were not surjective, its image would be a proper subspace of
\(\FF_q^2\), so there would exist \((\alpha,\beta)\ne(0,0)\) with
\(\alpha c_i+\beta c_{i'}=0\) for every \(c\in C\), i.e.\
\(\alpha\,\ev_i+\beta\,\ev_{i'}=0\), contradicting the hypothesis of
linear independence. Hence \(T\) is surjective, so
\(\dim\ker T=k-2\) by rank--nullity. But
\(\ker T=\ker(\ev_i)\cap\ker(\ev_{i'})\supseteq S\), which forces
\(k-1=\dim S\le\dim\ker T=k-2\), a contradiction. Thus every
$(k-1)$-dimensional subspace of \(C\) has cosupport of size at most
one, and consequently \(M_{k-1}(C)\le1\). Combining the two bounds,
\(M_{k-1}(C)=1\).
\end{proof}

\section{The Injective Zero Diagram (\texorpdfstring{$\lambda_j<n$}{lambda\_j<n})}\label{sec:injective-diagram}

\subsection{Coweights and zero rows}\label{subsec:base-loci}

\begin{definition}\label{def:maximum-zero-numbers}
Fix a nongap \(\lambda_j\in H(Q)\) and put \(C_j:=C_X(\lambda_jQ)\).
For \(1\le s\le k_{j}\coloneqq\dim C_j\), the \emph{$s$-th generalized
coweight of row $j$} is
\[
M_{j,s}\coloneqq M_s(C_j),
\]
as in Definition~\ref{def:max-zero-number-code}. The resulting tuple \(\mathcal M_j:=(M_{j,1},\ldots,M_{j, k_j})\) is
the \emph{\(j\)-th zero row}.
\end{definition}

Since \(C_j\subseteq\FF_q^n\) is a code for every nongap \(\lambda_j\), regardless of whether \(\ev_X\) is injective on \(\calL(\lambda_jQ)\), Lemma~\ref{lem:zero-weight-bridge}(ii), applied to \(C_j\), gives
immediately
\begin{equation}\label{eq:lower-zero-identity}
d_s(C_j)=n-M_{j,s},
\qquad 1\le s\le k_j,
\end{equation}
for every nongap \(\lambda_j\); this identity will be used again, for
\(\lambda_j\ge n\), in Section~\ref{sec:full-diagram}.

\begin{definition}\label{def:zero-diagrams}
The zero diagrams are defined as follows:
\begin{itemize}
\item The \emph{injective zero diagram} is the array
\[
\bigl(M_{j,s}\bigr)_{\lambda_j<n,\ 1\le s\le j+1}.
\]
\item   The
\emph{full zero diagram} is its extension
\( 
\bigl(M_{j,s}\bigr)_{\lambda_j\le N,\ 1\le s\le k_j}, \
k_j:=\dim C_j,
\)
in which \(M_{j,s}\) retains its meaning as a generalized coweight of
\(C_j\) but no longer, in general, as a maximum vanishing number (see
the Remark following Corollary~\ref{cor:row-boundary}).  For
\(\lambda_j<n\) one has \(k_j=j+1\); entries with \(\lambda_j<n\) are
computed directly in this section, while entries with \(\lambda_j\ge
n\), where \(k_j<j+1\) in general, are determined by reflection in
Section~\ref{sec:full-diagram}, with no new geometric input.
\end{itemize}
\end{definition}

\begin{figure}[ht]
  \centering
  \begin{tikzpicture}[xscale=1.3, yscale=0.82]
    \draw[rounded corners, draw=red!80!black, thick, dashed]
      (-0.5,1.5) -- (-0.5,-7.5) -- (8.5,-7.5) -- cycle;
    \draw[rounded corners, fill=blue!8, draw=blue!55, thick]
      (-0.4,1.2) -- (-0.4,-5.5) -- (6.3,-5.5) -- cycle;
    \node at (0,0) {$M_{0,1}$};

    \node at (0,-1) {$M_{1,1}$};
    \node at (1,-1) {$M_{1,2}$};

    \node at (0,-2) {$\vdots$};
    \node at (1,-2) {$\vdots$};
    \node at (2,-2) {$\ddots$};

    \node at (0,-3) {$M_{j,1}$};
    \node at (1,-3) {$M_{j,2}$};
    \node at (2,-3) {$\dots$};
    \node at (3,-3) {$M_{j,j+1}$};

    \node at (0,-4) {$\vdots$};
    \node at (1,-4) {$\vdots$};
    \node at (2,-4) {\rotatebox{90}{$\ddots$}};
    \node at (3,-4) {$\vdots$};
    \node at (4,-4) {$\ddots$};

    \node at (0,-5) {$M_{J,1}$};
    \node at (1,-5) {$M_{J,2}$};
    \node at (2,-5) {$\dots$};
    \node at (3,-5) {$\dots$};
    \node at (4,-5) {$\dots$};
    \node at (5,-5) {$M_{J,J+1}$};

    \node at (0,-6) {$\vdots$};
    \node at (1,-6) {$\vdots$};
    \node at (2,-6) {$\ddots$};
    \node at (3,-6) {$\vdots$};
    \node at (4,-6) {\rotatebox{90}{$\ddots$}};
    \node at (5,-6) {$\vdots$};
    \node at (6,-6) {$\ddots$};

    \node at (0,-7) {$M_{L,1}$};
    \node at (1,-7) {$M_{L,2}$};
    \node at (2,-7) {$\dots$};
    \node at (3,-7) {$\dots$};
    \node at (4,-7) {$\dots$};
    \node at (5,-7) {$\dots$};
    \node at (6,-7) {$\dots$};
    \node at (7,-7) {$M_{L, n-1}$};
    
    \node[anchor=west] at (-0.5,-8) {Here, $J\coloneqq\max\{j:\lambda_{j}<n\}$ and $L\coloneqq\min\{j:\lambda_{j}\geq N\}$.};
  \end{tikzpicture}
  \caption{The zero diagram rows are indexed by all nongaps $\lambda_{j} \ge 0$, split into two regimes (Definition~\ref{def:zero-diagrams}). Within the {\color{blue}\emph{injective zero diagram} (solid outline, $\lambda_{j}<n$)}, $\mathcal{M}_{j}$ yields the generalized Hamming weights $d_{r}=n-M_{j,r}$ of the lower block. Twisted duality reflects this onto the {\color{red}\emph{full zero diagram} (dashed outline, $\lambda_{j}\leq N$)} to determine the missing weights of the upper block. The diagram naturally truncates here, as beyond it, the code is trivially the whole space. The two blocks are pole-order intervals, not array rows (see Figure~\ref{fig:blocks}). Gaps may be retained as repeated marks on the pole-order axis, but the diagram's columns are indexed by nongaps only, so that the underlying code flag is strict.}
\label{fig:zero_diagram}
\end{figure}

Within the injective zero diagram, i.e.\ for \(\lambda_j<n\), \(M_{j,s}\)
admits a second, geometric description that we use throughout the
rest of this section, and by which we shall generally refer to it
from here on.  For a subspace \(S\subseteq\calL(hQ)\) of any
Riemann--Roch space, define its \emph{vanishing set} on \(X\) by
\[
\van(S):=\{P\in X:f(P)=0\text{ for every }f\in S\},
\]
under the identification of \(X\) with \(\{1,\ldots,n\}\) fixed by
the enumeration \(X=\{P_1,\ldots,P_n\}\).  By
Lemma~\ref{lem:window-formula}(i), \(\ev_X\) is injective on
\(\calL(\lambda_jQ)\) whenever \(\lambda_j<n\), so \(S\mapsto\ev_X(S)\)
is a dimension-preserving bijection between subspaces of
\(\calL(\lambda_jQ)\) and subspaces of \(C_j\), carrying vanishing sets
to cosupports, \(\van(S)=\cosu(\ev_X(S))\); this bijection gives
\[
M_s(C_j)=\max_{\substack{S\subseteq\calL(\lambda_jQ)\\\dim S=s}}
\lvert\van(S)\rvert,
\qquad 1\le s\le\dim C_j.
\]
Moreover \(\dim C_j=\ell(\lambda_jQ)=\lvert H(Q)\cap[0,\lambda_j]
\rvert=j+1\) (the first equality by injectivity, the second by the
classical identity of Section~\ref{subsec:one-point-codes}, the third
since \(\lambda_j\) is the \((j+1)\)-st nongap).  Hence, for
\(\lambda_j<n\), \(M_{j,s}\) is exactly the \emph{maximum vanishing
number}
\[
M_{j,s}=
\max_{\substack{S\subseteq\calL(\lambda_jQ)\\\dim S=s}}
|\van(S)|,
\qquad 1\le s\le j+1,
\]
and \(\mathcal M_j=(M_{j,1},\ldots,M_{j,j+1})\), the range used for
the rest of this section.  We reserve \emph{cosupport} and
\emph{coweight} for the code-theoretic objects of
Section~\ref{sec:coweight-calculus} (vectors, codes, and their
subcodes) and \emph{vanishing set} and \emph{maximum vanishing
number} for these function-space objects \(S\subseteq\calL(hQ)\); the
two notions agree, via the bijection above, precisely on the
injective zero diagram.

\begin{remark}
The identity \(M_s(C_j)=\max_{\dim S=s}\lvert\van(S)\rvert\) uses
injectivity essentially and can fail once \(\lambda_j\ge n\): a
subspace \(S\subseteq\calL(\lambda_jQ)\) meeting \(\ker(\ev_X)\)
nontrivially can have \(\lvert\van(S)\rvert\) as large as
\(M_r(C_j)\) for some \(r<\dim S\), the kernel elements of \(S\)
vanish on all of \(X\) automatically and so impose no constraint on
\(\van(S)\), and by strict monotonicity
(Corollary~\ref{cor:strict-monotonicity}) this exceeds
\(M_{\dim S}(C_j)\).  Outside the injective zero diagram, \(M_{j,s}\)
therefore keeps its meaning as a generalized coweight of \(C_j\), but
no longer as a maximum vanishing number.
\end{remark}

\begin{definition}\label{def:separates}
A map \(\psi:X\to\FF_q^t\) \emph{separates} \(X\) if it is injective,
that is, \(\psi(P)\ne\psi(P')\) for all distinct \(P,P'\in X\).
\end{definition}

\begin{corollary}\label{cor:row-boundary}
Let \(\lambda_j<n\).
\begin{enumerate}[label=\textup{(\roman*)}]
\item \(M_{j,j+1}=0\), for every \(j\).
\item If \(j\ge1\) and \(\phi_j\) separates \(X\), then \(M_{j,j}=1\).
\end{enumerate}
\end{corollary}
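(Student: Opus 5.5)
The plan is to apply Theorem~\ref{thm:zero-profile-extremal} to the code \(C=C_j\). Since \(\lambda_j<n\), the discussion following Definition~\ref{def:zero-diagrams} (via Lemma~\ref{lem:window-formula}(i)) gives \(k_j=\dim C_j=j+1\). Hence \(M_{j,j+1}=M_{k}(C_j)\) and \(M_{j,j}=M_{k-1}(C_j)\) with \(k=j+1\), and the two parts of the corollary should follow from parts~(ii) and~(iii) of that theorem.

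For~(i), I will note that the constant function \(1\) lies in \(\calL(\lambda_jQ)\) for every \(j\), because \(\lambda_j\ge0\). Its evaluation \(\ev_X(1)\) is the all-one word, so \(\mathbf 1\in C_j\). Theorem~\ref{thm:zero-profile-extremal}(ii) then gives \(M_{j,j+1}=0\) directly.

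For~(ii), the hypothesis \(j\ge1\) gives \(k=j+1\ge2\), so it remains to verify the pairwise linear independence required in Theorem~\ref{thm:zero-profile-extremal}(iii). Suppose \(\alpha\,\ev_i+\beta\,\ev_{i'}=0\) on \(C_j\) for distinct \(i,i'\) and \((\alpha,\beta)\ne(0,0)\). Evaluating this relation on the all-one word from part~(i) gives \(\alpha+\beta=0\). Since \((\alpha,\beta)\ne(0,0)\), we get \(\alpha=-\beta\ne0\), so the relation reduces to \(\ev_i=\ev_{i'}\): every \(f\in\calL(\lambda_jQ)\) satisfies \(f(P_i)=f(P_{i'})\).

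The step needing care is using the definition of \(\phi_j\), which I take to be the map \(X\to\FF_q^{t}\) whose coordinates are functions of \(\calL(\lambda_jQ)\), for instance a basis with pole orders \(\lambda_1,\dots,\lambda_j\), possibly omitting the constant. Every coordinate of \(\phi_j\) is then a function in \(\calL(\lambda_jQ)\), so \(\ev_i=\ev_{i'}\) forces \(\phi_j(P_i)=\phi_j(P_{i'})\). This contradicts the assumption that \(\phi_j\) separates \(X\) (Definition~\ref{def:separates}). Therefore the coordinate functionals are pairwise independent, and Theorem~\ref{thm:zero-profile-extremal}(iii) yields \(M_{j,j}=M_{k-1}(C_j)=1\).
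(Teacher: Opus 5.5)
Your proposal is correct and follows essentially the same route as the paper. The all-one word \(\ev_X(1)\in C_j\) gives (i) via Theorem~\ref{thm:zero-profile-extremal}(ii). For (ii), a relation \(\alpha\,\ev_i+\beta\,\ev_{i'}=0\) is evaluated first on \(1\) and then on the coordinate functions of \(\phi_j\), which verifies the pairwise-independence hypothesis of Theorem~\ref{thm:zero-profile-extremal}(iii) with \(k=j+1\ge2\).
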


\begin{proof}
Since \(\ev_X(1)\) is the all-one word, \(C_j\) contains it; apply
Theorem~\ref{thm:zero-profile-extremal}\textup{(ii)} with \(k=j+1\) to
get \(M_{j,j+1}=0\).  Suppose \(\phi_j\) separates \(X\), and let
\(i\ne i'\).  If \(\alpha\,\ev_i+\beta\,\ev_{i'}=0\) on \(C_j\), then
evaluating at the constant function \(1\in\calL(\lambda_jQ)\) gives
\(\alpha+\beta=0\); evaluating at each basis function \(f_l\) then
gives \(\alpha\bigl(f_l(P_i)-f_l(P_{i'})\bigr)=0\) for every \(l\),
and since \(\phi_j(P_i)\ne\phi_j(P_{i'})\) this forces \(\alpha=0\)
and hence \(\beta=0\).  Thus \(\ev_i\) and \(\ev_{i'}\) are
independent for all distinct \(i,i'\), so the hypothesis of
Theorem~\ref{thm:zero-profile-extremal}\textup{(iii)} holds; apply it
with \(k=j+1\) to get \(M_{j,(j+1)-1}=M_{j,j}=1\).
\end{proof}

By Corollary~\ref{cor:row-boundary}, \(M_{j,j+1}=0\); since the entries
\(M_{j,1}>M_{j,2}>\cdots>M_{j,j+1}\) are strictly decreasing nonnegative
integers by Corollary~\ref{cor:strict-monotonicity} (applied to $C_j$),
\begin{equation}\label{eq:row-minimum}
M_{j,s}\ge j+1-s,
\qquad 1\le s\le j+1.
\end{equation}

The maximum vanishing numbers form the triangular array
\[
\mathcal{M}(\curve,Q,X):=
\bigl(M_{j,s}\bigr)_{\lambda_j<n,\ 1\le s\le j+1},
\]
the \emph{injective zero diagram} of Definition~\ref{def:zero-diagrams}.
By \eqref{eq:lower-zero-identity}, computing the maximum
vanishing numbers is therefore equivalent to determining the complete
weight hierarchy of every code \(C_j\) in the flag.

Fix a basis \(\{1,f_1,\ldots,f_j\}\) of \(\calL(\lambda_jQ)\) and let
\[
\phi_j:X\longrightarrow\FF_q^j,
\qquad
P\longmapsto(f_1(P),\ldots,f_j(P))
\]
be the associated \emph{coordinate map}.  If
\(S\subseteq\calL(\lambda_jQ)\) contains no nonzero constant, its
vanishing set \(\van(S)\) is the preimage under \(\phi_j\) of an
affine subspace of \(\FF_q^j\) of codimension \(\dim S\).  A different
choice of basis for \(\calL(\lambda_jQ)\) gives an affinely equivalent
map \(\phi_j\), so \(\van(S)\) does not depend on this choice.

\subsection{Cartesian profiles}
\label{subsec:cartesian-blocks}

We begin with the affine-linear case.  This is the part of the Cartesian
theory needed below; higher-degree Cartesian evaluation codes are treated
in~\cite{BeelenDatta2018}.

\begin{theorem}
\label{thm:cartesian-affine-block}
Let \((\curve,Q,X)\) be AJ--Gorenstein and suppose \(\lambda_t<n\).  Assume
that the coordinate map
\(
\phi_t:X\longrightarrow\FF_q^t
\)
is surjective and has constant fiber size~\(\rho\), so that
\(n=\rho q^t\).  Then
\[
M_{t,s}=\rho q^{t-s}\quad(1\le s\le t).
\]
\end{theorem}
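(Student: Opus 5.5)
We prove the two inequalities \(M_{t,s}\ge\rho q^{t-s}\) and \(M_{t,s}\le\rho q^{t-s}\) separately, working throughout with the maximum-vanishing description
\[
M_{t,s}=\max_{\dim S=s}\lvert\van(S)\rvert ,
\]
which is available because \(\lambda_t<n\) (Lemma~\ref{lem:window-formula}(i)).

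\emph{Lower bound.} Fix the basis \(\{1,f_1,\ldots,f_t\}\) defining \(\phi_t\). For \(1\le s\le t\), take \(S=\langle f_1,\ldots,f_s\rangle\). This subspace has dimension \(s\) and contains no nonzero constant. Its vanishing set is
\[
\van(S)=\phi_t^{-1}\bigl(\{x_1=\cdots=x_s=0\}\bigr).
\]
The set \(\{x_1=\cdots=x_s=0\}\) is an affine subspace of \(\FF_q^t\) with \(q^{t-s}\) points. Since \(\phi_t\) is surjective with constant fiber size \(\rho\), its preimage has exactly \(\rho q^{t-s}\) points. Hence \(M_{t,s}\ge\rho q^{t-s}\).

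\emph{Upper bound.} Let \(S\subseteq\calL(\lambda_tQ)\) with \(\dim S=s\). We may assume \(\van(S)\neq\varnothing\); otherwise there is nothing to prove. Then \(S\) contains no nonzero constant, since a nonzero constant vanishes nowhere. Each element of \(S\) can therefore be written uniquely as \(c_0+\sum c_if_i\), and the map sending it to its linear part \((c_1,\ldots,c_t)\) is injective on \(S\). Consequently the \(s\) defining affine-linear forms have linearly independent linear parts.

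By the remark after the definition of \(\phi_t\), \(\van(S)=\phi_t^{-1}(A)\), where
\[
A=\Bigl\{x\in\FF_q^t:\ c_0+\textstyle\sum_i c_ix_i=0\ \text{for all elements of }S\Bigr\}.
\]
Because the linear parts are independent, \(A\) is either empty or an affine subspace of dimension \(t-s\), so \(\lvert A\rvert\le q^{t-s}\). The constant fiber size then gives \(\lvert\van(S)\rvert=\rho\lvert A\rvert\le\rho q^{t-s}\). This proves \(M_{t,s}\le\rho q^{t-s}\), and combining with the lower bound yields \(M_{t,s}=\rho q^{t-s}\).

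The main point needing care is the upper-bound reduction: the independence of the linear parts must be justified by the injectivity of \(S\to\FF_q^t\). This follows because the kernel of that map consists of constants, and constants in \(S\) are excluded once \(\van(S)\neq\varnothing\).

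Finally, \(n=\rho q^t\) follows from the same data, since \(X\) is the disjoint union of the \(q^t\) fibers of \(\phi_t\), each of size \(\rho\). This also matches the lower-bound construction at \(s=0\).
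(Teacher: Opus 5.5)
Your proposal is correct and follows essentially the same route as the paper. Both arguments identify \(\calL(\lambda_tQ)\) with the affine functions on \(\FF_q^t\) via \(\phi_t\), discard subspaces containing a nonzero constant, and count the \(\rho q^{t-s}\) preimages of the codimension-\(s\) affine subspace cut out by independent linear parts, with \(\langle f_1,\ldots,f_s\rangle\) attaining the bound. Your only deviation is allowing \(A\) to be empty, where the paper notes that such a system is always consistent; this is harmless for the upper bound.
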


\begin{proof}
Write \(\calL(\lambda_tQ)=\langle1,f_1,\ldots,f_t\rangle\), so
\(\phi_t\) identifies \(\calL(\lambda_tQ)\) with the affine functions
on \(\FF_q^t\).  Let \(S\subseteq\calL(\lambda_tQ)\) have dimension
\(s\).  If \(S\) contains a nonzero constant, \(\van(S)=\varnothing\).

Otherwise, the linear parts of a basis of \(S\) are independent, so
they cut out \(s\) independent affine equations on \(\FF_q^t\).  Such a
system is always consistent and has exactly \(q^{t-s}\) solutions.
Hence
\[
|\van(S)|=\rho q^{t-s},
\]
attained, for instance, by \(S=\langle f_1,\ldots,f_s\rangle\), and this is the
maximum since the other case gives \(0\).  This proves the formula for
\(1\le s\le t\); together with \(M_{t,t+1}=0\)
(Corollary~\ref{cor:row-boundary}), this determines row \(t\) of the
injective zero diagram completely.
\end{proof}

\subsection{Gonality bounds}\label{subsec:gonality-bounds}

For \(r\ge0\), the rational gonality sequence is
\[
\gamma_r(\curve):=\min\{\deg A:A\ge0\text{ is $\FF_q$-rational and }
\ell(A)\ge r+1\}.
\]
Thus \(\gamma_0(\curve)=0\), and \(\gamma_1(\curve)\) is the
\(\FF_q\)-gonality \(\gamma(\curve)\).  We index by rank, following the
Brill--Noether convention also used for generalized Hamming weight
bounds in~\cite{BallicoMarcolla2016}.  Some sources instead index this
sequence by dimension, so that, for instance,
\cite{MunueraTorres2009} denotes the gonality by \(\gamma_2\) rather
than \(\gamma_1\), and the reader consulting those references should
shift indices by one accordingly.

\begin{theorem}
\label{thm:higher-rational-gonality}
Let \((\curve,Q,X)\) be AJ--Gorenstein.  If \(\lambda_j<n\), then
\[
M_{j,s}\le\lambda_j-\gamma_{s-1}(\curve),
\qquad 1\le s\le j+1.
\]
\end{theorem}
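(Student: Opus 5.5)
Fix \(S\subseteq\calL(\lambda_jQ)\) with \(\dim S=s\), and put \(Z:=\van(S)\) and \(e:=|Z|\). It suffices to show \(e\le\lambda_j-\gamma_{s-1}(\curve)\). The plan is to divide out the common zeros and produce an effective rational divisor of degree \(\lambda_j-e\) whose Riemann--Roch space has dimension at least \(s\); the definition of \(\gamma_{s-1}\) then gives the bound. The case \(s=1\) is immediate, since \(\gamma_0=0\) and a nonzero \(f\in\calL(\lambda_jQ)\) has at most \(\deg(\divv f)_\infty\le\lambda_j\) zeros. The case \(e=0\) is also trivial once we know \(\gamma_{s-1}\le\lambda_j\). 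For that, note that \(\lambda_jQ\) itself is rational and effective with \(\ell(\lambda_jQ)=j+1\ge s\), so \(\gamma_{s-1}\le\lambda_j\).

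In general, let \(D_Z:=\sum_{P\in Z}P\). This is an effective \(\FF_q\)-rational divisor, since the points of \(X\) are rational. Every \(f\in S\) satisfies \(\divv(f)+\lambda_jQ\ge0\) and vanishes at each \(P\in Z\). Because \(Q\notin X\), the points of \(Z\) are distinct from \(Q\), so \(\divv(f)+\lambda_jQ-D_Z\ge0\). Hence \(S\subseteq\calL(\lambda_jQ-D_Z)\), and so \(\ell(\lambda_jQ-D_Z)\ge s\). The divisor \(\lambda_jQ-D_Z\) need not be effective, so I would replace it by a linearly equivalent effective one. Pick any nonzero \(f_0\in S\) and set \(A:=\divv(f_0)+\lambda_jQ-D_Z\). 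Then \(A\ge0\), \(A\) is rational because \(f_0\) is defined over \(\FF_q\), and \(A\sim\lambda_jQ-D_Z\). Multiplication by \(f_0^{-1}\) is an isomorphism \(\calL(\lambda_jQ-D_Z)\to\calL(A)\), so \(\ell(A)\ge s=(s-1)+1\) and \(\deg A=\lambda_j-e\).

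By definition of the rational gonality sequence, \(\gamma_{s-1}(\curve)\le\deg A=\lambda_j-e\), that is, \(e\le\lambda_j-\gamma_{s-1}(\curve)\). Maximizing over \(S\), and using the identification of \(M_{j,s}\) with the maximum vanishing number for \(\lambda_j<n\) (which rests on Lemma~\ref{lem:window-formula}(i)), gives the claim. The AJ--Gorenstein hypothesis enters only through this identification. I expect the one point needing care to be the passage from the non-effective \(\lambda_jQ-D_Z\) to an effective rational representative via \(f_0\). A secondary check is the top case \(s=j+1\): there \(M_{j,j+1}=0\) by Corollary~\ref{cor:row-boundary}, and the bound \(\gamma_j\le\lambda_j\) is consistent with it.
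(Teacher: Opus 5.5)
Your proof is correct and follows the paper's argument essentially verbatim. Both show \(\ell(\lambda_jQ-D_Z)\ge s\), pass to an effective rational representative via a nonzero section (your \(A=\divv(f_0)+\lambda_jQ-D_Z\) is exactly the paper's \(\Delta\)), and conclude from the definition of \(\gamma_{s-1}(\curve)\); your separate treatment of \(s=1\) and \(e=0\) is harmless but unnecessary, since the general argument already covers those cases.
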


\begin{proof}
Let \(S\subseteq\calL(\lambda_jQ)\) have dimension~\(s\), and put
\(Z:=\van(S)\subseteq X\).  Form the divisor
\(
D:=\sum_{P\in Z}P,
\)
so that \(\deg D=|Z|\).  Every element of \(S\) vanishes at every point
of \(Z\), i.e.\ vanishes on \(D\), so
\[
\ell(\lambda_jQ-D)\ge s.
\]
Choose a nonzero element of this Riemann--Roch space and replace
\(\lambda_jQ-D\) by a linearly equivalent effective rational
divisor~\(\Delta\).  Then
\[
\deg \Delta=\lambda_j-|Z|,
\qquad \ell(\Delta)\ge s.
\]
Hence \(\deg \Delta\ge\gamma_{s-1}(\curve)\), which gives
\(|Z|\le\lambda_j-\gamma_{s-1}(\curve)\), that is,
\(|\van(S)|\le\lambda_j-\gamma_{s-1}(\curve)\).  Maximizing over \(S\)
gives \(M_{j,s}\le\lambda_j-\gamma_{s-1}(\curve)\).
\end{proof}

For \(s=2\), this gives the classical bound
\(
M_{j,2}\le\lambda_j-\gamma(\curve);
\)
see~\cite{Munuera1994,YangKumarStichtenoth1994}, where the same bound
is stated in terms of \(\gamma_2\) under their dimension-indexed
convention.  Its usefulness depends on knowing the rational gonality
sequence, which is a curve-specific problem.

\subsection{Full supports and product-gonality saturation}
\label{subsec:product-gonality-saturation}

\begin{definition}
\label{def:full-support-system}
For \(U\subseteq X\), put \(D_U:=\sum_{P\in U}P\), and set
\[
\mathcal G_X:=\{U\subseteq X:D_U\sim|U|Q\},
\qquad
\Phi_X:=\{|U|:U\in\mathcal G_X\}.
\]
A function \(f\) with \(\divv(f)=D_U-|U|Q\) is a \emph{full
extremizer} for \(U\).  A nongap \(\lambda_j<n\) is \emph{full} if
\(\lambda_j\in\Phi_X\).
\end{definition}

\begin{lemma}
\label{lem:full-extremizer-basics}
Let \(U\subseteq X\) be nonempty, put \(\alpha:=|U|\), and suppose
\(f\ne0\) satisfies \(\divv(f)=D_U-\alpha Q\).  Then
\(f\in\calL(\alpha Q)\setminus\calL((\alpha-1)Q)\), the zero set of
\(f\) on \(X\) is exactly \(U\), and \(\alpha\in H(Q)\).
\end{lemma}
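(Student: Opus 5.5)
The plan is to read everything off the divisor of \(f\).

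First, \(f\in\calL(\alpha Q)\): since \(U\subseteq X\) and \(Q\notin X\), the divisor \(D_U\) is effective with support disjoint from \(Q\). Hence \(\divv(f)+\alpha Q=D_U\ge0\).

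Second, \(f\notin\calL((\alpha-1)Q)\). We have \(\divv(f)+(\alpha-1)Q=D_U-Q\), and its coefficient at \(Q\) is \(-1<0\), because \(Q\notin U\). So the pole divisor of \(f\) is exactly \(\alpha Q\). By the characterization of \(H(Q)\) recalled in Section~\ref{subsec:one-point-codes} (some function has pole divisor exactly \(hQ\)), this gives \(\alpha\in H(Q)\). Equivalently, \(f\) witnesses \(\ell(\alpha Q)>\ell((\alpha-1)Q)\).

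Third, we identify the zero set on \(X\). For \(P\in X\), the point \(P\) is not a pole of \(f\), since the only pole is \(Q\notin X\). So \(f(P)\) is defined, and \(f(P)=0\) exactly when \(v_P(f)>0\). Now \(v_P(f)=v_P(D_U)\), which is \(1\) if \(P\in U\) and \(0\) otherwise, because \(P\neq Q\). Therefore the zero set of \(f\) on \(X\) is exactly \(U\).

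Every step is immediate from the definitions. The only point needing care is that \(Q\notin X\), so that \(D_U\) and \(Q\) have disjoint supports; the nonemptiness hypothesis serves only to make \(\alpha\ge1\), so that the nongap obtained is positive.
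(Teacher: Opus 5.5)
Your proof is correct and follows the paper's argument: effectivity of \(D_U\) gives \(f\in\calL(\alpha Q)\), exact pole order \(\alpha\) at \(Q\) gives \(f\notin\calL((\alpha-1)Q)\) and hence \(\alpha\in H(Q)\), and reading \(v_P(f)=v_P(D_U)\) for \(P\in X\) identifies the zero set. The only difference is cosmetic: you read the pole order directly off the coefficient of \(Q\) in \(D_U-\alpha Q\) (using \(Q\notin U\)), whereas the paper deduces it from \(\deg\divv(f)=0\) and \(\deg D_U=\alpha\).
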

\begin{proof}
Since \(D_U\ge0\), \(\divv(f)+\alpha Q=D_U\ge0\), so
\(f\in\calL(\alpha Q)\).  Since \(\deg\divv(f)=0\) and \(\deg D_U=\alpha\),
the pole of \(f\) at \(Q\) (its only pole) has order exactly \(\alpha\);
in particular \(f\notin\calL((\alpha-1)Q)\), so
\(\ell(\alpha Q)>\ell((\alpha-1)Q)\), i.e.\ \(\alpha\in H(Q)\).  Since
\(D_U\) is reduced with support \(U\) and \(f\) has no zeros away from
\(Q\), the zero set of \(f\) on \(X\) is exactly \(U\).
\end{proof}

\begin{lemma}
\label{lem:full-levels-are-nongaps}
Let \((\curve,Q,X)\) be AJ--Gorenstein.  Then \(\Phi_X\subseteq H(Q)\).
\end{lemma}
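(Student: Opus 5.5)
The plan is to unwind the definition of \(\Phi_X\) and reduce directly to Lemma~\ref{lem:full-extremizer-basics}. Let \(\alpha\in\Phi_X\); by definition there is some \(U\in\mathcal G_X\) with \(|U|=\alpha\) and \(D_U\sim\alpha Q\), so there exists \(f\in\FF_q(\curve)^*\) with \(\divv(f)=D_U-\alpha Q\). I will treat two cases, according to whether \(U\) is empty.

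If \(U=\varnothing\), then \(\alpha=0\) and \(0\in H(Q)\) trivially, since \(\lambda_0=0\). Equivalently, \(\ell(0)=1>0=\ell(-Q)\), so \(0\) satisfies the defining condition of the Weierstrass semigroup. If \(U\ne\varnothing\), then \(f\) is a nonzero function with \(\divv(f)=D_U-\alpha Q\), which is exactly the hypothesis of Lemma~\ref{lem:full-extremizer-basics}. That lemma yields \(\alpha\in H(Q)\): the pole of \(f\) at \(Q\) has order exactly \(\alpha\) by degree count, since \(D_U\) is effective of degree \(\alpha\) and supported away from \(Q\) because \(Q\notin X\). Hence \(\ell(\alpha Q)>\ell((\alpha-1)Q)\).

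I expect no real obstacle; the only points to check are that the empty set is handled (it lies in \(\mathcal G_X\) trivially) and that \(Q\notin U\), which guarantees that \(D_U\) and \(Q\) do not cancel. The AJ--Gorenstein hypothesis is not actually needed for the inclusion itself. It merely ensures that \(X\in\mathcal G_X\), so that \(n\in\Phi_X\). By complementation, AJ--Gorenstein also gives \(n-\alpha\in\Phi_X\) whenever \(\alpha\in\Phi_X\), since \(D_{X\setminus U}=D_X-D_U\sim(n-\alpha)Q\). I would note this in passing but not rely on it.
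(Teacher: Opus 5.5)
Your proof is correct and is essentially the paper's argument: pick $U\in\mathcal G_X$ with $|U|=\alpha$ and $f$ with $\divv(f)=D_U-\alpha Q$, then invoke Lemma~\ref{lem:full-extremizer-basics}. Treating $U=\varnothing$ separately is a small improvement, since that lemma is stated only for nonempty $U$ while the paper applies it to an arbitrary $U\in\mathcal G_X$; you are also right that the AJ--Gorenstein hypothesis plays no role in the inclusion.
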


\begin{proof}
Let \(U\in\mathcal G_X\) and \(\alpha:=|U|\), and choose \(f\ne0\)
with \(\divv(f)=D_U-\alpha Q\).  Lemma~\ref{lem:full-extremizer-basics}
gives \(\alpha\in H(Q)\) directly.
\end{proof}

\begin{lemma}
\label{lem:full-criterion}
Let \((\curve,Q,X)\) be AJ--Gorenstein and let \(\lambda_j<n\).  Then
\[
\lambda_j\in\Phi_X \quad\Longleftrightarrow\quad M_{j,1}=\lambda_j \quad\Longleftrightarrow\quad   d_1(C_j) = n-\lambda_j
\]
\end{lemma}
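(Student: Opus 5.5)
The second equivalence follows at once from \eqref{eq:lower-zero-identity} with \(s=1\): \(d_1(C_j)=n-M_{j,1}\). So the substance is the first equivalence, and I would prove it using the maximum-vanishing-number description of \(M_{j,1}\), valid because \(\lambda_j<n\):
\[
M_{j,1}=\max_{0\ne f\in\calL(\lambda_jQ)}\lvert\van(\langle f\rangle)\rvert .
\]

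I would first establish the universal upper bound \(M_{j,1}\le\lambda_j\). Take \(0\ne f\in\calL(\lambda_jQ)\) and set \(Z:=\van(\langle f\rangle)\). Then \(f\in\calL(\lambda_jQ-D_Z)\). This space is nonzero, so its degree is nonnegative, giving \(\lvert Z\rvert\le\lambda_j\). Equivalently, this is Theorem~\ref{thm:higher-rational-gonality} with \(s=1\), since \(\gamma_0(\curve)=0\).

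\emph{Forward direction.} Suppose \(\lambda_j\in\Phi_X\). Then there is \(U\subseteq X\) with \(\lvert U\rvert=\lambda_j\) and \(D_U\sim\lambda_jQ\). Choose \(f\) with \(\divv(f)=D_U-\lambda_jQ\).
\begin{itemize}
\item If \(\lambda_j\ge1\), Lemma~\ref{lem:full-extremizer-basics} gives \(f\in\calL(\lambda_jQ)\) whose zero set on \(X\) is exactly \(U\). Hence \(M_{j,1}\ge\lambda_j\), and equality follows from the upper bound.
\item If \(\lambda_j=0\), then \(j=0\) and \(M_{0,1}=0\) by Corollary~\ref{cor:row-boundary}(i), so the equality holds trivially.
\end{itemize}

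\emph{Converse.} Suppose \(M_{j,1}=\lambda_j\). Pick \(0\ne f\in\calL(\lambda_jQ)\) vanishing on a set \(Z\subseteq X\) with \(\lvert Z\rvert=\lambda_j\). Then \(E:=\divv(f)+\lambda_jQ-D_Z\) is effective and has degree \(\lambda_j-\lambda_j=0\), so \(E=0\). Therefore \(\divv(f)=D_Z-\lambda_jQ\), that is, \(D_Z\sim\lvert Z\rvert Q\). This gives \(Z\in\mathcal G_X\) and hence \(\lambda_j=\lvert Z\rvert\in\Phi_X\).

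The only delicate points are the degree-zero effective-divisor argument and the trivial case \(\lambda_j=0\). For the latter, note that \(U=\varnothing\) is always in \(\mathcal G_X\), so \(0\in\Phi_X\) and \(M_{0,1}=0\), consistent with the statement. I do not expect any real obstacle beyond these.
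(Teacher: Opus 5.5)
Your proof is correct and follows essentially the same route as the paper: the degree bound \(M_{j,1}\le\lambda_j\), a full extremizer for the forward direction, and the degree-zero effective divisor forcing \(\divv(f)=D_Z-\lambda_jQ\) for the converse. You also spell out two points the paper leaves implicit: the second equivalence via \(d_1(C_j)=n-M_{j,1}\), and the edge case \(\lambda_j=0\), where Lemma~\ref{lem:full-extremizer-basics} (stated for nonempty \(U\)) does not literally apply.
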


\begin{proof}
\textup{($\Rightarrow$)} Suppose \(\lambda_j\in\Phi_X\), so there
exists \(U\subseteq X\) with \(|U|=\lambda_j\) and \(D_U\sim\lambda_jQ\).
Choose \(f\) with \(\divv(f)=D_U-\lambda_jQ\); by
Lemma~\ref{lem:full-extremizer-basics}, \(f\in\calL(\lambda_jQ)\) and
its zero set on \(X\) is exactly \(U\).  Taking \(S:=\langle f\rangle\), a
one-dimensional subspace with \(\van(S)=U\), gives
\(M_{j,1}\ge|\van(S)|=\lambda_j\). Conversely, any nonzero
\(g\in\calL(\lambda_jQ)\) has at most \(\lambda_j\) zeros on \(X\), since
\(\deg\divv(g)=0\) and its only allowed pole is at \(Q\) with order at
most \(\lambda_j\); hence \(M_{j,1}\le\lambda_j\). Together,
\(M_{j,1}=\lambda_j\).

\textup{($\Leftarrow$)} Suppose \(M_{j,1}=\lambda_j\). Then some
one-dimensional \(S=\langle f\rangle\subseteq\calL(\lambda_jQ)\) satisfies
\(|\van(f)|=\lambda_j\); put \(U:=\van(f)\), so \(|U|=\lambda_j\). The
divisor of zeros of \(f\) on \(X\) is at least \(D_U\), of degree
\(\lambda_j\), while its only pole lies at \(Q\) with order at most
\(\lambda_j\). Since \(\deg\divv(f)=0\), both bounds must be attained
exactly: the pole at \(Q\) has order exactly \(\lambda_j\), and \(f\)
has no zeros outside \(U\) nor multiplicities beyond \(D_U\). Hence
\(\divv(f)=D_U-\lambda_jQ\), so \(D_U\sim\lambda_jQ\), giving
\(U\in\mathcal G_X\) and \(\lambda_j=|U|\in\Phi_X\).
\end{proof}

For subspaces \(S,T\subseteq\FF_q(\curve)\), write
\[
\langle ST\rangle
:=\operatorname{Span}_{\FF_q}\{fg:f\in S,\ g\in T\}.
\]
This is the function-space counterpart of componentwise code
multiplication; see~\cite{Randriambololona2015}.

\begin{theorem}
\label{thm:product-gonality-saturation}
Let \(S\subseteq\calL(aQ)\) and \(T\subseteq\calL(bQ)\) be nonzero
subspaces of dimensions \(s\) and \(t\), and suppose
\(a+b=\lambda_J<n\).  Then
\[
\dim\langle ST\rangle\ge s+t-1,
\qquad
\van(\langle ST\rangle)=\van(S)\cup \van(T),
\]
and consequently
\begin{equation}\label{eq:product-gonality-lower}
M_{J,s+t-1}\ge |\van(S)\cup \van(T)|.
\end{equation}
If, moreover,
\begin{equation}\label{eq:product-gonality-equality}
\gamma_{s+t-2}(\curve)
=a+b-|\van(S)\cup \van(T)|,
\end{equation}
then equality holds in~\eqref{eq:product-gonality-lower}.  In
particular, if \(\lambda_j\) is full and \(\lambda_J=\lambda_j+\lambda_k\),
taking \(S\) to be a full extremizer's span and \(T=\calL(\lambda_kQ)\)
gives
\begin{equation}\label{eq:full-level-consequence}
M_{J,k+1}\ge\lambda_j,
\qquad\text{with equality when }\gamma_k(\curve)=\lambda_k.
\end{equation}
\end{theorem}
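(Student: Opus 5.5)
The plan is to prove the three assertions in order: the dimension bound, then the vanishing-set identity, then the inequality \eqref{eq:product-gonality-lower} and its equality case. The special case \eqref{eq:full-level-consequence} then follows by substitution.

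\emph{Dimension bound.} I will work with pole orders at~\(Q\). Every nonzero element of \(S\subseteq\calL(aQ)\) has poles only at \(Q\), and a subspace of dimension \(s\) admits a basis \(f_1,\dots,f_s\) whose pole orders \(\alpha_1<\dots<\alpha_s\) are pairwise distinct; this follows by Gaussian elimination on the leading Laurent coefficients at~\(Q\). Likewise \(T\) has a basis \(g_1,\dots,g_t\) with distinct pole orders \(\beta_1<\dots<\beta_t\). The product \(f_ug_v\) has pole order exactly \(\alpha_u+\beta_v\), since valuations add. The integer sumset \(A+B\) of \(A=\{\alpha_u\}\) and \(B=\{\beta_v\}\) contains the chain
\[
\alpha_1+\beta_1<\alpha_2+\beta_1<\dots<\alpha_s+\beta_1<\alpha_s+\beta_2<\dots<\alpha_s+\beta_t,
\]
which has \(s+t-1\) elements. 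Functions with pairwise distinct pole orders are linearly independent, so \(\dim\langle ST\rangle\ge s+t-1\). I expect this step to be the main obstacle. Its difficulty is only bookkeeping: one must make precise that distinct valuations force independence.

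\emph{Vanishing sets.} Every \(P\in X\) is distinct from \(Q\), so all functions in \(S\), \(T\) and \(\langle ST\rangle\) are regular at~\(P\). If \(P\in\van(S)\), then every product \(fg\) vanishes at \(P\), hence so does every element of the span; the same holds for \(P\in\van(T)\). Conversely, suppose \(P\notin\van(S)\cup\van(T)\). Then there are \(f\in S\) and \(g\in T\) with \(f(P)\ne0\) and \(g(P)\ne0\), so \((fg)(P)\ne0\) and \(P\notin\van(\langle ST\rangle)\). This proves \(\van(\langle ST\rangle)=\van(S)\cup\van(T)\).

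\emph{Lower bound and equality.} Since \(\calL(aQ)\calL(bQ)\subseteq\calL((a+b)Q)=\calL(\lambda_JQ)\), the space \(\langle ST\rangle\) lies in \(\calL(\lambda_JQ)\), and \(\lambda_J<n\) places us in the injective zero diagram. Choose any subspace \(D\subseteq\langle ST\rangle\) of dimension \(s+t-1\), which exists by the dimension bound. Then \(\van(D)\supseteq\van(\langle ST\rangle)\), and the maximum-vanishing-number description of \(M_{J,s+t-1}\) gives \eqref{eq:product-gonality-lower}. For the equality case, Theorem~\ref{thm:higher-rational-gonality} with \(s+t-1\) in place of \(s\) gives \(M_{J,s+t-1}\le\lambda_J-\gamma_{s+t-2}(\curve)\). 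Under \eqref{eq:product-gonality-equality} this upper bound equals \(|\van(S)\cup\van(T)|\), which matches the lower bound.

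\emph{Full-level consequence.} Take \(S=\langle f\rangle\) with \(f\) a full extremizer of some \(U\) with \(|U|=\lambda_j\), so \(s=1\). By Lemma~\ref{lem:full-extremizer-basics}, \(f\in\calL(\lambda_jQ)\) and \(\van(S)=U\). Take \(T=\calL(\lambda_kQ)\), so \(t=k+1\), and \(\van(T)=\varnothing\) because \(1\in T\). Then \eqref{eq:product-gonality-lower} gives \(M_{J,k+1}\ge\lambda_j\). Condition \eqref{eq:product-gonality-equality} reads \(\gamma_k(\curve)=\lambda_j+\lambda_k-\lambda_j=\lambda_k\), which yields the stated equality.
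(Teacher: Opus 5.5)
Your proposal is correct and follows the paper's proof step for step. It uses the same chain of $s+t-1$ products with distinct pole orders at $Q$, the same no-zero-divisors argument for $\van(\langle ST\rangle)=\van(S)\cup\van(T)$, the reverse bound from Theorem~\ref{thm:higher-rational-gonality}, and the same specialization $s=1$, $t=k+1$. Your explicit choice of a subspace $D\subseteq\langle ST\rangle$ of dimension exactly $s+t-1$ with $\van(D)\supseteq\van(\langle ST\rangle)$ is slightly more careful than the paper, which speaks of an $(s+t-1)$-dimensional subspace whose vanishing set equals $\van(S)\cup\van(T)$.
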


\begin{proof}
Choose bases \(f_1,\ldots,f_s\) of \(S\) and \(g_1,\ldots,g_t\) of
\(T\), ordered by their pairwise distinct pole orders at~\(Q\).  Each
product \(f_ig_l\) then has pole order
\(\ord_Q(f_i)+\ord_Q(g_l)\); among the \(s+t-1\) products
\[
f_1g_1,\ldots,f_sg_1,f_sg_2,\ldots,f_sg_t,
\]
these pole orders are pairwise distinct by construction, so the
products are linearly independent.  This proves
\(\dim\langle ST\rangle\ge s+t-1\).

For the vanishing set, a point \(P\in X\) lies in \(\van(\langle
ST\rangle)\) exactly when every product \(fg\) (\(f\in S\), \(g\in T\))
vanishes at \(P\).  Since \(\FF_q\) has no zero divisors, \(f(P)g(P)=0\)
for all such pairs if and only if either \(f(P)=0\) for all \(f\in S\),
or \(g(P)=0\) for all \(g\in T\), that is,
\(P\in \van(S)\cup \van(T)\).  This proves the base-locus identity.

The two identities together give an \((s+t-1)\)-dimensional subspace of
\(\calL(\lambda_JQ)\) with vanishing set \(\van(S)\cup \van(T)\), which
proves~\eqref{eq:product-gonality-lower}.  Under
\eqref{eq:product-gonality-equality}, the  Theorem~\ref{thm:higher-rational-gonality} gives the reverse
inequality, proving equality.  The displayed consequence follows by
taking \(S\) to be the one-dimensional span of a full extremizer of
\(\lambda_j\) (so \(s=1\)) and \(T=\calL(\lambda_kQ)\) (so \(t=k+1\)),
using \(\van(S)\cup \van(T)\supseteq \van(S)\), which has size \(\lambda_j\).
\end{proof}

\begin{theorem}
\label{thm:full-support-closure}
Let \((\curve,Q,X)\) be AJ--Gorenstein, and let \(U\in\mathcal G_X\)
have \(\lambda_j=|U|<n\), which is a nongap by
Lemma~\ref{lem:full-levels-are-nongaps}.  By
Lemma~\ref{lem:full-criterion}, a full extremizer for \(U\) evaluates
to a word of weight \(n-\lambda_j\) in \(C_X(\lambda_jQ)\), attaining
the minimum-distance bound \(d(C_X(\lambda_jQ))\ge n-\lambda_j\).
Moreover:
\begin{enumerate}[label=\textup{(\roman*)},leftmargin=2.4em]
\item \(U\in\mathcal G_X\) implies
\(X\setminus U\in\mathcal G_X\);
\item if \(U,V\in\mathcal G_X\) and \(U\cap V=\varnothing\), then
\(U\sqcup V\in\mathcal G_X\).
\item If \(U_1,\ldots,U_r\in\mathcal G_X\) are nonempty and pairwise
disjoint, and \(a_\nu=|U_\nu|\), then every subset sum and its
complement belong to \(\Phi_X\):
\[
\Sigma(a_1,\ldots,a_r)
:=\left\{\sum_{\nu\in I}a_\nu:I\subseteq\{1,\ldots,r\}\right\}
\subseteq\Phi_X,
\qquad n-\Sigma(a_1,\ldots,a_r)\subseteq\Phi_X.
\]

\end{enumerate}
\end{theorem}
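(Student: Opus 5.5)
The argument is entirely at the level of principal divisors. The preamble claim follows from Lemma~\ref{lem:full-criterion}. Since \(U\in\mathcal G_X\) with \(|U|=\lambda_j<n\), we have \(M_{j,1}=\lambda_j\). Equivalently, by \eqref{eq:lower-zero-identity}, \(d_1(C_j)=n-\lambda_j\), and a full extremizer \(f\) for \(U\) realizes this weight. Indeed, its zero set on \(X\) is exactly \(U\) by Lemma~\ref{lem:full-extremizer-basics}, so \(\ev_X(f)\) has weight \(n-\lambda_j\). Lemma~\ref{lem:full-criterion} already shows that every nonzero \(g\in\calL(\lambda_jQ)\) has at most \(\lambda_j\) zeros on \(X\). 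Together with injectivity of \(\ev_X\), this gives \(d(C_X(\lambda_jQ))\ge n-\lambda_j\), so the bound is attained.

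For (i), I would use the Abel--Jacobi function \(\calF\) with \(\divv(\calF)=D_X-nQ\), together with a full extremizer \(f\) for \(U\), so \(\divv(f)=D_U-|U|Q\). Since \(D_X=D_U+D_{X\setminus U}\), the quotient \(\calF/f\) has divisor \(D_{X\setminus U}-(n-|U|)Q\). Hence \(D_{X\setminus U}\sim|X\setminus U|\,Q\) and \(X\setminus U\in\mathcal G_X\). If \(X\setminus U=\varnothing\), the statement reads \(0\sim0\), which holds trivially; if \(U=\varnothing\), it is condition (i) of Definition~\ref{def:aj-gorenstein}. So I would not need to assume \(U\ne\varnothing\).

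For (ii), take full extremizers \(f\) for \(U\) and \(g\) for \(V\). Then \(\divv(fg)=D_U+D_V-(|U|+|V|)Q\). Disjointness gives \(D_U+D_V=D_{U\sqcup V}\), a reduced divisor, and \(|U\sqcup V|=|U|+|V|\). Hence \(U\sqcup V\in\mathcal G_X\).

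For (iii), I would induct on \(|I|\) using (ii). Every subfamily of the \(U_\nu\) is still pairwise disjoint, so \(\bigsqcup_{\nu\in I}U_\nu\in\mathcal G_X\), with the empty union handled by \(D_\varnothing=0\sim0\cdot Q\). Its cardinality is \(\sum_{\nu\in I}a_\nu\), which therefore lies in \(\Phi_X\). Applying (i) to this union shows that its complement is also in \(\mathcal G_X\), so \(n-\sum_{\nu\in I}a_\nu\in\Phi_X\). None of these steps is a genuine obstacle. The only care needed is bookkeeping: making sure disjointness gives reduced divisors, so that the union really is of the form \(D_W\), and handling the empty and whole-set edge cases consistently with Definition~\ref{def:full-support-system}.
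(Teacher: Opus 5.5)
Your proposal is correct and follows the paper's proof essentially verbatim: the quotient $\calF/f_U$ for (i), the product of full extremizers for (ii), and iterating (ii) over subsets and then applying (i) to each union for (iii). Your extra remarks on the preamble's minimum-distance claim and on the empty-set and whole-set edge cases are accurate additions that the paper leaves implicit.
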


\begin{proof}
Choose \(\calF\) with \(\divv(\calF)=D_X-nQ\).  If \(f_U\) is a full
extremizer for \(U\), then
\[
\divv\!\left(\frac{\calF}{f_U}\right)
=D_{X\setminus U}-(n-|U|)Q,
\]
so \(X\setminus U\in\mathcal G_X\), proving~\textup{(i)}.  If
\(U\cap V=\varnothing\), then
\(\divv(f_Uf_V)=D_{U\sqcup V}-(|U|+|V|)Q\), so \(U\sqcup V\in
\mathcal G_X\), proving~\textup{(ii)}.  Iterating~\textup{(ii)} over
all subsets \(I\subseteq\{1,\ldots,r\}\), and then applying
\textup{(i)} to each resulting full support, gives~\textup{(iii)}.
\end{proof}

\subsection{Rank and flag propagation}
\label{subsec:rank-flag-propagation}

Throughout this subsection we fix, once and for all, a nested sequence
of bases: for each nongap \(\lambda_j<n\), a basis
\(\{1,f_1,\ldots,f_j\}\) of \(\calL(\lambda_jQ)\) extending the basis chosen
for \(\calL(\lambda_{j-1}Q)\).  Such a choice is always possible, since
\(\calL(\lambda_{j-1}Q)\subsetneq\calL(\lambda_jQ)\) has codimension one
for every AJ-Gorenstein triple.
The associated coordinate map is
\[
\phi_j:X\longrightarrow\FF_q^j,
\qquad
P\longmapsto(f_1(P),\ldots,f_j(P)).
\]
The zero row records maximal intersections of \(\phi_j(X)\) with
affine linear systems.

For \(d\ge0\) and \(j\ge d+1\), set
\[
A_j^{(d)}:=M_{j,j-d}.
\]
These are the entries on the \(d\)-th diagonal of the zero diagram.

\begin{theorem}
\label{thm:rank-flag-propagation}
For every \(d\ge0\):
\begin{enumerate}[label=\textup{(\roman*)},leftmargin=2.4em]
\item if \(\lambda_{j+1}<n\), then
\[
d+1\le A_{j+1}^{(d)}\le A_j^{(d)}.
\]
Consequently, for fixed~\(d\), the sequence \((A_j^{(d)})_j\) is
nonincreasing and eventually constant.
\item if \(M_{J,J-d}=d+1\) for some \(J\ge d+1\), then
\(M_{j,j-d}=d+1\) for every \(j\ge J\) with \(\lambda_j<n\).
\item Let \(S\subseteq X\), and suppose that the affine span of
\(\phi_{J+r}(S)\), under the nested coordinates fixed above, has
dimension~\(d\) for \(0\le r\le L\).  If \(M_{J,J-d}=|S|\), then
\[
M_{J+r,J+r-d}=|S|,
\qquad 0\le r\le L.
\]
\end{enumerate}
\end{theorem}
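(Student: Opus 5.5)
Part (i) follows from two earlier results. For the lower bound, I will use \eqref{eq:row-minimum} in row \(j+1\): \(M_{j+1,s}\ge j+2-s\), and at \(s=j+1-d\) this reads \(A^{(d)}_{j+1}\ge d+1\). For the upper bound, I will apply the interlacing of Theorem~\ref{thm:zero-profile-extremal}(i) to \(C=C_j\subseteq B=C_{j+1}\). By Lemma~\ref{lem:window-formula}(i) the dimensions are \(j+1\) and \(j+2\), since both nongaps are \(<n\). The left inequality \(M_{s+1}(B)\le M_s(C)\) with \(s=j-d\) gives \(M_{j+1,j+1-d}\le M_{j,j-d}\), i.e.\ \(A^{(d)}_{j+1}\le A^{(d)}_j\). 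The index \(s=j-d\) lies in \([1,k]\) precisely because \(j\ge d+1\). A nonincreasing sequence of integers bounded below by \(d+1\) is eventually constant; the sequence is finite anyway, as \(\lambda_j<n\).

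Part (ii) then follows from (i) by induction on \(j\). Assume \(A^{(d)}_j=d+1\). If \(\lambda_{j+1}<n\), then (i) squeezes \(d+1\le A^{(d)}_{j+1}\le A^{(d)}_j=d+1\).

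For part (iii), monotonicity from (i) gives \(M_{J+r,J+r-d}\le M_{J,J-d}=|S|\), so it remains to prove \(M_{J+r,J+r-d}\ge|S|\). For this I will construct a subspace of \(\calL(\lambda_{J+r}Q)\) of dimension \(J+r-d\) vanishing on \(S\). Write \(m=J+r\) and use the nested basis \(\{1,f_1,\dots,f_m\}\). The affine functions on \(\FF_q^m\) that vanish on \(\phi_m(S)\) are exactly those vanishing on its affine span, which has dimension \(d\). Inside the \((m+1)\)-dimensional space of affine functions these form a subspace of dimension \(m+1-(d+1)=m-d\); this count needs \(S\ne\varnothing\), so that the span is a genuine affine subspace. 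Pulling back through the identification \(a_0+\sum a_if_i\leftrightarrow\) the affine function \(a_0+\sum a_ix_i\), I get an \((m-d)\)-dimensional \(T\subseteq\calL(\lambda_mQ)\) with \(S\subseteq\van(T)\). Hence \(M_{m,m-d}\ge|S|\).

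Two edge cases need checking. First, the hypothesis \(m-d\le m+1\) holds trivially. Second, if \(S=\varnothing\) then \(M_{J,J-d}=0\), which contradicts the bound \(A^{(d)}\ge d+1\ge1\), so \(S\) is nonempty. The step I expect to be the main obstacle is the dimension count for the affine functions vanishing on a \(d\)-dimensional affine span. The subtle point is that the hypothesis is imposed for every \(r\), not just \(r=0\): new coordinates \(f_{J+1},\dots\) could raise the span dimension, and the assumption exactly prevents this. This is also why nested bases are needed, so that the maps \(\phi_m\) are compatible as \(r\) grows. Note that the identification of \(\calL(\lambda_mQ)\) with affine functions only requires the basis, not injectivity of \(\phi_m\).
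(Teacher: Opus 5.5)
Your proposal is correct and follows essentially the same route as the paper. Part (i) uses interlacing on \(C_j\subseteq C_{j+1}\) together with the row floor, part (ii) squeezes between the floor and monotonicity, and part (iii) bounds from above by chaining (i) and from below by pulling back the \((m-d)\)-dimensional space of affine functions vanishing on the \(d\)-dimensional span; your explicit count \(m+1-(d+1)=m-d\) and your nonemptiness check just make precise the step the paper states in one line.
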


\begin{proof}
\textup{(i)} Apply the interlacing inequality of
Theorem~\ref{thm:zero-profile-extremal}\textup{(i)} to
\(C_j\subsetneq C_{j+1}\) (of dimensions \(j+1\) and \(j+2\)) with
\(s=j-d\):
\[
A_{j+1}^{(d)}=M_{j-d+1}(C_{j+1})\;\le\;M_{j-d}(C_j)=A_j^{(d)}.
\]
By the row bound \(M_{j,s}\ge j+1-s\), applied to row \(j+1\) at
\(s=(j+1)-d\), \(A_{j+1}^{(d)}=M_{j+1,(j+1)-d}\ge d+1\). Thus
\((A_j^{(d)})_{j\ge d+1}\) is a nonincreasing sequence of integers
bounded below by \(d+1\), hence eventually constant. \textup{(ii)} In particular, once \(A_J^{(d)}=d+1\) for some
\(J\ge d+1\), monotonicity together with the floor \(d+1\) forces
\(A_j^{(d)}=d+1\) for every \(j\ge J\), that is,
\(M_{j,j-d}=d+1\).

\textup{(iii)} Fix \(0\le r\le L\). By hypothesis \(\phi_{J+r}(S)\)
lies in an affine subspace of \(\FF_q^{J+r}\) of codimension
\((J+r)-d\); this subspace is
\(\van(T)\) for some \(T\subseteq\calL(\lambda_{J+r}Q)\) with
\(\dim T=(J+r)-d\) and \(S\subseteq\van(T)\). Hence
\[
M_{J+r,(J+r)-d}\ge|\van(T)|\ge|S|.
\]
For the reverse inequality, \textup{(i)} gives
\(A_{J+i+1}^{(d)}\le A_{J+i}^{(d)}\) for \(0\le i<r\); chaining these,
\[
M_{J+r,(J+r)-d}=A_{J+r}^{(d)}\le\cdots\le A_J^{(d)}=M_{J,J-d}=|S|.
\]\end{proof}

\section{The Full Zero Diagram}\label{sec:full-diagram}

This section extends the zero diagram of
Section~\ref{sec:injective-diagram} to \(\lambda_j\le N\) by
reflection (Definition~\ref{def:zero-diagrams}), and shows that the
resulting full zero diagram determines the generalized Hamming
weights of \emph{every} one-point code \(C_X(hQ)\) with \(0\le h\le
N\), including those with \(h\ge n\), where \(\ev_X\) is no longer
injective and \(\dim C_X(hQ)<n\), namely the GHW diagram of
Definition~\ref{def:ghw-diagram}.  The mechanism is twisted duality,
together with the abstract calculus of
Theorem~\ref{thm:zero-profile-calculus}.

\subsection{Twisted duality}
\label{subsec:twisted-duality}

\begin{proposition}\label{prop:twisted-duality}
There exists \(\mathbf v\in(\FF_q^*)^n\) such that
\[
C_X(hQ)^{\perp_{\mathbf v}}=C_X((N-h)Q),
\qquad 0\le h\le N.
\]
\end{proposition}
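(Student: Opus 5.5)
The plan is to use the residue description of duals of AG codes, \(C_\calL(D_X,G)^\perp=C_\Omega(D_X,G)\) (Stichtenoth, Ch.~2), together with the standard fact that a differential \(\eta\) with simple poles and residue \(1\) at every \(P\in X\) turns \(C_\Omega(D_X,G)\) into a twisted evaluation code. We will build such an \(\eta\) from the Abel--Jacobi function \(\calF\) and the Gorenstein property. Concretely, take \(\eta:=d\calF/\calF\). Since \(\divv(\calF)=D_X-nQ\) and \(Q\notin X\), \(\eta\) has a simple pole with residue \(1\) at each \(P\in X\). Its only other pole is at \(Q\), also simple, with residue \(-n\). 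If \(p\mid n\) the pole at \(Q\) may disappear, which is harmless. Hence \(\divv(\eta)\ge -D_X-Q\).

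We do not use \(\eta\) directly, because we need the exact divisor \(\divv(\eta)\). Instead we choose a differential \(\omega\) with \(\divv(\omega)=(2g-2)Q\), which exists because \(H(Q)\) is symmetric and so \(K\sim(2g-2)Q\). Then \(\omega/\calF\) has divisor \((N)Q-D_X\), where \(N=n+2g-2\). This is exactly the divisor \(K\sim NQ-D_X\) noted after Definition~\ref{def:aj-gorenstein}. Set \(\mathbf v:=(\operatorname{res}_{P_i}(\omega/\calF))_i\). Since \(\omega/\calF\) has a simple pole at each \(P_i\), every entry is nonzero, so \(\mathbf v\in(\FF_q^*)^n\).

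With \(\mathbf v\) fixed, we show \(C_\Omega(D_X,hQ)=\mathbf v\star C_X((N-h)Q)\). Differentials in \(\Omega(hQ-D_X)\) are exactly \(f\,\omega/\calF\) with \(\divv(f)+NQ-D_X\ge hQ-D_X\), that is, \(f\in\calL((N-h)Q)\). Their residue vectors are then \(\mathbf v\star\ev_X(f)\), because \(f\) is regular at each \(P_i\).

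Combining with the duality theorem gives \(C_X(hQ)^\perp=\mathbf v\star C_X((N-h)Q)\). Multiplying by \(\mathbf v^{-1}\) yields \(C_X(hQ)^{\perp_{\mathbf v}}=C_X((N-h)Q)\). The same \(\mathbf v\) works for every \(h\in[0,N]\), as the statement requires.

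The main obstacle is to justify the duality \(C_\calL^\perp=C_\Omega\) in full generality, including \(h\ge n\) where \(\ev_X\) is not injective. If I prove it rather than cite it, the inclusion \(C_\Omega\subseteq C_\calL^\perp\) follows from the residue theorem applied to \(f\,g\,\omega/\calF\), whose poles lie only in \(X\) since \(f\in\calL(hQ)\) and \(g\in\calL((N-h)Q)\). Equality then needs a dimension count: we want \(\dim C_X(hQ)+\dim C_X((N-h)Q)=n\). By Lemma~\ref{lem:window-formula}(ii), this is the number of elements of \(H(Q)\) in \((h-n,h]\) plus the number in \((N-h-n,N-h]\). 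Using symmetry \(a\in H\iff 2g-1-a\notin H\), the second window \((2g-2-h-n,\,2g-2-h]\) maps bijectively onto the gaps in \([h-n+1,\,h+1)\)-shifted form, namely \((h-n,h]\) reflected to \([2g-1-h+\ldots]\). The two counts then add to the window length \(n\). Getting these endpoints exactly right is the delicate bookkeeping step.
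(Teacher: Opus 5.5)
Your proof is correct, and its core is the paper's. Your \(\omega/\calF\) is exactly a differential with divisor \(NQ-D_X\); the paper only asserts that one exists because \(NQ-D_X\sim K\), and your construction makes it explicit. In both arguments \(\mathbf v\) is its residue vector.

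You differ only in how equality is obtained. The paper proves \(C_X((N-h)Q)\subseteq C_X(hQ)^{\perp_{\mathbf v}}\) by the residue theorem applied to \(fg\eta\). It then closes the inclusion with a dimension count from Lemma~\ref{lem:window-formula}(ii) and the symmetry of \(H(Q)\). You instead identify \(\Omega(hQ-D_X)\) with \(\calL((N-h)Q)\cdot\omega/\calF\) and invoke the general duality \(C_\calL(D,G)^\perp=C_\Omega(D,G)\) of \cite{Stichtenoth2009}, Ch.~2. The proof of that theorem does the count once, via Riemann--Roch, with no hypothesis on \(\deg G\). So the case \(h\ge n\) is not an obstacle for you, and your route needs no semigroup bookkeeping. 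The paper's route stays internal to its own window formula, which already accounts for the kernel \(\calF\calL((h-n)Q)\). The \(d\calF/\calF\) detour can simply be dropped.

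If you do carry out the self-contained count, fix the second window. It is \((N-h-n,\,N-h]=(2g-2-h,\,n+2g-2-h]\), not \((2g-2-h-n,\,2g-2-h]\). The reflection \(a\mapsto 2g-1-a\) maps it bijectively onto \((h-n,h]\). Symmetry, in the form \(a\in H(Q)\iff 2g-1-a\notin H(Q)\) for all \(a\in\mathbb Z\) with negative integers counted as gaps, swaps nongaps and gaps. Hence
\[
\#\bigl(H(Q)\cap(2g-2-h,\,n+2g-2-h]\bigr)=n-\#\bigl(H(Q)\cap(h-n,h]\bigr),
\]
which is the required identity \(\dim C_X(hQ)+\dim C_X((N-h)Q)=n\).
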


\begin{proof}
Choose a rational differential \(\eta\) with
\(\divv(\eta)=NQ-D_X\).  It has a simple pole at every \(P_i\), and
\(v_i:=\operatorname{res}_{P_i}(\eta)\) is nonzero.  If
\(f\in\calL(hQ)\) and \(g\in\calL((N-h)Q)\), then the only possible
poles of \(fg\eta\) away from~\(Q\) are the simple poles at the points
of~\(X\).  The residue theorem gives
\[
0=\sum_{i=1}^n\operatorname{res}_{P_i}(fg\eta)
=\sum_{i=1}^n v_i f(P_i)g(P_i).
\]
Hence
\(C_X((N-h)Q)\subseteq C_X(hQ)^{\perp_{\mathbf v}}\).
By the Lemma~\ref{lem:window-formula} and the symmetry of \(H(Q)\), the two codes have
complementary dimensions, so the inclusion is an equality.
\end{proof}

We fix this \(\mathbf v\) for the remainder of the paper.

\subsection{The upper--lower block dictionary}
\label{subsec:upper-lower-correspondence}

\begin{definition}\label{def:upper-lower-blocks}
Let \(j\ge0\).
\begin{enumerate}[label=\textup{(\roman*)}]
\item The \emph{lower block~$j$} is the interval:
\(
\lambda_j\le h<\lambda_{j+1}.
\)
\item The \emph{upper block~$j$} is the reflected interval:
\(
N-\lambda_{j+1}<h\le N-\lambda_j,
\)
or, equivalently for an integer~\(h\),
\(
N-\lambda_{j+1}+1\le h\le N-\lambda_j.
\)
\end{enumerate}
\end{definition}

Every integer in a lower block gives the same Riemann--Roch space.
Twisted duality reflects that plateau into the corresponding upper
block.  Thus the two blocks are naturally indexed by the same zero row \(\mathcal M_j\).

\begin{figure}[ht]
  \centering
  \begin{tikzpicture}[xscale=1.0,yscale=1.0]
    \draw[->,thick] (-0.4,0) -- (11.4,0) node[right]{$h$};

    \fill[blue!12] (2,0) rectangle (3.6,0.5);
    \draw[blue!55!black] (2,0) rectangle (3.6,0.5);
    \node[blue!55!black,font=\small] at (2.8,0.85) {lower block $j$};

    \fill[orange!12] (7.4,0) rectangle (9,0.5);
    \draw[red!80!black] (7.4,0) rectangle (9,0.5);
    \node[red!80!black,font=\small] at (8.2,0.85) {upper block $j$};

    \draw (0,0.12) -- (0,-0.12) node[below,font=\small] {$0$};
    \draw (2,0.12) -- (2,-0.12) node[below,font=\small] {$\lambda_j$};
    \draw (3.6,0.12) -- (3.6,-0.12) node[below,font=\small,xshift=3mm] {$\lambda_{j+1}$};
    \draw (7.4,0.12) -- (7.4,-0.12) node[below,font=\small,xshift=-5mm] {$N-\lambda_{j+1}$};
    \draw (9,0.12) -- (9,-0.12) node[below,font=\small,xshift=5mm] {$N-\lambda_j$};
    \draw (11,0.12) -- (11,-0.12) node[below,font=\small] {$N$};

    \draw[dotted,gray] (8.2,0.5) -- (8.2,-0.75);
    \node[gray,font=\small] at (8.2,-1.0) {$n$};

    \draw[<->,thick,gray!70] (2.8,1.15) to[bend left=22] (8.2,1.15);
    \node[gray!70!black,font=\small] at (5.5,1.95) {reflection $h\mapsto N-h$};
  \end{tikzpicture}
  \caption{The two blocks attached to row $j$ on the pole-order axis
  $[0,N]$. Every integer in the lower block gives the same
  Riemann--Roch space, hence the same code $C_j$; twisted duality
  reflects that plateau onto the upper block, which may extend beyond
  $h=n$.}
  \label{fig:blocks}
\end{figure}

\begin{theorem}
\label{thm:block-dictionary}
Let \((\curve,Q,X)\) be AJ--Gorenstein and let
\(\lambda_{j+1}\le n\).  The single zero row \(\mathcal M_j\)
determines both reflected blocks as follows.
\begin{enumerate}[label=\textup{(\roman*)},leftmargin=2.4em]
\item If \(\lambda_j\le h<\lambda_{j+1}\), then
\[
C_X(hQ)=C_j,
\qquad
d_s(C_X(hQ))=n-M_{j,s},\qquad 1\le s\le j+1.
\]
\item If \(N-\lambda_{j+1}<h\le N-\lambda_j\), then
\(\dim C_X(hQ)=n-j-1\) and
\[
\{d_r(C_X(hQ)):1\le r\le n-j-1\}
=[1,n]\setminus\{M_{j,s}+1:1\le s\le j+1\}.
\]
\end{enumerate}
\end{theorem}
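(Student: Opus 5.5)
The plan is to treat the two parts separately. Part (i) is the plateau statement, and part (ii) is obtained from it by twisted duality followed by Wei duality.

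\emph{Part (i).} For \(\lambda_j\le h<\lambda_{j+1}\) there is no nongap in \((\lambda_j,h]\). By the identity \(\ell(hQ)=\lvert H(Q)\cap[0,h]\rvert\), this gives \(\ell(hQ)=\ell(\lambda_jQ)\). Since \(\calL(\lambda_jQ)\subseteq\calL(hQ)\), the two spaces coincide, and so \(C_X(hQ)=C_j\). Because \(\lambda_j<\lambda_{j+1}\le n\), Lemma~\ref{lem:window-formula}(i) makes \(\ev_X\) injective, so \(k_j=j+1\). The weight formula is then exactly \eqref{eq:lower-zero-identity}.

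\emph{Part (ii).} Put \(h':=N-h\). The hypothesis \(N-\lambda_{j+1}<h\le N-\lambda_j\) is equivalent to \(\lambda_j\le h'<\lambda_{j+1}\), and \(h'\ge\lambda_j\ge0\) together with \(h\le N\) shows \(0\le h,h'\le N\). Proposition~\ref{prop:twisted-duality}, applied with \(h'\) in place of \(h\), gives \(C_X(hQ)=C_X(h'Q)^{\perp_{\mathbf v}}=C_j^{\perp_{\mathbf v}}\) by part (i). Taking dimensions, \(\dim C_X(hQ)=n-(j+1)\). Next, \(C_X(hQ)^{\perp}=\mathbf v\star C_X(hQ)\) is, up to the support-preserving scaling, equal to \(C_j\). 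Hence \(d_s(C_X(hQ)^\perp)=d_s(C_j)=n-M_{j,s}\), using the remark after the definition of scaled duals or directly that monomial scaling preserves supports of subcodes.

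Now apply Theorem~\ref{thm:wei} to \(C:=C_X(hQ)\), with \(k=n-j-1\) and \(n-k=j+1\). It gives
\[
\{d_r(C)\}=[1,n]\setminus\{n+1-d_s(C^\perp):1\le s\le j+1\}=[1,n]\setminus\{M_{j,s}+1\},
\]
since \(n+1-(n-M_{j,s})=M_{j,s}+1\). As a consistency check, one could instead use Theorem~\ref{thm:zero-profile-calculus}(ii) to get \(\mathcal M(C)=[0,n-1]\setminus\{n-1-M_{j,s}\}\) and then convert with \(d_r=n-M_r\); this produces the same set.

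\emph{Main obstacle.} The only delicate point is the index bookkeeping. One must check that \(h'\) lands in the lower block, so that part (i) applies; this is where the hypothesis \(\lambda_{j+1}\le n\), rather than \(\lambda_j<n\), is needed. One must also identify \(C^\perp\) with \(C_j\) up to scaling in the right direction: the relation \(C=C_j^{\perp_{\mathbf v}}\) means \(C^\perp=\mathbf v\star C_j\) after applying \((v^{-1}\star C_j^\perp)^\perp=v\star C_j\). Everything else is direct substitution.
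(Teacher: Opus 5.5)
Your proof is correct and follows the paper's route. Part (i) uses the Riemann--Roch plateau plus injectivity, and part (ii) uses twisted duality (Proposition~\ref{prop:twisted-duality}) combined with Wei duality; the paper invokes the coweight form, Theorem~\ref{thm:zero-profile-calculus}(ii) (your ``consistency check''), where you apply Theorem~\ref{thm:wei} directly after identifying \(C^\perp=\mathbf v\star C_j\).

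One small bookkeeping fix: \(h'\ge0\) and \(h\le N\) are the same condition, so the bound \(h'\le N\) needed for twisted duality should come from \(h'<\lambda_{j+1}\le n\le N\). That step, together with \(\lambda_j<n\) for \(\dim C_j=j+1\), is where the hypothesis \(\lambda_{j+1}\le n\) actually enters; the membership \(\lambda_j\le h'<\lambda_{j+1}\) is pure algebra.
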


\begin{proof}
The lower interval contains no nongap after \(\lambda_j\), so its
Riemann--Roch spaces equal \(\calL(\lambda_jQ)\); injectivity and the
definition of \(M_{j,s}\) prove~\textup{(i)}.  If \(h\) lies in the
upper interval, then \(N-h\) lies in the lower one.  Twisted duality
(Proposition~\ref{prop:twisted-duality}) and
Theorem~\ref{thm:zero-profile-calculus} prove
\textup{(ii)}.
\end{proof}

\subsection{Zero-row reciprocity, prefix reflection, and forced tails}
\label{subsec:complementary-rows}

The reflection of zero rows is a one-point instance of the following
Riemann--Roch identity, which does not require the Castle hypotheses
we use later.

For a divisor \(G\), put
\[
m_G(e):=\max_{\substack{U\subseteq X\\ |U|=e}}\ell(G-D_U),
\qquad D_U:=\sum_{P\in U}P.
\]

\begin{lemma}
\label{lem:divisor-zero-reciprocity}
For every divisor \(G\) and \(0\le e\le n\),
\[
m_G(e)-m_{K+D_X-G}(n-e)=\deg G+1-g-e.
\]
If \((\curve,Q,X)\) is AJ--Gorenstein, then
\[
m_{hQ}(e)-m_{(N-h)Q}(n-e)=h+1-g-e.
\]
For \(h=\lambda_j<n\), moreover,
\[
m_{\lambda_j Q}(e)
=\#\{s\in\{1,\ldots,j+1\}:M_{j,s}\ge e\}.
\]
\end{lemma}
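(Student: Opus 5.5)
We prove the three identities in turn. The first is a pointwise Riemann--Roch computation, maximized over subsets. The second specializes the first to the AJ--Gorenstein setting. The third rewrites the threshold formula of Lemma~\ref{lem:coweight-threshold} in function-space language.

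\emph{Pointwise identity.} Fix \(U\subseteq X\) with \(|U|=e\) and put \(U^c:=X\setminus U\). Apply Riemann--Roch to \(G-D_U\):
\[
\ell(G-D_U)-\ell(K-G+D_U)=\deg G-e+1-g .
\]
Since \(D_X=D_U+D_{U^c}\), we have \(K-G+D_U=(K+D_X-G)-D_{U^c}\). The right-hand side depends only on \(e\), not on \(U\). As \(U\) ranges over the \(e\)-subsets of \(X\), \(U^c\) ranges bijectively over the \((n-e)\)-subsets. So maximizing \(\ell(G-D_U)\) over \(U\) is the same as maximizing \(\ell((K+D_X-G)-D_{U^c})\) over \(U^c\), with the constant difference preserved. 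This gives
\[
m_G(e)-m_{K+D_X-G}(n-e)=\deg G+1-g-e,
\]
exactly as in the proof of Theorem~\ref{thm:zero-profile-calculus}(i).

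\emph{AJ--Gorenstein specialization.} Under the AJ--Gorenstein hypotheses, Definition~\ref{def:aj-gorenstein} and the discussion after it give \(K+D_X\sim NQ\). Hence \(K+D_X-hQ\sim(N-h)Q\). Linearly equivalent divisors \(G'\sim G''\) satisfy \(\ell(G'-D_U)=\ell(G''-D_U)\) for every \(U\), since multiplication by a suitable function is an isomorphism. Therefore \(m_{K+D_X-hQ}=m_{(N-h)Q}\), and the second identity is the first one with \(G=hQ\) and \(\deg G=h\).

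\emph{Link to the zero row.} Let \(h=\lambda_j<n\). Then \(\calL(\lambda_jQ-D_U)\) is precisely the subspace of \(\calL(\lambda_jQ)\) of functions vanishing at every point of \(U\); the points of \(X\) are distinct from \(Q\), so there is no interaction with the pole. By Lemma~\ref{lem:window-formula}(i), \(\ev_X\) is injective on \(\calL(\lambda_jQ)\), so it maps this subspace isomorphically onto \(\{c\in C_j:c|_U=0\}\). Hence \(m_{\lambda_jQ}(e)=f_{C_j}(e)\). Lemma~\ref{lem:coweight-threshold}, together with \(k_j=j+1\), then gives \(f_{C_j}(e)=\#\{s\le j+1: M_{j,s}\ge e\}\).

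The only point needing care is this last identification. In general \(\ell(G-D_U)\) counts functions and not codewords, so injectivity (\(\lambda_j<n\)) is essential here. The rest is routine.
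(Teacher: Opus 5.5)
Your proposal is correct and takes essentially the same route as the paper: apply Riemann--Roch pointwise to \(G-D_U\), rewrite \(K+D_U-G=(K+D_X-G)-D_{X\setminus U}\), maximize over the complement bijection, specialize using \(K+D_X\sim NQ\), and then identify \(m_{\lambda_jQ}(e)=f_{C_j}(e)\) via injectivity of \(\ev_X\) before invoking Lemma~\ref{lem:coweight-threshold}. You are a bit more explicit than the paper on two points, namely the invariance of \(\ell(\,\cdot\,-D_U)\) under linear equivalence and why injectivity (\(\lambda_j<n\)) is needed for the last identification, and both additions are fine.
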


\begin{proof}
Recall that \(K\) is the canonical divisor fixed in
Section~\ref{subsec:one-point-codes}.  By the Riemann--Roch theorem,
\begin{align*}
    \ell(G-D_{U})&=\ell(K+D_{U}-G)+\deg(G-D_{U})+1-g\\
    &=\ell(K+D_{X}-D_{X\setminus U}-G)+\deg(G-D_{U})+1-g\\
    &=\ell(K+D_{X}-G-D_{X\setminus U})+\deg(G)-e+1-g
    \intertext{Taking maxima and the bijection \(U\longmapsto X\setminus U\) between the subsets of sizes \(e\) and \(n-e\) proves}
    m_G(e)&=m_{K+D_X-G}(n-e)+\deg(G)-e+1-g,
\end{align*}
Thus obtaining the first identity
\begin{align*}
    m_G(e)-m_{K+D_X-G}(n-e)&=\deg(G)+1-g-e.
    \intertext{Assume now that \((\curve, Q, X)\) is AJ--Gorenstein. Substituting \(G=hQ\) into the formula gives}
    m_{hQ}(e)-m_{K+D_X-hQ}(n-e)&=\deg(hQ)+1-g-e,
    \intertext{Since \(\deg(hQ)=h\) and the AJ--Gorenstein property guarantees that \(NQ\sim K+D_X\)}
    m_{hQ}(e)-m_{(N-h)Q}(n-e)&=h+1-g-e.
\end{align*}
Finally, let \(h=\lambda_j<n\).  For \(U\subseteq X\), since \(D_U\)
is reduced and \(Q\notin X\),
\[
\calL(\lambda_jQ-D_U)=\{f\in\calL(\lambda_jQ):f|_U=0\},
\]
so, under the identification \(\calL(\lambda_jQ)\cong C_j\) via
\(\ev_X\), we have \(m_{\lambda_jQ}(e)=f_{C_j}(e)\) in the notation of
Lemma~\ref{lem:coweight-threshold}.  That lemma, applied to \(C_j\),
gives directly
\[
m_{\lambda_j Q}(e) = \#\{s \in \{1, \ldots, j+1\} : M_{j,s} \ge e\}.
\qedhere
\]
\end{proof}

\begin{theorem}
\label{thm:upper-prefix-reflection}
Let \((\curve,Q,X)\) be AJ--Gorenstein, let \(j\ge0\) satisfy
\(\lambda_{j+1}\le n\), and let \(h\) belong to the upper block
associated with~\(j\), that is,
\[
N-\lambda_{j+1}<h\le N-\lambda_j.
\]
Set
\[
C^+:=C_X(hQ),
\qquad
k^+:=\dim C^+=n-j-1.
\]
Suppose that, for some \(1\le r\le j+1\), the first \(r\) entries
of the lower zero row are known:
\[
M_{j,1}=A_1>A_2>\cdots>A_r=M_{j,r}.
\]
Put
\[
d_\nu:=n-A_\nu,
\qquad 1\le \nu\le r.
\]
Then
\begin{equation}
\label{eq:upper-prefix-reflection}
\mathcal M(C^+)\cap[0,d_r-2]
=
[0,d_r-2]
\setminus
\{d_1-1,\ldots,d_{r-1}-1\}.
\end{equation}
Equivalently, the integers
\[
d_1-1,\ldots,d_{r-1}-1
\]
are precisely the values missing from the coweight profile of \(C^+\)
inside the interval \([0,d_r-2]\).  Moreover,
\begin{equation}
\label{eq:upper-prefix-next-missing}
d_r-1\notin\mathcal M(C^+).
\end{equation}

The set
\[
[0,d_r-2]
\setminus
\{d_1-1,\ldots,d_{r-1}-1\}
\]
has \(d_r-r\) elements.  If \(d_r>r\), write these elements in
decreasing order as
\[
B_1>B_2>\cdots>B_{d_r-r};
\]
then the last \(d_r-r\) entries of the ordered coweight profile of
\(C^+\) are
\begin{equation}
\label{eq:upper-prefix-ordered-tail}
M_{k^+-d_r+r+\mu}(C^+)=B_\mu,
\qquad
1\le\mu\le d_r-r.
\end{equation}
When \(d_r=r\), this assertion is void.
Consequently,
\[
d_{\,k^+-d_r+r+\mu}(C^+)=n-B_\mu,
\qquad
1\le\mu\le d_r-r.
\]
\end{theorem}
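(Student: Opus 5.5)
The plan is to reduce everything to Theorem~\ref{thm:zero-profile-calculus}(ii) applied to the lower code \(C_j\) and its twisted dual. For \(h\) in the upper block, \(N-h\) lies in the lower block \(\lambda_j\le N-h<\lambda_{j+1}\). By Proposition~\ref{prop:twisted-duality} and Theorem~\ref{thm:block-dictionary}(i), this gives
\[
C^+=C_X(hQ)=C_X((N-h)Q)^{\perp_{\mathbf v}}=C_j^{\perp_{\mathbf v}},
\]
so \(k^+=n-(j+1)\). Theorem~\ref{thm:zero-profile-calculus}(ii), applied to \(C_j\) with \(k=j+1\), yields
\[
\mathcal M(C^+)=[0,n-1]\setminus\{n-1-M_{j,s}:1\le s\le j+1\}=[0,n-1]\setminus\{d_s-1:1\le s\le j+1\},
\]
where I write \(d_s:=n-M_{j,s}\) for all \(s\). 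For \(\nu\le r\) this agrees with the \(d_\nu\) of the statement.

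Next I intersect this description with \([0,d_r-2]\). By Corollary~\ref{cor:strict-monotonicity}, \(M_{j,1}>\cdots>M_{j,j+1}\), so \(d_1<d_2<\cdots<d_{j+1}\). Hence for \(s\ge r\) we have \(d_s-1\ge d_r-1>d_r-2\), and these values never fall in \([0,d_r-2]\). For \(s<r\) we have \(0\le d_s-1\le d_r-2\). Only the first \(r-1\) excluded values therefore meet the interval, which is \eqref{eq:upper-prefix-reflection}. The excluded value with \(s=r\) gives \eqref{eq:upper-prefix-next-missing} directly. The count follows because \(d_1-1,\dots,d_{r-1}-1\) are \(r-1\) distinct elements of a set of size \(d_r-1\), leaving \(d_r-r\) elements.

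For \eqref{eq:upper-prefix-ordered-tail}, the ordered profile \(M_1(C^+)>\cdots>M_{k^+}(C^+)\) is strictly decreasing, again by Corollary~\ref{cor:strict-monotonicity}. Every element of \(\mathcal M(C^+)\) outside \([0,d_r-2]\) is at least \(d_r\), since \(d_r-1\) is missing. So the elements of \(\mathcal M(C^+)\cap[0,d_r-2]\) are exactly the \(d_r-r\) smallest entries, and they occupy the last \(d_r-r\) positions. Matching them in decreasing order gives \(M_{k^+-(d_r-r)+\mu}(C^+)=B_\mu\).

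I expect the main obstacle to be the bookkeeping that \(k^+\ge d_r-r\), i.e.\ that the indices are in range. This follows because \(\mathcal M(C^+)\) has exactly \(k^+\) elements and contains the \(d_r-r\) elements \(B_\mu\). The final statement about generalized Hamming weights is then Lemma~\ref{lem:zero-weight-bridge}(ii) applied to \(C^+\).
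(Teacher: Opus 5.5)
Your proof is correct and takes the same route as the paper. Both identify $C^+=C_j^{\perp_{\mathbf v}}$ via twisted duality, apply Theorem~\ref{thm:zero-profile-calculus}(ii), and use strict monotonicity of row $j$ to see that only $d_1-1,\ldots,d_{r-1}-1$ fall in $[0,d_r-2]$, while $d_r-1$ is excluded. Your explicit check that $k^+\ge d_r-r$, so that the indices in the ordered tail are in range, is bookkeeping the paper leaves implicit.
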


\begin{proof}
Since \(h\) belongs to the upper block associated with \(j\), one has
\[
\lambda_j\le N-h<\lambda_{j+1},
\qquad\text{so}\qquad
C_X((N-h)Q)=C_j.
\]
Twisted duality (Proposition~\ref{prop:twisted-duality}) gives
\(C^+=C_j^{\perp_{\mathbf v}}\).  Applying
Theorem~\ref{thm:zero-profile-calculus} to the pair
\((C_j,C^+)\),
\begin{equation}
\label{eq:upper-complete-zero-profile}
\mathcal M(C^+)
=
[0,n-1]\setminus
\{n-1-M_{j,s}:1\le s\le j+1\}.
\end{equation}
For \(1\le\nu\le r\), the reflected value of \(M_{j,\nu}=A_\nu\) is
\(n-1-A_\nu=d_\nu-1\); since \(A_1>\cdots>A_r\), these satisfy
\(d_1-1<\cdots<d_r-1\).  In particular \(d_r-1=n-1-M_{j,r}\) is
itself one of the reflected values excluded from \(\mathcal M(C^+)\)
in~\eqref{eq:upper-complete-zero-profile}, proving
\eqref{eq:upper-prefix-next-missing}.  If \(s>r\), the strict decrease
of the lower zero row (Corollary~\ref{cor:strict-monotonicity},
applied to $C_j$) gives
\(M_{j,s}<M_{j,r}=A_r\), so
\(n-1-M_{j,s}>d_r-1\); hence no entry beyond the \(r\)-th removes a
value from \([0,d_r-2]\).  Restricting
\eqref{eq:upper-complete-zero-profile} to \([0,d_r-2]\) proves
\eqref{eq:upper-prefix-reflection}.  The right-hand side has
\((d_r-1)-(r-1)=d_r-r\) elements.  If \(d_r>r\), every remaining
entry of \(\mathcal M(C^+)\) exceeds \(d_r-2\), so these \(d_r-r\)
elements are exactly the last \(d_r-r\) entries of the ordered
coweight profile, proving \eqref{eq:upper-prefix-ordered-tail}; if
\(d_r=r\) there is nothing to prove.
\end{proof}

\begin{corollary}
\label{cor:degree-forced-upper-tail}
Let \((\curve,Q,X)\) be AJ--Gorenstein, let \(j\ge0\) satisfy
\(\lambda_{j+1}\le n\), and let
\[
N-\lambda_{j+1}<h\le N-\lambda_j.
\]
Set \(C^+:=C_X(hQ)\), \(k^+:=\dim C^+=n-j-1\), and
\(L_j:=n-\lambda_j-1\).  Then
\[
[0,n-\lambda_j-2]\subseteq\mathcal M(C^+)
\]
(empty if \(\lambda_j=n-1\)), and these are exactly the last \(L_j\)
entries of the ordered coweight profile of \(C^+\):
\[
M_{s}(C^+)=k^+-s,
\qquad
\lambda_j-j+1\le s\le k^+.
\]
Consequently
\[
d_s(C^+)=j+1+s,
\qquad
\lambda_j-j+1\le s\le n-j-1.
\]
\end{corollary}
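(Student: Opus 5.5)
The plan is to feed a degree bound on the lower row into the complete reflection formula for the upper block. By twisted duality (Proposition~\ref{prop:twisted-duality}), $C^+=C_j^{\perp_{\mathbf v}}$. Theorem~\ref{thm:zero-profile-calculus}(ii) then gives, as in \eqref{eq:upper-complete-zero-profile},
\[
\mathcal M(C^+)=[0,n-1]\setminus\{n-1-M_{j,s}:1\le s\le j+1\}.
\]
So the only input needed is an upper bound on the lower row. Every nonzero $f\in\calL(\lambda_jQ)$ has at most $\lambda_j$ zeros on $X$, which is the argument in the proof of Lemma~\ref{lem:full-criterion}. Hence $M_{j,s}\le M_{j,1}\le\lambda_j$ for all $s$. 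Equivalently, one can invoke Theorem~\ref{thm:higher-rational-gonality} with $\gamma_{s-1}\ge0$.

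With this bound, every excluded value satisfies $n-1-M_{j,s}\ge n-1-\lambda_j$. Therefore no excluded value lies in $[0,n-\lambda_j-2]$, and this whole interval is contained in $\mathcal M(C^+)$. When $\lambda_j=n-1$ the interval is empty and there is nothing to prove.

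The interval has $L_j=n-\lambda_j-1$ elements. Every other element of $\mathcal M(C^+)$ exceeds $n-\lambda_j-2$. Since the ordered profile $M_1(C^+)>\cdots>M_{k^+}(C^+)$ is strictly decreasing (Corollary~\ref{cor:strict-monotonicity}), these $L_j$ values must occupy the last $L_j$ positions. Listed in decreasing order they are $n-\lambda_j-2,\dots,1,0$. In particular $M_{k^+}(C^+)=0$ and, more generally, $M_s(C^+)=k^+-s$ for $k^+-L_j+1\le s\le k^+$.

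The remaining step is index bookkeeping. We have $k^+-L_j+1=(n-j-1)-(n-\lambda_j-1)+1=\lambda_j-j+1$, which gives the stated range. Applying Lemma~\ref{lem:zero-weight-bridge}(ii) yields
\[
d_s(C^+)=n-(k^+-s)=n-(n-j-1)+s=j+1+s.
\]

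An alternative route is to apply Theorem~\ref{thm:upper-prefix-reflection} with $r=1$. There $d_1=n-M_{j,1}\ge n-\lambda_j$, and \eqref{eq:upper-prefix-reflection} gives $[0,d_1-2]\subseteq\mathcal M(C^+)$, which contains $[0,n-\lambda_j-2]$. The direct argument above is cleaner, so I would use it. The only point needing care is checking that the range $\lambda_j-j+1\le s$ is consistent, i.e.\ $\lambda_j-j+1\ge1$. This holds since $\lambda_j\ge j$, so the range is nonempty exactly when $L_j\ge1$. I expect no real obstacle beyond this bookkeeping.
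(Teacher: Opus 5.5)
Your proof is correct and is essentially the paper's argument. The paper bounds $M_{j,1}\le\lambda_j$ by the same degree count and then invokes Theorem~\ref{thm:upper-prefix-reflection} with $r=1$, which is your stated alternative and just a packaged form of your direct use of twisted duality plus Theorem~\ref{thm:zero-profile-calculus}\textup{(ii)}; the index bookkeeping $k^+-L_j+1=\lambda_j-j+1$ and the conversion to $d_s(C^+)=j+1+s$ then match the paper.
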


\begin{proof}
Every nonzero function in \(\calL(\lambda_jQ)\) has at most
\(\lambda_j\) distinct zeros on \(X\), since its zero divisor has
degree at most \(\lambda_j\).  Hence \(M_{j,1}\le\lambda_j\).  Apply
Theorem~\ref{thm:upper-prefix-reflection} with \(r=1\) and
\(A_1=M_{j,1}\), so that
\[
d_1=n-M_{j,1}\ge n-\lambda_j.
\]
It follows that
\[
[0,n-\lambda_j-2]
\subseteq
[0,d_1-2]
\subseteq
\mathcal M(C^+).
\]
This interval contains \(L_j=n-\lambda_j-1\) integers, and by
Theorem~\ref{thm:upper-prefix-reflection} every remaining element of
\(\mathcal M(C^+)\) exceeds \(d_1-2\ge n-\lambda_j-2\); hence these
\(L_j\) integers are exactly the last \(L_j\) entries of the ordered
coweight profile of \(C^+\), occupying positions starting at
\[
k^+-L_j+1
=(n-j-1)-(n-\lambda_j-1)+1
=\lambda_j-j+1.
\]
Therefore \(M_{s}(C^+)=k^+-s\) for \(\lambda_j-j+1\le s\le k^+\), and
\(d_s(C^+)=n-M_{s}(C^+)=n-k^++s=j+1+s\).
\end{proof}

\section{The Postcanonical Region and the Coverage Theorem}
\label{sec:postcanonical-region}

We now specialize to the postcanonical range \(2g\le\lambda_j<n\),
where the upper block of a postcanonical row reflects \emph{back into
the injective zero diagram itself}, at the complementary row
\(j^\vee=n-j-2\), rather than out to \(h\ge n\).

\subsection{The postcanonical profile}
\label{subsec:postcanonical-profile}

A single specialization of reciprocity controls both the exact
tail and its canonical boundary.

\begin{theorem}
\label{thm:postcanonical-profile}
Let \((\curve,Q,X)\) be AJ--Gorenstein of genus \(g\ge1\), let
\(2g\le\lambda_j<n\), and set
\[
a_j:=\lambda_j-2g+2=j-g+2.
\]
Then
\begin{equation}\label{eq:postcanonical-adaptive-tail}
M_{j,j+1-\rho}=\rho,
\qquad
0\le\rho\le
\min\{j,\max\{j-g,\gamma(\curve)-2\}\}.
\end{equation}

In particular, the corresponding generalized weights satisfy
\[
d_s(C_j)=n-j-1+s
\qquad(g+1\le s\le j+1).
\]
At the canonical boundary one has the exact alternative
\begin{equation}\label{eq:canonical-boundary-fullness}
M_{j,g}=
\begin{cases}
a_j,&\text{if \(a_j\) is full},\\
a_j-1,&\text{otherwise}.
\end{cases}
\end{equation}
Here a gap is, in particular, not full.
More generally, let \(k,w\ge0\), put
\(
s:=g+k-w,
\)
and suppose that \(w<n-a_j\) and \(1\le s\le j+1\).  If there exists
\(U\subseteq X\), with \(|U|=a_j+w\), such that
\begin{equation}\label{eq:residual-configuration}
\ell(D_U-a_jQ)\ge k+1,
\qquad
w+1<\gamma_{k+1}(\curve),
\end{equation}
then
\begin{equation}\label{eq:residual-gonality-exactness}
M_{j,g+k-w}=a_j+w.
\end{equation}
If \(n>2g\), the formula
\eqref{eq:postcanonical-adaptive-tail} determines at least
\begin{equation}\label{eq:postcanonical-cell-count}
E_{\mathrm{pc}} \ge\binom{n-2g+1}{2}
\end{equation}
of the \(\binom{n-g+1}{2}\) cells in the injective zero diagram.
Writing \(\rho_{\mathrm{pc}}\) for the resulting proportion of exact
cells of the injective zero diagram, one has
\begin{equation}\label{eq:postcanonical-coverage}
\rho_{\mathrm{pc}}
\ge
\frac{(n-2g)(n-2g+1)}{(n-g)(n-g+1)}.
\end{equation}
This lower bound exceeds \(1/2\) precisely when
\begin{equation}\label{eq:half-coverage-criterion}
n>\frac{6g-1+\sqrt{8g^2+1}}{2};
\end{equation}
in particular, the simpler condition \(n>5g\) suffices.
\end{theorem}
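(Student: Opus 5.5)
The whole statement should follow from one Riemann--Roch identity for the function \(f_{C_j}(e)=m_{\lambda_jQ}(e)\) of Lemma~\ref{lem:divisor-zero-reciprocity}. For \(U\subseteq X\) with \(|U|=e\), Riemann--Roch together with \(K\sim(2g-2)Q\) gives
\[
\ell(\lambda_jQ-D_U)=\ell(D_U-a_jQ)+\lambda_j+1-g-e .
\]
Here \(\lambda_j+1-g=j+1\) by \eqref{eq:postcanonical-index}. Maximizing over \(U\) yields
\[
f_{C_j}(e)=j+1-e+\mu(e),\qquad \mu(e):=\max_{|U|=e}\ell(D_U-a_jQ).
\]
Lemma~\ref{lem:coweight-threshold} then turns every statement about \(\mu\) into a statement about the zero row, since \(e\in\mathcal M_j\) if and only if \(f(e)-f(e+1)=1\).

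\emph{Tail.} For \(e<a_j\) the divisor \(D_U-a_jQ\) has negative degree, so \(\mu(e)=0\) and \(f(e)=j+1-e\). Hence \(0,1,\dots,a_j-2=j-g\) all lie in \(\mathcal M_j\). Since \(f\) counts the entries \(\ge e\), these values occupy the last positions, giving \(M_{j,j+1-\rho}=\rho\) for \(\rho\le j-g\). For the extension up to \(\gamma(\curve)-2\) in the range where that exceeds \(j-g\), I would combine two bounds. The first is the floor \eqref{eq:row-minimum}, \(M_{j,s}\ge j+1-s\). The second is the upper bound of Theorem~\ref{thm:higher-rational-gonality} at \(s=j+1-\rho\). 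That bound would use the fact that effective divisors of degree \(<\gamma(\curve)\) have \(\ell\le1\), so that in \(\mu(e)\) only the value \(1\) can occur in the window \(a_j\le e<a_j+\gamma(\curve)\).

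I expect this \(\gamma\)-extension to be the main obstacle. The tail is not forced by degree alone there, and one must exclude a jump \(\mu(e)=0\to1\) at the wrong place. I would first try proving, via \(\ell\le1\) below the gonality, that \(f(e)\) is strictly decreasing on that window. The weights \(d_s(C_j)=n-j-1+s\) for \(g+1\le s\le j+1\) then follow from \eqref{eq:lower-zero-identity}.

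\emph{Canonical boundary and residual exactness.} Assume \eqref{eq:residual-configuration}. The given \(U\) yields \(f(a_j+w)\ge j+1-a_j-w+k+1=g+k-w\). Every \(U'\) with \(|U'|=a_j+w+1\) has \(\deg(D_{U'}-a_jQ)=w+1<\gamma_{k+1}(\curve)\). So \(\ell(D_{U'}-a_jQ)\le k+1\) by the definition of \(\gamma_{k+1}\), after passing to a linearly equivalent effective divisor if \(\ell\ge1\). This gives \(f(a_j+w+1)\le g+k-w-1\). By Lemma~\ref{lem:coweight-threshold} we get \(M_{j,g+k-w}\ge a_j+w\) and \(M_{j,g+k-w}<a_j+w+1\), which is \eqref{eq:residual-gonality-exactness}.

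The alternative \eqref{eq:canonical-boundary-fullness} is the case \(k=w=0\), using the degree-zero fact that \(\ell(D_U-a_jQ)=1\) exactly when \(D_U\sim a_jQ\), i.e.\ \(a_j\) is full. If \(a_j\) is not full, then \(\mu(a_j)=0\) and \(\mu(a_j-1)=0\), so \(f(a_j-1)=g\). Moreover \(f(a_j)\le g-1\), because \(j+1-a_j=g-1\). This gives \(M_{j,g}=a_j-1\).

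\emph{Counting.} The injective rows are \(0\le j\le n-g-1\), because \(\lambda_j<n\) with \(n>2g\) forces \(\lambda_j=g+j\) there. Their sizes \(j+1\) sum to \(\binom{n-g+1}{2}\). The postcanonical rows \(g\le j\le n-g-1\) each contribute at least \(j-g+1\) exact cells, namely \(\rho=0,\dots,j-g\). Summing gives \(\sum_{t=1}^{n-2g}t=\binom{n-2g+1}{2}\), which proves \eqref{eq:postcanonical-cell-count} and \eqref{eq:postcanonical-coverage}. Finally, \(2(n-2g)(n-2g+1)>(n-g)(n-g+1)\) reduces to the quadratic \(n^2-(6g-1)n+7g^2-3g>0\). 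Its larger root is \(\tfrac{6g-1+\sqrt{8g^2+1}}{2}\), which gives \eqref{eq:half-coverage-criterion}. Since \(\sqrt{8g^2+1}<4g+1\), that root is below \(5g\), so \(n>5g\) suffices.
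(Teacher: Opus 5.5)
Apart from one piece, your argument is the paper's. The identity \(f_{C_j}(e)=j+1-e+\mu(e)\) is the reciprocity identity the paper uses; you obtain it from Riemann--Roch with \(K\sim(2g-2)Q\) directly rather than via \(N-\lambda_j=n-a_j\) and complementary sets. The residual--gonality step, the boundary alternative as the case \(k=w=0\), and the counting are all correct.

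The gap is the extension of \eqref{eq:postcanonical-adaptive-tail} from \(\rho\le j-g\) to \(\rho\le\gamma(\curve)-2\). You leave it open, and neither tool you propose closes it.

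\begin{itemize}
\item The bound \(\ell\le1\) for divisors of degree \(<\gamma(\curve)\), applied to \(D_U-a_jQ\), only gives \(\mu(e)\le1\) on \(a_j\le e<a_j+\gamma(\curve)\). It does not prevent the jump \(\mu=0\to1\) that removes a value from \(\mathcal M_j\), and strict decrease of \(f_{C_j}\) on that whole window need not hold.
\item Theorem~\ref{thm:higher-rational-gonality} at \(s=j+1-\rho\) gives \(M_{j,j+1-\rho}\le g+j-\gamma_{j-\rho}(\curve)\). You would then need \(\gamma_r(\curve)\ge g+r\) with \(r=j-\rho<g\). This fails already for \(j=g\), \(\rho=1\), a case in range when \(\gamma(\curve)\ge3\). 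There \(\ell((2g-2)Q)=g\) gives \(\gamma_{g-1}(\curve)\le2g-2\), so the bound is at least \(2\) while the true value is \(1\).
\end{itemize}

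The missing observation is that \(\mu(e)=0\) for every \(e<\gamma(\curve)\), not merely \(\mu(e)\le1\).

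Suppose \(0\ne f\in\calL(D_U-a_jQ)\) with \(|U|=e\). Then \(f\in\calL(D_U)\) and \(\ord_Q(f)\ge a_j\ge2\), so \(f\) is nonconstant. Thus \(\ell(D_U)\ge2\) for the effective rational divisor \(D_U\) of degree \(e\), which forces \(e\ge\gamma(\curve)\).

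Hence \(f_{C_j}(e)=j+1-e\) for all \(e<\max\{a_j,\gamma(\curve)\}\). For \(\rho\) in the stated range, \(\rho+1\) is such an \(e\), so \(f_{C_j}(\rho+1)=j-\rho\). This gives \(M_{j,j+1-\rho}\le\rho\), and the floor \eqref{eq:row-minimum} gives equality. This is exactly how the paper proves the adaptive tail.

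A minor point: to justify ``precisely when'' in \eqref{eq:half-coverage-criterion}, you should also note that the smaller root \(\frac{6g-1-\sqrt{8g^2+1}}{2}\) lies below \(2g\). Then only the larger root matters under \(n>2g\).
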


\begin{proof}
Put \(h=\lambda_j\) and \(a=a_j\).  Since \(2g\le\lambda_j\), the index
identity \eqref{eq:postcanonical-index} gives \(h=j+g\), and hence
\(N-h=n-a\); Lemma~\ref{lem:divisor-zero-reciprocity} then gives
\begin{equation}\label{eq:postcanonical-excess-identity}
m_{hQ}(e)=j+1-e+m_{(n-a)Q}(n-e).
\end{equation}
If \(F\subseteq X\) has size \(n-e\) and \(E=X\setminus F\), then
\((n-a)Q-D_F\sim D_E-aQ\).  Hence the complementary term in
\eqref{eq:postcanonical-excess-identity} vanishes if \(e<a\), by degree,
or if \(e<\gamma(\curve)\), since otherwise a section would give a
nonconstant function with polar divisor of degree at most~\(e\).

Put \(s=j+1-\rho\).  Equation~\eqref{eq:postcanonical-excess-identity}
at \(e=\rho\) gives \(m_{hQ}(\rho)\ge s\), while the preceding vanishing
at \(e=\rho+1\) gives \(m_{hQ}(\rho+1)=s-1\) throughout the stated
range.  The conjugate-profile formula yields
\eqref{eq:postcanonical-adaptive-tail} and its two translations.

At the canonical boundary, \(e=a-1\), the same identity gives
\(m_{hQ}(a-1)\ge g\).  At \(e=a\), the complementary term has degree
zero and equals one precisely when \(n-a\), equivalently \(a\) by
Theorem~\ref{thm:full-support-closure}\textup{(i)}, is full.  Finally,
at \(e=a+1\) that term is at most one, since a degree-one divisor on a
curve of positive genus has at most one section.  Hence \(M_{j,g}\) is
exactly \(a\) in the full case and \(a-1\) otherwise.

For residual--gonality exactness, put \(s=g+k-w\) and let \(U\) satisfy
\eqref{eq:residual-configuration}.  Since \(h=a+2g-2\) and
\(K\sim(2g-2)Q\), Riemann--Roch gives
\[
\ell(hQ-D_U)
=g-1-w+\ell(D_U-aQ)
\ge g+k-w=s.
\]
Thus \(M_{j,s}\ge a+w\).  If \(U'\subseteq X\) has size \(a+w+1\),
then \(D_{U'}-aQ\) has degree \(w+1\).  Were
\(\ell(D_{U'}-aQ)\ge k+2\), a linearly equivalent effective rational
divisor of degree \(w+1\) would contradict
\(w+1<\gamma_{k+1}(\curve)\).  Hence Riemann--Roch gives
\[
\ell(hQ-D_{U'})
=g-2-w+\ell(D_{U'}-aQ)
\le g+k-w-1=s-1.
\]
No \(s\)-dimensional subspace can therefore vanish on \(a+w+1\)
evaluation points, proving~\eqref{eq:residual-gonality-exactness}.

For the cell count and coverage, assume that \(n>2g\), and put
\(R:=n-2g\).  All \(g\) gaps lie below~\(n\), so the diagram has
\(n-g\) rows and \(\binom{n-g+1}{2}\) cells.  By
\eqref{eq:postcanonical-index} its postcanonical rows are exactly
those with \(j=g+x\), where \(0\le x\le R-1\).  The degree term
\(j-g=x\) in~\eqref{eq:postcanonical-adaptive-tail} determines at
least \(x+1\) cells in that row.  Summation gives
\[
E_{\mathrm{pc}}\ge\sum_{x=0}^{R-1}(x+1)
=\binom{R+1}{2},
\]
which proves~\eqref{eq:postcanonical-cell-count} and
\eqref{eq:postcanonical-coverage}.  Finally, the inequality
\(\rho_{\mathrm{pc}}>1/2\) is equivalent to
\[
n^2-6ng+7g^2+n-3g>0.
\]
Since \(n>2g\), the relevant root is the one displayed in
\eqref{eq:half-coverage-criterion}; that root is smaller than
\((3+\sqrt2)g<5g\).
\end{proof}

\subsection{Global postcanonical staircase}
\label{subsec:global-postcanonical-staircase}

\begin{theorem}
\label{thm:global-postcanonical-staircase}
Let \((\curve,Q,X)\) be AJ--Gorenstein, let \(2g\le\lambda_j<n\), and put
\(a:=a_j=\lambda_j-2g+2\).  Suppose there exists \(b\in\Phi_X\) with
\(a\le b\le n-1\), and set \(w_0:=b-a\).  Then, for every integer \(w\)
with
\[
w_0\ \le\ w\ \le\ \min\{g-1,\ \gamma(\curve)-2\},
\qquad a+w\le n-1,
\]
one has
\begin{equation}\label{eq:global-postcanonical-staircase}
M_{j,\,g-w}=a+w,
\qquad\text{equivalently}\qquad
d_{g-w}(C_j)=n-\lambda_j+2g-2-w.
\end{equation}
\end{theorem}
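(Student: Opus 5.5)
The plan is to derive the staircase from the residual--gonality exactness criterion of Theorem~\ref{thm:postcanonical-profile}, applied with $k=0$. With $k=0$, the criterion needs a set $U\subseteq X$ with $|U|=a+w$, $\ell(D_U-aQ)\ge1$ and $w+1<\gamma_1(\curve)=\gamma(\curve)$. It then gives $M_{j,g-w}=a+w$, subject to the side conditions $w<n-a$ and $1\le g-w\le j+1$.

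The side conditions follow directly from the stated range. First, $a+w\le n-1$ gives $w<n-a$. Next, $w\le g-1$ gives $g-w\ge1$. Finally, $g-w\le g\le j+1$ because $j\ge g$ by \eqref{eq:postcanonical-index}. The gonality inequality $w+1<\gamma(\curve)$ is exactly $w\le\gamma(\curve)-2$.

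The main step is to build $U$ from the full support of size $b$. Pick $V\in\mathcal G_X$ with $|V|=b$, so that $D_V\sim bQ$. Since $b=a+w_0\le a+w\le n-1<n$, we can enlarge $V$ to a set $U\supseteq V$ with $|U|=a+w$ by adding $w-w_0\ge0$ further points of $X$.

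Next we show $\ell(D_U-aQ)\ge1$, which is where fullness enters. Write $D_U-aQ=(D_V-bQ)+(b-a)Q+D_{U\setminus V}$. Then $D_U-aQ\sim w_0Q+D_{U\setminus V}$, which is an effective divisor, so $\ell(D_U-aQ)=\ell(w_0Q+D_{U\setminus V})\ge1$. Concretely, if $f_V$ is a full extremizer for $V$, then $1/f_V$ has divisor $bQ-D_V$ and lies in $\calL(D_U-aQ)$. The residual criterion then yields $M_{j,g-w}=a+w$.

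For the equivalent form, Lemma~\ref{lem:zero-weight-bridge} gives $d_{g-w}(C_j)=n-(a+w)$. Substituting $a=\lambda_j-2g+2$ turns this into $n-\lambda_j+2g-2-w$, as stated.

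I expect no serious obstacle. The only points needing care are the index bookkeeping in the side conditions of Theorem~\ref{thm:postcanonical-profile}, and the use of $b\le n-1$ to guarantee that the enlargement $U$ of $V$ exists. The case $w=w_0$ needs no enlargement, since then $U=V$.
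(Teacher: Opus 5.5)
Your proposal is correct and is essentially the paper's proof: take a full support of size $b$, pad it with $w-w_0$ further points of $X$ (possible since $a+w\le n-1$), observe that $D_U-aQ\sim w_0Q+D_{U\setminus V}$ is effective, and invoke the residual--gonality criterion of Theorem~\ref{thm:postcanonical-profile} with $k=0$. Your explicit check of $1\le g-w\le j+1$ via $j\ge g$ spells out a side condition that the paper only asserts.
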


\begin{proof}
Let \(F\in\mathcal G_X\) with \(|F|=b\), and choose
\(T\subseteq X\setminus F\) with \(|T|=w-w_0\); this is possible since
\(|F\sqcup T|=a+w\le n-1\le n-|F|+|F|\).  Put \(U:=F\sqcup T\), so
\(|U|=a+w\).  Since \(D_F\sim bQ\),
\[
D_U-aQ=(D_F-aQ)+D_T\sim w_0Q+D_T,
\]
which is effective, so \(\ell(D_U-aQ)\ge1\).  Apply
Theorem~\ref{thm:postcanonical-profile} with \(k=0\) and this~\(w\):
the hypotheses \(\ell(D_U-a_jQ)\ge1\) and \(w+1<\gamma_1(\curve)\)
(i.e.\ \(w\le\gamma(\curve)-2\)) of
\eqref{eq:residual-configuration} hold, as does \(1\le s=g-w\le j+1\)
in the stated range and \(w<n-a_j\) since \(a+w\le n-1\).  Equation
\eqref{eq:residual-gonality-exactness} gives
\(M_{j,g-w}=a+w\).
\end{proof}

\subsection{Postcanonical self-duality}
\label{subsec:postcanonical-self-duality}

This is the ``fold-back'' phenomenon: because
\(N-\lambda_j=n-\lambda_j+2g-2<n\) once \(\lambda_j>2g-2\), the upper
block attached to a postcanonical row does not reach a new range
above \(n\), but it is another row of the same injective zero
diagram, read in reverse.

\begin{theorem}
\label{thm:full-prefix-reflection}
Let \((\curve,Q,X)\) be AJ--Gorenstein, let \(i\ge0\) satisfy
\(\lambda_i<n\) and \(\lambda_{i+1}\le n\), put \(h:=\lambda_i\) and
\(\delta:=n-h\), and use \(\lambda_0=\gamma_0(\curve)=0\).  Suppose
that, for some \(r\ge1\),
\[
h-\lambda_k\in\Phi_X,
\qquad
\gamma_k(\curve)=\lambda_k,
\qquad 0\le k\le r.
\]
Then the first \(r+1\) entries of row~\(i\) are
\[
M_{i,k+1}=h-\lambda_k,
\qquad 0\le k\le r.
\]
\begin{enumerate}[label=\textup{(\roman*)},leftmargin=2.4em]
\item Let \(C^+:=C_X((N-h)Q)\).  Then
\[
\mathcal M(C^+)\cap[0,\delta+\lambda_r-2]
=[0,\delta+\lambda_r-2]
\setminus
\{\delta+\lambda_k-1:0\le k<r\},
\]
and the next value \(\delta+\lambda_r-1\) is also missing from
\(\mathcal M(C^+)\).
\item If, in addition, \(2g\le h<n\), then \(C^+=C_{i^\vee}\), where
\(i^\vee:=n-i-2=\delta+g-2\), and part~\textup{(i)} becomes the
identity
\[
\mathcal M_{i^\vee}\cap[0,\delta+\lambda_r-2]
=[0,\delta+\lambda_r-2]
\setminus
\{\delta+\lambda_k-1:0\le k<r\},
\qquad
\delta+\lambda_r-1\notin\mathcal M_{i^\vee}.
\]
\end{enumerate}
\end{theorem}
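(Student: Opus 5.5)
The proof has three stages: compute the first $r+1$ entries of row~$i$ by product-gonality saturation, feed them into Theorem~\ref{thm:upper-prefix-reflection}, and identify $C^+$ with a row of the injective diagram in the postcanonical case.

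\emph{First $r+1$ entries.} Fix $0\le k\le r$. By hypothesis $h-\lambda_k\in\Phi_X$, so there is $U\in\mathcal G_X$ with $|U|=h-\lambda_k$ and a full extremizer $f_U\in\calL((h-\lambda_k)Q)$. Since $\lambda_k\le h$ and $h=\lambda_i$, we have $(h-\lambda_k)+\lambda_k=\lambda_i<n$. I will apply Theorem~\ref{thm:product-gonality-saturation} with $a=h-\lambda_k$, $b=\lambda_k$, $S=\langle f_U\rangle$ and $T=\calL(\lambda_kQ)$, so $s=1$ and $t=k+1$. This gives
\[
M_{i,k+1}\ge|\van(S)\cup\van(T)|\ge h-\lambda_k .
\]
For the upper bound, Theorem~\ref{thm:higher-rational-gonality} gives $M_{i,k+1}\le\lambda_i-\gamma_k(\curve)=h-\lambda_k$, using $\gamma_k(\curve)=\lambda_k$. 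The case $k=0$ is just Lemma~\ref{lem:full-criterion}, which is consistent with $\lambda_0=\gamma_0=0$. These values are strictly decreasing in $k$, as they must be.

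\emph{Part (i).} Set $j:=i$, so $\lambda_{j+1}\le n$ holds. The value $N-h$ lies in the upper block of $j$, because $N-\lambda_{i+1}<N-h\le N-\lambda_i$. Apply Theorem~\ref{thm:upper-prefix-reflection} with $r+1$ known entries $A_\nu=h-\lambda_{\nu-1}$ for $1\le\nu\le r+1$. Then $d_\nu=n-A_\nu=\delta+\lambda_{\nu-1}$, so $d_\nu-1=\delta+\lambda_{\nu-1}-1$ and $d_{r+1}-2=\delta+\lambda_r-2$. Equation~\eqref{eq:upper-prefix-reflection} becomes exactly the displayed set identity, with excluded values $\{\delta+\lambda_k-1:0\le k<r\}$. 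Equation~\eqref{eq:upper-prefix-next-missing} gives that $\delta+\lambda_r-1\notin\mathcal M(C^+)$. The only care needed here is the index shift between the theorem's $r$ and the present $r+1$.

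\emph{Part (ii).} Assume $2g\le h<n$. By \eqref{eq:postcanonical-index}, $h=i+g$, hence
\[
N-h=n+2g-2-i-g=n-i+g-2 .
\]
Since $h\ge 2g$ and $N-h=\delta+2g-2$, we get $N-h\ge 2g-1$. I need $N-h$ to be a nongap with index $i^\vee$ and less than $n$.

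First, $N-h<n$ because $h>2g-2$. Second, $N-h\ge 2g$ requires $\delta\ge 2$. This holds: if $\delta=1$, then $\lambda_{i+1}=h+1=n$ lies in $H(Q)$, which is allowed, but then $N-h=2g-1$ is a gap. The resolution is that $N-h$ lies in the lower block of the nongap $2g-2=\lambda_{g-1}$. In general, the code $C_X((N-h)Q)$ equals $C_{j'}$ where $\lambda_{j'}$ is the largest nongap $\le N-h$.

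If $N-h\ge 2g$, then $j'=N-h-g=n-i-2=i^\vee$. The identity $i^\vee=\delta+g-2$ follows from $\delta=n-i-g$. Then $C^+=C_{i^\vee}$ and $\mathcal M(C^+)=\mathcal M_{i^\vee}$, so (i) transcribes directly into the stated identity. The edge case $\delta=1$, where $N-h=2g-1$ is a gap, still gives $C^+=C_{g-1}$, and $g-1=i^\vee$ since $i=n-g-1$. So the formula $i^\vee=n-i-2$ holds uniformly; I will state this explicitly.

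The main obstacle I expect is exactly this index bookkeeping around the canonical gap, together with correctly aligning the index $r$ with the prefix length $r+1$ in Theorem~\ref{thm:upper-prefix-reflection}. Everything else is a direct citation.
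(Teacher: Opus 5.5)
Your proposal is correct and follows the paper's proof. The first \(r+1\) entries come from product-gonality saturation (a full extremizer times \(\calL(\lambda_kQ)\)) matched against the higher-gonality bound; part~(i) is Theorem~\ref{thm:upper-prefix-reflection} applied with prefix length \(r+1\); and part~(ii) is an identification of \(C_X((N-h)Q)\) with a lower row. Your explicit treatment of the case \(\delta=1\) is a useful addition: there \(N-h=2g-1\) is a gap, but still \(C^+=C_X((2g-2)Q)=C_{g-1}=C_{i^\vee}\), a point the paper's one-line appeal to the symmetry of \(H(Q)\) leaves implicit. You should reword the sentence ``\(N-h\ge 2g\) requires \(\delta\ge 2\). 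This holds'', since \(\delta=1\) does occur and is exactly the case you then resolve.
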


\begin{proof}
For each \(k\), the full level \(h-\lambda_k\) and
\(\gamma_k(\curve)=\lambda_k\) give \(M_{i,k+1}=h-\lambda_k\)
by~\eqref{eq:full-level-consequence}
(Theorem~\ref{thm:product-gonality-saturation}), applied with
\(\lambda_j=h-\lambda_k\) and \(\lambda_J=h\).

\textup{(i)} Apply Theorem~\ref{thm:upper-prefix-reflection} to
\(C^+\), with \(A_\nu=M_{i,\nu}=h-\lambda_{\nu-1}\) for
\(1\le\nu\le r+1\).  Since
\(d_\nu=n-A_\nu=n-h+\lambda_{\nu-1}=\delta+\lambda_{\nu-1}\), the
excluded values are \(\delta+\lambda_k-1\) for \(0\le k<r\).  The
missing value immediately above this interval is
\[
d_{r+1}-1=\delta+\lambda_r-1,
\]
by \eqref{eq:upper-prefix-next-missing}.

\textup{(ii)} If \(2g\le h<n\), twisted duality
(Proposition~\ref{prop:twisted-duality}) and the symmetry of \(H(Q)\)
identify \(C^+=C_X((N-h)Q)\) with the lower code \(C_{i^\vee}\), so
\(\mathcal M(C^+)=\mathcal M_{i^\vee}\), and part~\textup{(i)}
specializes to the stated identity.
\end{proof}

\begin{definition}\label{def:ghw-diagram}
The \emph{GHW diagram} is the array
\[
\bigl(d_r(C_X(hQ))\bigr)_{0\le h\le N,\ 1\le r\le k(h)},
\]
together with the missing weights at positions where \(k(h)<n\),
indexed by \emph{every integer} \(h\in[0,N]\) rather than only by
nongaps.  The missing weights are recorded as auxiliary information
about a row, not as separate cells: the coverage of
Definition~\ref{def:coverage} counts only the pairs \((h,r)\) with
\(1\le r\le k(h)\).  This is the object determined, entry by entry, by
Theorem~\ref{thm:block-dictionary}, and the object counted directly
in the coverage theorem below.
\end{definition}

\begin{definition}\label{def:coverage}
Regard the GHW diagram (Definition~\ref{def:ghw-diagram}) as a finite
set of cells, each a pair \((h,r)\) with \(0\le h\le N\) and
\(1\le r\le k(h)\).  Let \(E\subseteq\mathcal D\) be the subset of
cells whose entry is determined exactly by the results of this paper.
The \emph{coverage} of the GHW diagram is
\[
\operatorname{Cov}_{\mathrm{full}}:=\frac{|E|}{|\mathcal D|},
\]
as computed in Corollary~\ref{cor:full-diagram-coverage} below.
\end{definition}

\begin{remark}
\(\operatorname{Cov}_{\mathrm{full}}\) is coverage in the sense of
Definition~\ref{def:coverage}, measured on the GHW diagram, which has
\(n(N+1)/2\) cells indexed by every integer \(h\).  The ratio
\(\rho_{\mathrm{pc}}\) (Theorem~\ref{thm:postcanonical-profile}) is an
analogous density of exact cells within the injective zero diagram,
which has \(\binom{n-g+1}{2}\) cells indexed by nongap \(j<n\); we
reserve the term \emph{coverage} for the GHW diagram alone, since the
injective zero diagram is an auxiliary bookkeeping array, not the
object of interest.  The two cell counts differ, so the two
ratios are not directly comparable term by term.
\end{remark}

\begin{corollary}
\label{cor:full-diagram-coverage}
Let \((\curve,Q,X)\) be AJ--Gorenstein with \(n>2g\), and put
\(N=n+2g-2\).  More than half of the generalized-weight positions of
the entire code flag \(C_X(0),C_X(Q),\ldots,C_X(NQ)\) are determined
exactly: the coverage of the GHW diagram (Definition~\ref{def:coverage}) satisfies
\[
\operatorname{Cov}_{\mathrm{full}}
\ge
\frac{n(n-1)+4g}{n(n+2g-1)}
>
\frac12.
\]
\end{corollary}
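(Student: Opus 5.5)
The plan is to count cells directly: compute \(|\mathcal D|\) in closed form, then collect enough exact cells from Corollary~\ref{cor:degree-forced-upper-tail} and Corollary~\ref{cor:row-boundary} to reach \(\tfrac12\bigl(n(n-1)+4g\bigr)\).

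\emph{Step 1: the total count.} By Proposition~\ref{prop:twisted-duality}, \(C_X(hQ)\) and \(C_X((N-h)Q)\) are scaled duals. Hence \(k(h)+k(N-h)=n\) for \(0\le h\le N\). Summing over \(h\) and pairing \(h\leftrightarrow N-h\) gives
\[
|\mathcal D|=\sum_{h=0}^N k(h)=\tfrac{n(N+1)}2=\tfrac{n(n+2g-1)}2 .
\]
So it suffices to show \(|E|\ge\tfrac12\bigl(n(n-1)+4g\bigr)\).

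\emph{Step 2: the upper tails.} Reparametrize by \(h'=N-h\in[0,n-1]\), i.e.\ \(h\in[2g-1,N]\). For such \(h'\), let \(\lambda_j\) be the largest nongap \(\le h'\). Since \(n\in H(Q)\) by the Abel--Jacobi condition, we have \(\lambda_{j+1}\le n\), so \(h\) lies in upper block~\(j\). Corollary~\ref{cor:degree-forced-upper-tail} then determines exactly the last \(L_j=n-1-\lambda_j\) generalized weights of \(C_X(hQ)\). Summing over \(h'\) gives
\[
\sum_{h'=0}^{n-1}\bigl(n-1-\lambda(h')\bigr)
=\frac{n(n-1)}2+\sum_{h'=0}^{n-1}\bigl(h'-\lambda(h')\bigr)
\ge\frac{n(n-1)}2+g .
\]
The last inequality holds because each of the \(g\) gaps lies in \([1,2g-1]\subseteq[0,n-1]\) and contributes \(h'-\lambda(h')\ge1\).

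\emph{Step 3: the remaining short codes.} For \(0\le h<2g-1\), which is disjoint from the range in Step~2, we have \(h<n\). Corollary~\ref{cor:row-boundary}(i) gives \(M_{j,j+1}=0\), so Theorem~\ref{thm:block-dictionary}(i) gives \(d_{k(h)}(C_X(hQ))=n\) exactly. This is one exact cell for each of these \(2g-1\) codes. Therefore
\[
|E|\ge\frac{n(n-1)}2+g+(2g-1)\ge\frac{n(n-1)}2+2g,
\]
using \(g\ge1\). Dividing by \(|\mathcal D|\) gives the displayed coverage bound.

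\emph{Step 4: the bound exceeds \(\tfrac12\).} We need \(2\bigl(n(n-1)+4g\bigr)>n(n+2g-1)\). This is equivalent to \(n(n-2g-1)+8g>0\), which holds since \(n\ge 2g+1\).

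\emph{Main obstacle.} The delicate point is bookkeeping, not estimates. One must check that the cells counted in Steps~2 and~3 lie in disjoint ranges of \(h\), which holds since \(h\ge 2g-1\) in Step~2 and \(h<2g-1\) in Step~3. One must also check that the hypothesis \(\lambda_{j+1}\le n\) of Corollary~\ref{cor:degree-forced-upper-tail} holds for every \(h'\le n-1\); this uses \(n\in H(Q)\). Finally, the gap-excess sum must genuinely contribute \(g\), which uses that all gaps lie below \(2g\le n-1\).
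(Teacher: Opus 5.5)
Your proof is correct and follows the paper's argument in substance. Both get the total $n(N+1)/2$ from $k(h)+k(N-h)=n$. Both force a terminal segment of the coweight profile in each long row by reflecting a degree bound through twisted duality, which sums to $n(n-1)/2$. Both add one all-one-word cell per short row.

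The only difference is bookkeeping. The paper uses the cruder bound $M_s(C_X(tQ))\le t$ on rows $2g\le h\le N$ and then counts $2g$ short rows. You instead invoke Corollary~\ref{cor:degree-forced-upper-tail} with the nongap bound $\lambda_j\le h'$ on rows $2g-1\le h\le N$. The gap excess there, at least $g$, more than pays for using only $2g-1$ short rows. Your count $n(n-1)/2+3g-1$ is in fact slightly stronger.
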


\begin{proof}
For \(0\le h\le N\), write \(k(h):=\dim C_X(hQ)\), and count the cells
of the GHW diagram (Definitions~\ref{def:ghw-diagram} and~\ref{def:coverage}) as the pairs
\[
(h,r),
\qquad
0\le h\le N,
\qquad
1\le r\le k(h),
\]
that is, as the generalized-weight positions of all codes
\(C_X(hQ)\) before the flag reaches \(\FF_q^n\).

Fix \(2g\le h\le N\) and put \(t:=N-h\).  Then \(0\le t\le n-2\), so
evaluation is injective on \(\calL(tQ)\).  Every nonzero function in
\(\calL(tQ)\) has at most \(t\) zeros on \(X\); hence every entry of
the coweight profile of \(C_X(tQ)\) is at most \(t\).  By twisted
duality,
\[
C_X(hQ)=C_X(tQ)^{\perp_{\mathbf v}},
\]
and Theorem~\ref{thm:zero-profile-calculus} gives
\[
\mathcal M\bigl(C_X(hQ)\bigr)
=
[0,n-1]\setminus
\{n-1-M_{s}(C_X(tQ)):1\le s\le k(t)\}.
\]
Since \(M_{s}(C_X(tQ))\le t\), every omitted value is at least
\(n-1-t\).  Therefore
\[
[0,n-t-2]\subseteq\mathcal M\bigl(C_X(hQ)\bigr).
\]
Using \(t=N-h=n+2g-2-h\), we obtain
\[
n-t-2=h-2g.
\]
Because a coweight profile is strictly decreasing
(Corollary~\ref{cor:strict-monotonicity}), the consecutive values
\(0,1,\ldots,h-2g\) form its terminal segment: for every \(2g\le h\le
N\),
\begin{equation}
\label{eq:full-diagram-forced-tail}
M_{k(h)-\rho}\bigl(C_X(hQ)\bigr)=\rho,
\qquad
0\le \rho\le h-2g.
\end{equation}

This forced tail in row \(h\) contains \(h-2g+1\) cells.  Hence
\[
\sum_{h=2g}^{N}(h-2g+1)
=
\sum_{\rho=1}^{n-1}\rho
=
\frac{n(n-1)}{2}.
\]
For each of the remaining \(2g\) rows \(0\le h<2g\), the all-one word
gives the row-boundary identity
\[
M_{k(h)}\bigl(C_X(hQ)\bigr)=0.
\]
These rows therefore contribute \(2g\) additional exact cells, so
\begin{equation}
\label{eq:full-diagram-exact-count}
E_{\mathrm{full}}
\ge
\frac{n(n-1)}{2}+2g.
\end{equation}

To count the total number of cells, twisted duality gives
\[
k(h)+k(N-h)=n.
\]
Summing over \(0\le h\le N\) yields
\[
2\sum_{h=0}^{N}k(h)=n(N+1),
\]
so the GHW diagram contains
\begin{equation}
\label{eq:full-diagram-total-count}
D_{\mathrm{full}}
=
\sum_{h=0}^{N}k(h)
=
\frac{n(N+1)}{2}
=
\frac{n(n+2g-1)}{2}
\end{equation}
cells.  Dividing \eqref{eq:full-diagram-exact-count} by
\eqref{eq:full-diagram-total-count} gives
\[
\operatorname{Cov}_{\mathrm{full}}
\ge
\frac{n(n-1)+4g}{n(n+2g-1)}.
\]

Finally, writing \(n=2g+r\) with \(r>0\),
\[
2\bigl(n(n-1)+4g\bigr)-n(n+2g-1)
=
r(2g+r-1)+6g>0,
\]
so \(\operatorname{Cov}_{\mathrm{full}}>1/2\).
\end{proof}

\section{Specialization to Castle Curves}
\label{sec:castle-specialization}

The results of
Sections~\ref{sec:injective-diagram}, \ref{sec:full-diagram}, and
\ref{sec:postcanonical-region} hold for any AJ--Gorenstein triple.  We
now record how little extra
structure is needed to put them to work: a Castle triple supplies, at no
additional cost, a coordinate tower with uniform fibers, a universal full
packing, and an explicit two-sided gonality estimate.  Throughout this section
\((\curve,Q,X)\) denotes a Castle triple in the sense of
Definition~\ref{def:castle-curve}.

\subsection{Castle curves are AJ-Gorenstein}
\label{subsec:castle-flag}

\begin{definition}[\cite{MunueraTorres2009}]\label{def:castle-curve}
The pointed curve \((\curve,Q)\) over \(\FF_q\) is a \emph{Castle curve} if
\(H(Q)\) is symmetric and
\[
|\curve(\FF_q)|=q\lambda_1+1.
\]
For a Castle curve we take \(X=\curve(\FF_q)\setminus\{Q\}\), so
\(n=q\lambda_1\).
\end{definition}

\begin{proposition}
\label{prop:castle-is-aj-gorenstein}
Every Castle triple is AJ-Gorenstein.
\end{proposition}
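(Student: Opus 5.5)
Since the Gorenstein condition (ii) is built into the definition of a Castle curve (\(H(Q)\) symmetric), the plan reduces to the Abel--Jacobi condition (i): we must produce \(\calF\in\FF_q(\curve)^*\) with \(\divv(\calF)=D_X-nQ\), where \(X=\curve(\FF_q)\setminus\{Q\}\) and \(n=q\lambda_1\).

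The natural candidate is built from a function \(x\in\calL(\lambda_1Q)\) with pole divisor exactly \(\lambda_1Q\), which exists because \(\lambda_1\in H(Q)\). The key steps are as follows.

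\textbf{Step 1.} Consider \(x:\curve\to\PP^1\), a rational map of degree \(\lambda_1\). It is totally ramified at \(Q\), so \(Q\) is the only point over \(\infty\). For each \(a\in\FF_q\), the rational points in the fibre \(x^{-1}(a)\) number at most \(\lambda_1\). Hence
\[
|\curve(\FF_q)|\le 1+q\lambda_1,
\]
which is the Lewittes bound.

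\textbf{Step 2.} Equality \(|\curve(\FF_q)|=q\lambda_1+1\) forces every fibre \(x^{-1}(a)\), \(a\in\FF_q\), to consist of exactly \(\lambda_1\) distinct rational points. These fibres are then unramified and reduced, so for each \(a\in\FF_q\)
\[
\divv(x-a)=\sum_{P:\,x(P)=a}P-\lambda_1Q .
\]

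\textbf{Step 3.} Set \(\calF:=x^q-x=\prod_{a\in\FF_q}(x-a)\). Summing the divisors from Step 2 over \(a\in\FF_q\) gives \(\divv(\calF)=D_X-q\lambda_1Q=D_X-nQ\), since every point of \(X\) lies in exactly one fibre. This verifies condition (i). One could equally argue directly: \(x^q-x\) has pole order \(q\lambda_1\) at \(Q\) and vanishes at each of the \(n\) points of \(X\), so by degree count these zeros are simple and exhaust its zero divisor.

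The main obstacle is the equality step, Step 2: one must ensure that \(x^q-x\) has no further zeros and no multiple zeros. This follows because its zero divisor has degree exactly \(q\lambda_1=n\) and already contains the \(n\) distinct points of \(X\).
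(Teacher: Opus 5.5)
Your proof is correct and follows essentially the paper's own argument. The Gorenstein condition is part of the Castle definition. For the Abel--Jacobi condition you take a function with pole divisor exactly \(\lambda_1Q\) and use \(|\curve(\FF_q)|=q\lambda_1+1\) to force every fibre over \(\FF_q\) to consist of exactly \(\lambda_1\) simple rational zeros, which gives \(\divv(x^q-x)=D_X-nQ\); you simply spell out the Lewittes-bound count and the factorization \(x^q-x=\prod_{a\in\FF_q}(x-a)\) that the paper leaves implicit.
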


\begin{proof}
Symmetry of \(H(Q)\) is condition~\textup{(ii)} of
Definition~\ref{def:aj-gorenstein} verbatim.  For condition~\textup{(i)},
let \(u\in\calL(\lambda_1Q)\setminus\calL((\lambda_1-1)Q)\); since
\(|\curve(\FF_q)|=q\lambda_1+1\) and \(u\) has a single pole of order
\(\lambda_1\) at \(Q\), every fiber of \(u:X\to\FF_q\) has size exactly
\(\lambda_1\).  Consequently \(\calF:=u^q-u\) vanishes exactly once on
each fiber, has no other zeros or poles away from \(Q\), and
\(\ord_Q(\calF)=-q\lambda_1=-n\).  Hence \(\divv(\calF)=D_X-nQ\), which
is condition~\textup{(i)}.
\end{proof}

By Lemma~\ref{lem:window-formula}, \(\dim C_X(\lambda_jQ)=j+1\) for
every nongap \(\lambda_j<n\), so the evaluation maps are injective along
the entire flag and
\[
C_0\subsetneq C_1\subsetneq\cdots\subsetneq C_j\subsetneq\cdots
\]
is strict; we refer to this as the \emph{strict Castle flag}.

\subsection{Cartesian rows and the universal fiber packing}
\label{subsec:castle-cartesian}

The single-map statement of Theorem~\ref{thm:cartesian-affine-block}
extends along an entire tower of coordinate maps once the AJ-Gorenstein
triple is Castle.

\begin{theorem}
\label{thm:cartesian-tower}
Let \((\curve,Q,X)\) be Castle, and suppose that, for some \(T\), the
map \(\phi_T:X\to\FF_q^T\) is surjective with uniform fibers.  Then for
every \(0\le j\le T\),
\[
M_{j,s}=\frac{n}{q^s}\quad(1\le s\le j).
\]
\end{theorem}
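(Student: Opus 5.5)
The plan is to reduce each row \(j\le T\) to Theorem~\ref{thm:cartesian-affine-block}: show that the truncated coordinate map \(\phi_j\) is itself surjective with uniform fibers, and compute the fiber size.

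\textbf{Step 1: \(\phi_j\) is the truncation of \(\phi_T\).} Because the bases were chosen nested (Subsection~\ref{subsec:rank-flag-propagation}), \(\phi_j=\pi_j\circ\phi_T\), where \(\pi_j:\FF_q^T\to\FF_q^j\) projects onto the first \(j\) coordinates. This needs \(\lambda_T<n\), which is implicit in the hypothesis since \(\phi_T\) is only defined on the injective range. Then \(\lambda_j<n\) for all \(j\le T\).

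\textbf{Step 2: surjectivity and uniform fibers of \(\phi_j\).} The map \(\pi_j\) is surjective and every fiber has exactly \(q^{T-j}\) points. If \(\phi_T\) has constant fiber size \(\rho_T=n/q^T\), then every fiber of \(\phi_j\) has size \(\rho_T q^{T-j}=n/q^j\). So \(\phi_j\) is surjective with constant fiber size \(\rho_j=n/q^j\).

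\textbf{Step 3: apply the affine-block theorem.} A Castle triple is AJ--Gorenstein by Proposition~\ref{prop:castle-is-aj-gorenstein}. Theorem~\ref{thm:cartesian-affine-block} with \(t=j\) then gives
\[
M_{j,s}=\rho_j\,q^{j-s}=\frac{n}{q^j}\,q^{j-s}=\frac{n}{q^s},\qquad 1\le s\le j.
\]
For \(j=0\) the range of \(s\) is empty, so that case holds vacuously.

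I expect no real obstacle. The only point to check is the compatibility of \(\phi_j\) with \(\phi_T\), which depends on the nested-basis convention. If a different basis were used for \(\phi_T\), the remark after the definition of the coordinate map still applies: changing basis composes \(\phi_j\) with an affine bijection. This preserves surjectivity and the uniformity of fibers. It also maps \(\calL(\lambda_jQ)\) into the span of the first \(j\) coordinates of \(\phi_T\) up to an affine change, so the truncation argument goes through.
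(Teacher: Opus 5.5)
Your proposal is correct and follows the paper's proof: both use the nested bases to realize \(\phi_j\) as the projection of \(\phi_T\) onto its first \(j\) coordinates, count each fiber of \(\phi_j\) as a union of \(q^{T-j}\) fibers of \(\phi_T\) (so it has size \(n/q^j\)), and apply Theorem~\ref{thm:cartesian-affine-block} with \(\rho=n/q^j\). You also check explicitly two hypotheses the paper leaves implicit: that a Castle triple is AJ--Gorenstein, and that \(\lambda_j\le\lambda_T<n\).
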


\begin{proof}
For \(1\le j\le T\), \(\phi_j\) is the projection of \(\phi_T\) onto
its first \(j\) coordinates; each fiber of \(\phi_j\) is a union of
\(q^{T-j}\) fibers of \(\phi_T\), of size \(n/q^j\).  Apply
Theorem~\ref{thm:cartesian-affine-block} with \(\rho=n/q^j\); the case
\(j=0\) is immediate.  Together with \(M_{j,j+1}=0\)
(Corollary~\ref{cor:row-boundary}), this proves the claim.
\end{proof}

We call the largest such \(T\) the \emph{Cartesian grade} of the
Castle triple.  Every Castle triple has grade at least~\(1\), since
\(\phi_1=x\) already has uniform fibers of size~\(\lambda_1\).
Uniform fibers are essential: absence of low-degree relations alone
does not imply the displayed intersection counts.

The Castle function itself, rather than a coordinate tower, also
produces a full packing of zero sets and hence exact first-column
entries far beyond the Cartesian grade.

\begin{proposition}
\label{prop:universal-fiber-packing}
Let \((\curve,Q,X)\) be Castle and choose a
Castle function \(x\in\calL(\lambda_1 Q)\).  For \(a\in\FF_q\), let
\[
F_a:=\{P\in X:x(P)=a\}.
\]
The fibers \(F_a\) are pairwise disjoint, have size~\(\lambda_ 1\), and satisfy
\(\divv(x-a)=D_{F_a}-\lambda_1Q\); hence they form a full packing of type
\((\lambda_1,\ldots,\lambda_1)\), so \(t\lambda_1\in\Phi_X\) for
\(1\le t\le q-1\).
\end{proposition}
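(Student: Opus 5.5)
The plan follows the structure of the proof of Proposition~\ref{prop:castle-is-aj-gorenstein}. We fix the Castle function \(x\in\calL(\lambda_1Q)\setminus\calL((\lambda_1-1)Q)\), so \(x\) has a single pole, of order exactly \(\lambda_1\), at \(Q\). For each \(a\in\FF_q\), the function \(x-a\) has the same pole divisor \(\lambda_1Q\). Hence its zero divisor is effective of degree \(\lambda_1\) and is supported away from \(Q\).

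The key counting step is to show that every fiber has size exactly \(\lambda_1\) and that the zeros are simple. Each rational zero of \(x-a\) lies in \(F_a\), so \(|F_a|\le\lambda_1\) for every \(a\). The fibers are pairwise disjoint because \(x\) takes a single value at each point, and their union is all of \(X\) because \(x(P)\in\FF_q\) for rational \(P\ne Q\). Summing over the \(q\) values of \(a\) gives \(n=\sum_a|F_a|\le q\lambda_1=n\), using the Castle condition \(|\curve(\FF_q)|=q\lambda_1+1\). This forces \(|F_a|=\lambda_1\) for every \(a\). Then the degree-\(\lambda_1\) zero divisor of \(x-a\) is at least \(D_{F_a}\), which already has degree \(\lambda_1\), so it equals \(D_{F_a}\). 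This gives \(\divv(x-a)=D_{F_a}-\lambda_1Q\). I expect this degree-counting step to be the only real content, and it is the same argument already used for \(\calF=x^q-x\).

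From this divisor identity we get \(D_{F_a}\sim\lambda_1Q\), so each \(F_a\in\mathcal G_X\) with \(x-a\) as a full extremizer. The family \(\{F_a\}_{a\in\FF_q}\) is therefore a full packing of type \((\lambda_1,\ldots,\lambda_1)\) with \(q\) parts. Applying Theorem~\ref{thm:full-support-closure}\textup{(ii)}, or part \textup{(iii)} directly, to any \(t\) of these disjoint sets shows that their union lies in \(\mathcal G_X\), so \(t\lambda_1\in\Phi_X\). Concretely, the product \(\prod_{a\in A}(x-a)\) with \(|A|=t\) realizes the union. For \(1\le t\le q-1\) we have \(t\lambda_1<n\), which keeps these levels inside the injective range where fullness is defined. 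The complements \(n-t\lambda_1=(q-t)\lambda_1\) give nothing new, so the stated range is exactly what the subset sums produce.
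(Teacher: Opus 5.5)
Your proof is correct and follows essentially the same route as the paper: each fiber \(F_a\) is a full support with extremizer \(x-a\), and the levels \(t\lambda_1\) then come from disjoint unions via Theorem~\ref{thm:full-support-closure}. The only difference is that you spell out the counting step (\(|F_a|\le\lambda_1\) and \(\sum_a|F_a|=n=q\lambda_1\), hence equality and \(\divv(x-a)=D_{F_a}-\lambda_1Q\)), which the paper leaves implicit as ``by construction,'' relying on the fiber-size claim stated in the proof of Proposition~\ref{prop:castle-is-aj-gorenstein}.
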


\begin{proof}
Each \(F_a\) is a full support by construction, so by
Theorem~\ref{thm:full-support-closure}\textup{(iii)} every subset sum
\(t\lambda_1\) (\(1\le t\le q-1\)) lies in \(\Phi_X\); by
Lemma~\ref{lem:full-criterion}, \(M_{j_t,1}=t\lambda_1\).
\end{proof}

Likewise, if \(\phi_{j_0}\) separates \(X\) for some \(j_0\), then
\(M_{j_0,j_0}=1\) by Corollary~\ref{cor:row-boundary}, and
Theorem~\ref{thm:rank-flag-propagation}\textup{(ii)} (with \(d=0\))
propagates \(M_{j,j}=1\) to every \(j\ge j_0\).

\subsection{Gonality for Castle curves}
\label{subsec:castle-gonality}

For a Castle triple, the Lewittes point count converts
Theorem~\ref{thm:higher-rational-gonality} into a two-sided estimate
on the gonality itself, independent of any zero-count computation.

\begin{proposition}
\label{prop:castle-gonality}
Let \((\curve,Q)\) be Castle over \(\FF_q\).  Then
\[
\left\lceil\frac{q\lambda_1+1}{q+1}\right\rceil
\le\gamma(\curve)\le \lambda_1;
\]
in particular, \(\gamma(\curve)=\lambda_1\) whenever \(\lambda_1\le q+1\).
\end{proposition}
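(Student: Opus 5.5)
The upper bound is immediate: the Castle function \(x\in\calL(\lambda_1Q)\setminus\calL((\lambda_1-1)Q)\) is a nonconstant function whose polar divisor is \(\lambda_1Q\). This divisor is an effective \(\FF_q\)-rational divisor \(A\) of degree \(\lambda_1\) with \(\ell(A)\ge2\), so \(\gamma(\curve)=\gamma_1(\curve)\le\lambda_1\).

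For the lower bound I would use a point count on a gonal morphism. Let \(A\ge0\) be rational of degree \(\gamma:=\gamma(\curve)\) with \(\ell(A)\ge2\), and pick a nonconstant \(z\in\calL(A)\). Then \(z\) defines a morphism \(\curve\to\PP^1\) of degree \(\deg(z)_\infty\le\gamma\). Every rational point of \(\curve\) maps to one of the \(q+1\) rational points of \(\PP^1\). The fiber over each point of \(\PP^1\) is a divisor of degree \(\deg z\), so it contains at most \(\deg z\le\gamma\) rational points of \(\curve\). Hence
\[
q\lambda_1+1=|\curve(\FF_q)|\le(q+1)\gamma,
\]
and since \(\gamma\) is an integer, \(\gamma\ge\lceil(q\lambda_1+1)/(q+1)\rceil\).

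I expect the only delicate point to be that \(z\) need not have polar divisor exactly \(A\), and that fibers may carry multiplicities. Both only help: the degree of \(z\) is at most \(\deg A\), and a fiber of degree \(\deg z\) contains at most that many distinct rational points. No appeal to Theorem~\ref{thm:higher-rational-gonality} is actually needed, though one could phrase the fiber bound through it.

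For the final claim, suppose \(\lambda_1\le q+1\). Write
\[
\frac{q\lambda_1+1}{q+1}=\lambda_1-\frac{\lambda_1-1}{q+1}.
\]
If \(\lambda_1\ge2\), then \(0<(\lambda_1-1)/(q+1)\le q/(q+1)<1\), so the ceiling equals \(\lambda_1\). If \(\lambda_1=1\), the expression equals \(1\) exactly. In either case the two bounds coincide, so \(\gamma(\curve)=\lambda_1\). (Since \(g\ge1\), in fact \(\lambda_1\ge2\), but the computation covers both cases anyway.)
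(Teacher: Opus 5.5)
Your proposal is correct and follows the paper's proof essentially verbatim. The Castle function gives the upper bound, and counting rational points in the fibres of a gonal map over the $q+1$ rational points of $\PP^1$ gives the lower bound. The final equality comes from your ceiling computation, which is equivalent to the paper's inequality $q\lambda_1+1>(\lambda_1-1)(q+1)$. Your explicit remark that $\deg z\le\deg A$, and that fibre multiplicities only help, makes precise a point the paper leaves implicit.
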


\begin{proof}
The Castle function \(x\in\calL(\lambda_1Q)\) already gives
\(\gamma(\curve)\le\lambda_1\).  For the lower bound, the fibres over the
\(q+1\) rational points of \(\PP^1\) under a degree-\(e\) morphism
\(\curve\to\PP^1\) contain at most \(e(q+1)\) rational points in total,
whereas
\(|\curve(\FF_q)|=q\lambda_1+1\); this forces
\(e\ge(q\lambda_1+1)/(q+1)\), and since \(e\) is an integer the ceiling
follows.  Finally, \(q\lambda_1+1>(\lambda_1-1)(q+1)\) exactly when
\(\lambda_1\le q+1\), which forces \(\gamma(\curve)>\lambda_1-1\) and
hence \(\gamma(\curve)=\lambda_1\).
\end{proof}

\begin{corollary}
\label{cor:castle-global-staircase}
Let \((\curve,Q,X)\) be Castle with Castle nongap \(\lambda_1\), let
\(2g\le\lambda_j<n\), and put \(a=a_j\).  Then
\(w_0(a):=(-a)\bmod\lambda_1\) satisfies \(a+w_0(a)\in\Phi_X\) via
the universal fiber packing
(Proposition~\ref{prop:universal-fiber-packing}), and
Theorem~\ref{thm:global-postcanonical-staircase} applies with this
\(w_0\) whenever \(w_0(a)\le\min\{g-1,\gamma(\curve)-2\}\).  In
particular every row with \(w_0(a_j)\le\min\{g-1,\gamma(\curve)-2\}\)
carries an exact staircase of length
\(\min\{g-1,\gamma(\curve)-2\}-w_0(a_j)+1\).
\end{corollary}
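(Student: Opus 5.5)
The plan is to reduce everything to Theorem~\ref{thm:global-postcanonical-staircase}, so the one real task is to exhibit some \(b\in\Phi_X\) with \(a\le b\le n-1\) and \(b-a=w_0(a)\). Put \(b:=a+w_0(a)\), where \(w_0(a)=(-a)\bmod\lambda_1\in[0,\lambda_1-1]\). By construction \(b\) is the least multiple of \(\lambda_1\) that is \(\ge a\); write \(b=t\lambda_1\).

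\emph{Step 1: \(b\) is a full level.} Since \(2g\le\lambda_j\) and \(g\ge1\), we have \(a=\lambda_j-2g+2\ge2\), so \(t\ge1\). For the upper bound, note that \(\lambda_j<n=q\lambda_1\) gives \(a\le n-2g+1\le n-1\). Because \(b\) is the least multiple of \(\lambda_1\) above \(a\), this yields \(b\le q\lambda_1=n\), so \(t\le q\).

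\begin{itemize}
\item If \(t\le q-1\), Proposition~\ref{prop:universal-fiber-packing} gives \(t\lambda_1\in\Phi_X\) directly, and moreover \(b\le n-\lambda_1\le n-1\).
\item If \(t=q\), then \(b=n\). This violates the hypothesis \(b\le n-1\) of Theorem~\ref{thm:global-postcanonical-staircase}. However, in that case \(w_0=n-a\). Since \(a+w\le n-1\) is required, the admissible range of \(w\) in Theorem~\ref{thm:global-postcanonical-staircase} is then empty, so the statement is vacuous.
\end{itemize}

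I expect this boundary case to be the main bookkeeping obstacle. I would handle it by stating explicitly that the ``whenever'' clause, combined with \(a+w\le n-1\), only produces nonvacuous conclusions when \(b\le n-1\).

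\emph{Step 2: apply the staircase.} With \(b\in\Phi_X\) and \(a\le b\le n-1\), Theorem~\ref{thm:global-postcanonical-staircase} gives
\[
M_{j,g-w}=a+w \quad\text{for}\quad w_0(a)\le w\le\min\{g-1,\gamma(\curve)-2\},\quad a+w\le n-1.
\]
The hypothesis \(w_0(a)\le\min\{g-1,\gamma(\curve)-2\}\) makes this range nonempty.

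\emph{Step 3: length of the staircase.} The count of the range is \(\min\{g-1,\gamma(\curve)-2\}-w_0(a_j)+1\), provided the constraint \(a+w\le n-1\) is not binding. To check this, note that \(w\le g-1\) and \(a\le n-2g+1\) give \(a+w\le n-g\le n-1\). The constraint is therefore automatic, and this count is exactly the asserted length.
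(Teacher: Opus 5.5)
Your proposal is correct and follows the paper's route: take $b$ as the least multiple of $\lambda_1$ that is $\ge a$, get $b\in\Phi_X$ from the universal fiber packing (Proposition~\ref{prop:universal-fiber-packing}), and apply Theorem~\ref{thm:global-postcanonical-staircase}; your check that $a+w\le n-g\le n-1$ holds automatically is what makes the stated staircase length exact, a point the paper's ``applies verbatim'' leaves implicit. One correction: the boundary case $b=n$ is not ``vacuous'' but is excluded by the hypothesis, since there $w_0=n-a\ge 2g-1>g-1$ (equivalently, your Step~3 bound at $w=w_0$ forces $b\le n-g$), and $n\in\Phi_X$ in any case because $D_X\sim nQ$, so $a+w_0(a)\in\Phi_X$ holds unconditionally as the statement asserts.
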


\begin{proof}
By Proposition~\ref{prop:universal-fiber-packing}, every multiple
\(t\lambda_1\) (\(1\le t\le q-1\)) lies in \(\Phi_X\).  Taking
\(b\) to be the least such multiple with \(b\ge a\) gives
\(w_0=b-a=(-a)\bmod\lambda_1\), and
Theorem~\ref{thm:global-postcanonical-staircase} applies verbatim.
\end{proof}

\subsection{The postcanonical density condition for Castle curves}
\label{subsec:coverage-castle}

For a Castle triple, \(n=q\lambda_1\), so the Theorem~\ref{thm:postcanonical-profile} reads entirely in terms of \(q\), \(\lambda_1\),
and \(g\).

\begin{corollary}
\label{cor:castle-coverage}
Let \((\curve,Q,X)\) be Castle, with \(q\lambda_1>2g\).  Then
\[
\rho_{\mathrm{pc}}
\ge
\frac{(q\lambda_1-2g)(q\lambda_1-2g+1)}{(q\lambda_1-g)(q\lambda_1-g+1)},
\]
and this lower bound exceeds \(1/2\) whenever
\( 
q\lambda_1>5g.
\)
\end{corollary}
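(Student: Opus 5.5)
This follows by specializing Theorem~\ref{thm:postcanonical-profile} to the Castle setting. By Proposition~\ref{prop:castle-is-aj-gorenstein}, a Castle triple \((\curve,Q,X)\) is AJ--Gorenstein, with \(X=\curve(\FF_q)\setminus\{Q\}\) and \(n=q\lambda_1\). The hypothesis \(q\lambda_1>2g\) is exactly the hypothesis \(n>2g\) of the cell count~\eqref{eq:postcanonical-cell-count}. So \eqref{eq:postcanonical-coverage} applies verbatim, and substituting \(n=q\lambda_1\) gives the displayed lower bound for \(\rho_{\mathrm{pc}}\).

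For the second assertion, I will use the criterion~\eqref{eq:half-coverage-criterion}. The bound exceeds \(1/2\) iff \(n^2-6ng+7g^2+n-3g>0\). On the range \(n>2g\), this holds for \(n\) beyond the larger root \(\bigl(6g-1+\sqrt{8g^2+1}\bigr)/2\).

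To show \(n>5g\) suffices, I will check that this root is below \(5g\), i.e.\ that \(\sqrt{8g^2+1}<4g+1\). Squaring, this is \(8g^2+1<16g^2+8g+1\), which is true for \(g\ge1\).

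Alternatively, I can verify directly that the quadratic is positive at \(n=5g\), where it equals \(2g^2+2g>0\). It is also increasing for \(n>3g-\tfrac12\). Either route gives that \(q\lambda_1>5g\) implies \(\rho_{\mathrm{pc}}>1/2\).

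The only point needing care is that the Castle choice of \(X\) meets the standing hypotheses of Section~\ref{sec:postcanonical-region}, namely \(g\ge1\) and \(Q\notin X\). The rest is substitution, so I expect no real obstacle.
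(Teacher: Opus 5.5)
Your proposal is correct and is essentially the paper's proof, which simply substitutes \(n=q\lambda_1\) into \eqref{eq:postcanonical-coverage} and \eqref{eq:half-coverage-criterion} of Theorem~\ref{thm:postcanonical-profile}. Your explicit check that \((6g-1+\sqrt{8g^2+1})/2<5g\) is correct. It only spells out something already stated in that theorem's proof, namely that this root lies below \((3+\sqrt2)g<5g\).
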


\begin{proof}
Substitute \(n=q\lambda_1\) into
\eqref{eq:postcanonical-coverage} and~\eqref{eq:half-coverage-criterion}.
\end{proof}

The asymptotic form of Theorem~\ref{thm:postcanonical-profile} is 
transparent for Castle curves: it holds automatically along any family
for which the genus grows more slowly than \(q\lambda_1\).

\begin{corollary}
\label{cor:asymptotic-coverage}
Let \((\curve_i,Q_i,X_i)\) be a sequence of Castle triples with
\(n_i=q_i\lambda_1^{(i)}\to\infty\).  \[
\text{If } g_i/n_i\to0, \text{ then}\qquad
\rho_{\mathrm{pc}}(\curve_i)\longrightarrow1.
\]
\end{corollary}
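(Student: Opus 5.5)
The plan is to apply Corollary~\ref{cor:castle-coverage} to each triple, which gives
\[
\rho_{\mathrm{pc}}(\curve_i)\ \ge\ \frac{(n_i-2g_i)(n_i-2g_i+1)}{(n_i-g_i)(n_i-g_i+1)}
\]
whenever \(n_i>2g_i\), and then to show that this lower bound tends to~\(1\). Since \(\rho_{\mathrm{pc}}\le1\) trivially (it is a proportion of cells), a squeeze argument will finish the proof.

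\textbf{Step 1: the bound applies eventually.} Since \(g_i/n_i\to0\), there is an \(i_0\) such that \(g_i/n_i<1/2\), that is, \(n_i>2g_i\), for all \(i\ge i_0\). So the hypothesis of Theorem~\ref{thm:postcanonical-profile} (equivalently of Corollary~\ref{cor:castle-coverage}) holds for all sufficiently large~\(i\), and the displayed bound is available there.

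\textbf{Step 2: the bound tends to one.} Put \(\varepsilon_i:=g_i/n_i\). Dividing numerator and denominator by \(n_i^2\) rewrites the bound as
\[
\frac{(1-2\varepsilon_i)(1-2\varepsilon_i+1/n_i)}{(1-\varepsilon_i)(1-\varepsilon_i+1/n_i)}.
\]
We have \(\varepsilon_i\to0\) by hypothesis and \(1/n_i\to0\) because \(n_i\to\infty\). Hence each factor tends to~\(1\), and every denominator factor stays bounded away from zero (it exceeds \(1/2\) for large~\(i\)). The quotient therefore converges to~\(1\), and squeezing gives \(\rho_{\mathrm{pc}}(\curve_i)\to1\).

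The only point needing care is ensuring that the ratio \(\rho_{\mathrm{pc}}\) is well defined, i.e.\ that the injective zero diagram has positive cell count \(\binom{n_i-g_i+1}{2}\). This holds because \(n_i>g_i\) eventually. I expect no real obstacle beyond this bookkeeping: the entire content sits in Theorem~\ref{thm:postcanonical-profile}, and the limit itself is elementary.
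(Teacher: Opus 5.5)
Your proposal is correct and follows the same route as the paper: divide numerator and denominator of the bound \eqref{eq:postcanonical-coverage} by \(n_i^2\) and let \(g_i/n_i\to0\), \(1/n_i\to0\). Your two additions make explicit what the paper leaves implicit: that \(n_i>2g_i\) holds eventually, so the bound applies, and that \(\rho_{\mathrm{pc}}\le1\), which turns the lower bound into a limit.
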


\begin{proof}
Divide numerator and denominator of
\eqref{eq:postcanonical-coverage} by \(n_i^2\): both
\(\bigl(1-2g_i/n_i\bigr)\bigl(1-2g_i/n_i+1/n_i\bigr)\) and
\(\bigl(1-g_i/n_i\bigr)\bigl(1-g_i/n_i+1/n_i\bigr)\) tend to \(1\) as
\(g_i/n_i\to0\) and \(n_i\to\infty\).
\end{proof}

\section{Conclusion and Open Problems}
\label{sec:conclusions}

\subsection{Conclusions}

We introduced the zero diagram as a unified representation of
generalized coweights along an AJ--Gorenstein one-point code
flag.  Its rows encode maximal common zero sets, and the upper--lower
block correspondence shows that the same coweight data determine both
the generalized weights of short codes and the missing weights of
their reflected long-code partners in the GHW diagram of the complete
flag.  Divisor reciprocity, gonality,
full-support packings, multiplication, and rank propagation then
produce exact regions of the zero diagram from finite geometric data.

The main quantitative consequence is the GHW-diagram coverage
theorem (Corollary~\ref{cor:full-diagram-coverage}): for every
AJ--Gorenstein triple with \(n>2g\), more than half of all
generalized-weight positions in the complete flag are determined
exactly.  For the smallest Suzuki curve, combining these mechanisms determines
the large majority of the positions in the GHW diagram; see
Table~\ref{tab:suzuki-q8-theorem} for the theorem-by-theorem
breakdown.

\subsection{Open problems}
The preceding results leave three natural questions open: how tight
the coverage bounds really are, what governs the cells they miss, and
whether the GHW diagram is determined by the semigroup alone.

\begin{enumerate}[label=\textup{(\arabic*)},leftmargin=2.4em]
\item \emph{How tight is the coverage bound?}
Is \(\operatorname{Cov}_{\mathrm{full}}\ge\frac{n(n-1)+4g}{n(n+2g-1)}\)
close to optimal, or can it be improved in general, or specifically
for Castle curves?  Along a sequence of AJ--Gorenstein triples with
\(n\to\infty\), when does \(\operatorname{Cov}_{\mathrm{full}}\)
actually tend to \(1\), and when it does not, where do the uncovered
cells concentrate?

\item \emph{What determines the uncovered cells?}
The mechanisms of this paper (postcanonical exactness, duality,
Castle structure) leave some entries \(M_r(C_X(hQ))\) undetermined.
Is there a clean criterion, in terms of the gonality sequence, special
divisors, or the full-support set \(\Phi_X\), that pins these down
exactly?

\item \emph{Does the semigroup alone determine the diagram?}
If two AJ--Gorenstein triples share the same \(n\), \(g\), and
Weierstrass semigroup \(H(Q)\), must their GHW diagrams coincide, or
can the underlying geometry make them differ?  The unresolved cells of
the Suzuki example are a natural test case for this question.
\end{enumerate}

\appendix

\section{GHW-Diagram Coverage for a Suzuki Curve}
\label{app:numerical-examples}

\subsection{The GHW diagram of the smallest Suzuki curve}

\small
For $(q_0,q,g,n,N)=(2,8,14,64,90)$, the GHW diagram consists of the
$91$ codes $C_X(hQ)$, $0\le h\le90$, and contains $2912$
generalized-weight positions; the first complete code is
$C_X(91Q)=\FF_8^{64}$.

\begin{center}
\input{gHwDiagram}
\captionof{figure}{Coverage map of the smallest Suzuki GHW diagram.
Dashed lines mark $h=2g$ and $h=n$; colors in
Table~\ref{tab:suzuki-q8-theorem}, whose legend cites, for each
class, the theorem or corollary that supplies the corresponding
cells.}
\label{fig:suzuki-q8-full-coverage}
\end{center}

\subsection{Theorem-by-theorem coverage}

When several results determine the same cell, the cell is assigned to
the first applicable class in the descending order in Table~\ref{tab:suzuki-q8-theorem}. Thus, the contributions reported in the following table are disjoint.

\begin{center}
\small
\captionof{table}{Theorem-by-theorem breakdown for the smallest Suzuki GHW
diagram.  New column counts cells not already assigned to an earlier class.}
\label{tab:suzuki-q8-theorem}
\begin{tabular}{@{}lrrr@{}}
\toprule
Result or construction & New & Cumulative & Coverage \\
\midrule
Corollary~\ref{cor:row-boundary}, Theorem~\ref{thm:postcanonical-profile}, Corollary~\ref{cor:degree-forced-upper-tail}
   & 2{,}073 & 2{,}073 & 71.19\% \\
Corollary~\ref{cor:castle-global-staircase}
   & 122 & 2{,}195 & 75.38\% \\
Theorem~\ref{thm:cartesian-tower}, Proposition~\ref{prop:universal-fiber-packing}, Theorem~\ref{thm:product-gonality-saturation}
   & 16 & 2{,}211 & 75.93\% \\
Theorem~\ref{thm:upper-prefix-reflection}, Theorem~\ref{thm:full-prefix-reflection}
   & 69 & 2{,}280 & 78.30\% \\
\bottomrule
\end{tabular}
\end{center}

The contribution of the first result row decomposes as
\[
\begin{aligned}
2{,}073
={}&
\underbrace{28}_{0\le h<28}
+
\underbrace{\sum_{r=1}^{36}r}_{28\le h<64}
+
\underbrace{\sum_{r=37}^{63}r}_{64\le h\le90}
&+
\underbrace{21}_{\text{adaptive tail}}
+
\underbrace{7}_{\text{boundary gaps}}
+
\underbrace{1}_{\text{upper tail}}.
\end{aligned}
\]
The last three terms correspond, respectively, to the adaptive
postcanonical tail, the canonical-boundary gap cases, and one additional
cell forced by the degree-determined upper tail.


\begin{thebibliography}{99}

\bibitem{BallicoMarcolla2016}
E.~Ballico and C.~Marcolla,
\textit{Higher Hamming weights for locally recoverable codes on
algebraic curves},
Finite Fields Appl.\ \textbf{40} (2016), 61--72.

\bibitem{BarberoMunuera2000}
A.~I.~Barbero and C.~Munuera,
\textit{The weight hierarchy of {H}ermitian codes},
SIAM J.\ Discrete Math.\ \textbf{13} (2000), no.~1, 79--104.

\bibitem{BeelenDatta2018}
P.~Beelen and M.~Datta,
\textit{Generalized {H}amming weights of affine {C}artesian codes},
Finite Fields Appl.\ \textbf{51} (2018), 130--145.

\bibitem{BrasAmorosDuursmaHong2020}
M.~Bras-Amor\'os, I.~Duursma, and E.~Hong,
\textit{Isometry-dual flags of AG codes},
Des.\ Codes Cryptogr.\ \textbf{88} (2020), no.~8, 1617--1638.

\bibitem{BrasAmorosLeeVicoOton2014}
M.~Bras-Amor\'os, K.~Lee, and A.~Vico-Oton,
\textit{New lower bounds on the generalized {H}amming weights of {AG} codes},
IEEE Trans.\ Inform.\ Theory \textbf{60} (2014), no.~10, 5930--5937.
doi:10.1109/TIT.2014.2343993.

\bibitem{CampsMoreno2025}
E.~Camps-Moreno, H.~H.~L\'opez, G.~L.~Matthews, and R.~San-Jos\'e,
\textit{The weight hierarchy of decreasing norm-trace codes},
Des.\ Codes Cryptogr.\ \textbf{93} (2025), no.~7, 2873--2894.

\bibitem{DuursmaKirov2009}
I.~Duursma and R.~Kirov,
\textit{An extension of the order bound for {AG} codes},
in: Applied Algebra, Algebraic Algorithms and Error-Correcting Codes
(AAECC 2009), Lecture Notes in Comput.\ Sci.\ \textbf{5527},
Springer, Berlin, 2009, pp.~11--22.

\bibitem{DuursmaPark2008}
I.~Duursma and S.~Park,
\textit{Coset bounds for algebraic geometric codes},
Finite Fields Appl.\ \textbf{16} (2010), no.~1, 36--55.

\bibitem{FarranEtAl2018}
J.~I.~Farr\'an, P.~A.~Garc\'ia-S\'anchez, B.~A.~Heredia, and M.~J.~Leamer,
\textit{The second {F}eng--{R}ao number for codes coming from
telescopic semigroups},
Des.\ Codes Cryptogr.\ \textbf{86} (2018), no.~9, 1849--1864.

\bibitem{GeilMunueraRuanoTorres2011}
O.~Geil, C.~Munuera, D.~Ruano, and F.~Torres,
\textit{On the order bounds for one-point {AG} codes},
Adv.\ Math.\ Commun.\ \textbf{5} (2011), no.~3, 489--504.
doi:10.3934/amc.2011.5.489.

\bibitem{HansenStichtenoth1990}
J.~P.~Hansen and H.~Stichtenoth,
\textit{Group codes on certain algebraic curves with many rational
points},
Appl.\ Algebra Engrg.\ Comm.\ Comput.\ \textbf{1} (1990), no.~1,
67--77.

\bibitem{Kun70}
E.~Kunz,
\textit{The value-semigroup of a one-dimensional Gorenstein ring},
Proc. Amer. Math. Soc. \textbf{25} (1970),  no.~4, 748--751.

\bibitem{Lee2015}
K.~Lee,
\textit{Bounds for generalized {H}amming weights of general {AG} codes},
Finite Fields Appl.\ \textbf{34} (2015), 265--279.
doi:10.1016/j.ffa.2015.02.006.

\bibitem{Matthews2004}
G.~L.~Matthews,
\textit{Codes from the {S}uzuki function field},
IEEE Trans.\ Inform.\ Theory \textbf{50} (2004), no.~12, 3298--3302.

\bibitem{MontanucciTimpanellaZini2018}
M.~Montanucci, M.~Timpanella, and G.~Zini,
\textit{AG codes and AG quantum codes from cyclic extensions of the Suzuki and the Ree curves},
J.\ Geom.\ \textbf{109} (2018), no.~1, Paper~23, 18 pp.

\bibitem{Munuera1994}
C.~Munuera,
\textit{On the generalized {H}amming weights of geometric {G}oppa codes},
IEEE Trans.\ Inform.\ Theory \textbf{40} (1994), no.~6, 2092--2099.

\bibitem{MunueraTorres2009}
C.~Munuera, A.~Sep\'ulveda, and F.~Torres,
\textit{Castle curves and codes},
Adv.\ Math.\ Commun.\ \textbf{3} (2009), no.~4, 399--408.

\bibitem{OlayaLeonGranados2015}
W.~Olaya-Le\'on and C.~Granados-Pinz\'on,
\textit{The second generalized {H}amming weight of certain {C}astle codes},
Des.\ Codes Cryptogr.\ \textbf{76} (2015), no.~1, 81--87.

\bibitem{Randriambololona2015}
H.~Randriambololona,
\textit{On products and powers of linear codes under componentwise multiplication},
in: S.~Ballet, M.~Perret, and A.~Zaytsev (eds.),
\textit{Algorithmic Arithmetic, Geometry, and Coding Theory},
Contemporary Mathematics \textbf{637},
American Mathematical Society, Providence, RI, 2015, pp.~3--78.
doi:10.1090/conm/637/12749.

\bibitem{Stichtenoth2009}
H.~Stichtenoth,
\textit{Algebraic Function Fields and Codes},
2nd ed., Graduate Texts in Mathematics \textbf{254}, Springer, Berlin, 2009.

\bibitem{Wei1991}
V.~K.~Wei,
\textit{Generalized {H}amming weights for linear codes},
IEEE Trans.\ Inform.\ Theory \textbf{37} (1991), no.~5, 1412--1418.

\bibitem{YangKumarStichtenoth1994}
K.~Yang, P.~V.~Kumar, and H.~Stichtenoth,
\textit{On the weight hierarchy of geometric {G}oppa codes},
IEEE Trans.\ Inform.\ Theory \textbf{40} (1994), no.~3, 913--920.

\end{thebibliography}
\end{document}